\newtheorem{theorem}{Theorem}
\newtheorem{lemma}[theorem]{Lemma}
\newtheorem{corollary}[theorem]{Corollary}
\newtheorem{definition}[theorem]{Definition}
\newtheorem{remark}[theorem]{Remark}
\renewcommand{\R}{\mathbb{R}}
\renewcommand{\E}{\mathbb{E}}
\newcommand{\SMsym}{\mathbb{S}}
\newcommand{\SM}[1]{\SMsym^{#1 \times #1}}
\newcommand{\PSD}[1]{\SMsym_{+}^{#1 \times #1}}
\newcommand{\PD}[1]{\SMsym_{++}^{#1 \times #1}}
\newcommand{\boldVar}[1]{\mathbf{#1}}
\newcommand{\mvar}[1]{\boldVar{#1}}
\newcommand{\A}{\mvar{A}}\def\a{\mvar{a}}
\newcommand{\ma}{\A}
\newcommand{\B}{\mvar{B}}
\def\C{\mvar{C}}
\def\U{\mvar{U}}
\def\W{\mvar{W}}
\newcommand{\x}{\mvar{x}}
\newcommand{\z}{\mvar{z}}
\def\b{\mvar{b}}
\def\v{\mvar{v}}
\newcommand{\y}{\mvar{y}}
\newcommand{\mI}{\mvar{I}}\newcommand{\I}{\mvar{I}}
\newcommand{\mm}{\mvar{M}}\newcommand{\M}{\mvar{M}}
\newcommand{\mn}{\mvar{N}}\newcommand{\N}{\mvar{N}}
\def\S{\mvar{S}}
\newcommand{\mPi}{\mvar{\Pi}}
\newcommand{\defeq}{:=}
\newcommand{\otilde}{\tilde{O}}
\def\tr{\mathrm{tr}}
\newcommand{\lambAvg}{\overline{\lambda}}
\newcommand{\sigAvg}{\overline{\Sigma}}
\newcommand{\nnz}{\mathrm{nnz}}
\newcommand{\norm}[1]{\|#1\|}
\newcommand{\runtime}{\mathcal{T}}
\newcommand{\code}[1]{\mathtt{#1}}
\newcommand{\precon}{\code{PAGD}}
\newcommand{\rprecon}{\code{RPAGD}}
\newcommand{\genSolve}{\code{solve}}
\newcommand{\baseSolve}{\code{bsolve}}
\newcommand{\preSolve}{\code{psolve}}
\newcommand{\chain}{\mathcal{C}}
\newcommand{\primalchain}{\mathcal{P}}
\newcommand{\primaldualchain}{\mathcal{D}}
\newcommand{\outsim}{\mathrm{out}}
\newcommand{\xout}{\x_{\outsim}}
\newcommand{\xopt}{\x^{*}}
\newcommand{\mycomment}[1]{\tcp{{\color{gray} #1}}}
\newcommand{\argmin}{\mathrm{argmin}}
\renewcommand{\ln}{\log}
\definecolor{darkgreen}{cmyk}{1,0,1,0}
\newcommand{\sidford}[1]{\textcolor{darkgreen}{[Aaron: #1]}}
\title{Approaching Optimality for Solving\\ 
Dense Linear Systems with Low-Rank Structure}
\author{Micha{\l} Derezi\'nski\\
\texttt{derezin@umich.edu}\\
University of Michigan
\and
Aaron Sidford\\
\texttt{sidford@stanford.edu}\\
Stanford University}
\date{}
\begin{document}

\maketitle
\thispagestyle{empty}

\begin{abstract}
We provide new high-accuracy randomized algorithms for solving linear systems and regression problems that are well-conditioned except for $k$ large singular values. For solving such $d \times d$ positive definite system our algorithms succeed whp.\ and run in time $\otilde(d^2 + k^\omega)$. For solving such regression problems in a matrix $\ma \in \R^{n \times d}$ our methods succeed whp.\ and run in time $\otilde(\nnz(\ma) + d^2 + k^\omega)$ where $\omega$ is the matrix multiplication exponent and $\nnz(\A)$ is the number of non-zeros in $\A$. Our methods nearly-match a natural complexity limit under dense inputs for these problems and improve
upon a trade-off in prior approaches that obtain running times of either $\tilde O(d^{2.065}+k^\omega)$ or $\tilde O(d^2 + dk^{\omega-1})$ for $d\times d$ systems. Moreover, we show how to obtain these running times even under the weaker assumption that all but $k$ of the singular values have a suitably bounded generalized mean. Consequently, we give the first nearly-linear time algorithm for computing a multiplicative approximation to the nuclear norm of an arbitrary dense matrix. 
 Our algorithms are built on three general recursive preconditioning frameworks, where matrix sketching and low-rank update formulas are carefully tailored to the problems' structure.

\tableofcontents
\end{abstract}

\newpage

\setcounter{page}{1}
\addtocontents{toc}{\protect\setcounter{tocdepth}{1}}

\section{Introduction}
\label{s:intro}

Linear system solving and (least squares) regression are some of the most fundamental problems in numerical linear algebra, algorithm design, and optimization theory. These problems appear in numerous applications and are key subroutines for solving many algorithmic problems, see e.g., \cite{woodruff2014sketching,frostig2016principal,garber2016faster,musco2018spectrum}. In this paper we consider the following standard settings for these problems:
\begin{itemize}
    \item \emph{(Positive Definite) Linear System}: Given real-valued symmetric positive definite (PD) $d\times d$ matrix $\mm$, denoted $\mm \in \PD{d}$ (see \Cref{s:prelim} for notation), vector $\b \in \R^d$, and target relative accuracy $\epsilon \in (0,1)$, output an \emph{$\epsilon$-(approximate )solution to $\mm\x = \b$}, i.e., 
    \[
    \xout \in \R^d\quad
    \text{such that}\quad
    \norm{\xout - \xopt}_\mm \leq \epsilon \norm{\xopt}_\mm\,,
    \quad\text{where}\quad
    \mm \xopt = \b
    \,.
    \]
    \item \emph{(Least Squares) Regression}: Given a matrix $\ma \in \R^{n \times d}$ with full column rank, vector $\b \in \R^d$, and target relative accuracy $\epsilon \in (0,1)$, output an \emph{$\epsilon$-(approximate )solution to $\min_{\x \in \R^{d}} \norm{\ma\x - \b}^2$}, i.e.,
    \[
    \xout \in \R^d\quad
    \text{such that}\quad
    \norm{\A\xout - \A\xopt} \leq \epsilon \norm{\A\xopt}\,,\quad
    \text{where}\quad
    \xopt = \argmin_{\x \in \R^{d}} \norm{\ma\x - \b}^2.
    \]
\end{itemize}
Regression encompasses linear system solving (e.g., by setting $\ma = \mm$) up to slight differences in measuring error. We distinguish the linear system case as it arises commonly and often admits stronger guarantees.

There has been extensive research on solving these problems. Advances include continued improvements in the matrix multiplication exponent $\omega< 2.372$ \cite{alman2025more}, running times that depend on the sparsity of the input matrix or the distribution of its eigenvalues and singular values, and more. (See \Cref{s:intro:related_work} for a longer discussion of related work.) This paper is particularly motivated by a recent line of work on solving these problems when they are what we call \emph{$k$-well-conditioned} (Problem 1.1 in \cite{derezinski2024faster}). In this case, all of the singular values of the input matrix (either $\M$ or $\A$) are the same up to a constant factor, except for the top $k$, which can be distributed arbitrarily. 
Motivated by applications in machine learning, optimization and statistics, the $k$-well-conditioned setting has been studied (under different guises) for decades, dating back to classical results on the convergence of Krylov subspace methods \cite{axelsson1986rate}.

Many algorithmic approaches to the $k$-well-conditioned problem have focused on utilizing fast matrix multiplication (e.g., see \cite{musco2018spectrum}). This approach can be used to solve a $k$-well-conditioned $d\times d$ linear system in $\tilde O(d^2k^{\omega-2})$ time. (Throughout the paper $\otilde(\cdot)$ hides polylogarithmic factors in $n$ and $d$; in just \Cref{s:intro,s:overview}, to simplify runtime statements, we assume that conditions numbers of $\ma$ and $\mm$ are polynomially bounded in $n$ and $d$ and that accuracies, $\epsilon$, are inverse polynomially bounded in $n$ and $d$ for a known polynomial.) However, recently, \cite{derezinski2023solving} gave a randomized iterative algorithm that can solve such systems in $\tilde O(d^2+dk^{\omega-1})$ time via stochastic optimization. A different approach was proposed in \cite{derezinski2024faster}, which gave an $\tilde O(d^{2.065} + k^\omega)$ time algorithm that uses a randomized preconditioner based on matrix sketching. In each case, the methods can be naturally extended from linear systems to regression at an additional $\tilde O(\nnz(\A))$ cost, where $\nnz(\A)\leq nd$ denotes the number of non-zero entries in $\A$. (See Table \ref{tab:comparison} for an overview.)

A key open problem presented in \cite{derezinski2024faster} is whether it is possible to further improve upon these running times and obtain a $\tilde O(d^2+k^\omega)$ time solver for a $k$-well-conditioned $d\times d$ linear system. This question is motivated by their lower bound (Theorem 7.1 in \cite{derezinski2024faster}), which states that the time complexity of solving a $k$-well-conditioned $d\times d$ linear system to within $1/\poly(n)$ accuracy is at least $\Omega(d^2+k^\omega)$, as long as the complexity of solving an arbitrary $k\times k$ linear system is at least $\Omega(k^\omega)$. This open problem can be naturally extended to regression, where we would hope to obtain a $\tilde O(nd+k^\omega)$ running time for dense inputs.

In this paper, we provide randomized algorithms which resolve these open problems, obtaining the desired running times. In fact, we show how to efficiently solve an even broader class of linear systems and regression problems. Rather than requiring the system to be $k$-well-conditioned, we obtain nearly-linear time algorithms when the matrix is suitably $k$-well-conditioned \emph{on average}. Formally, we introduce a notion of the $(k,p)$-averaged condition number of a matrix which corresponds to the generalized mean of the ratio of the non-top-$k$ singular values to the smallest singular value. We obtain linear system solving and regression algorithms with runtimes that depend on this average conditioning and run in nearly linear time when the matrix is $(k,p)$-well-conditioned, i.e., the $(k,p)$-averaged conditioned number is $O(1)$, for suitable $p$. 

The starting point in obtaining our results is a broad optimization technique known as \emph{recursive preconditioning}. This technique involves carefully applying in a recursive fashion a preconditioned iterative method (we use preconditioned accelerated gradient descent (AGD) \cite{js_talg}, though alternative similar techniques like preconditioned Chebyshev \cite{golub1961chebyshev} may also work) to reduce linear system solving and regression to solving simpler versions of related problems. Variants of recursive preconditioning have been used extensively in prior work, particularly on solving graph-structured linear systems \cite{SpielmanT14,KoutisMP10,KoutisMP11,CohenKMPPRX14,CohenKPPRSV17,js_talg}. 
We start with a relatively straightforward variant of this approach as a warm-up (\Cref{s:ls}), and then develop two more sophisticated recursive frameworks which allow us to further exploit average conditioning in regression (\Cref{s:primal-dual}) and positive definiteness in the linear system task (\Cref{s:psd}). We combine these recursive frameworks with matrix sketching and oblivious subspace embeddings \cite{sarlos-sketching,cw-sparse,nn-sparse,cohen2016optimal}, reduction techniques based on the Woodbury matrix identity, and a careful running time analysis. (See \Cref{s:overview} for further explanation.)

To illustrate the utility of these results we show how our algorithms can be used as subroutines to obtain improved running times for spectrum approximation problems involving estimating properties of the singular values of a matrix. 
Perhaps most notably, we show how to apply our results to obtain the first nearly-linear time algorithm for obtaining a multiplicative approximation to the nuclear norm of a dense matrix.

\paragraph{Organization.} In the remainder of this introduction we present our main results (\Cref{s:intro:main_results}) and discuss related work (\Cref{s:intro:related_work}). An overview of our approach is then presented in \Cref{s:overview} and preliminaries are presented in \Cref{s:prelim}. Regression algorithms are covered in \Cref{s:ls,s:primal-dual}, algorithms for solving PSD systems are covered in \Cref{s:psd}, and spectrum approximation algorithms are covered in \Cref{s:spectrum}. 

\subsection{Main Results}
\label{s:intro:main_results}

Here we present the main results of the paper. Our first result, \Cref{t:ls}, implies that $k$-well conditioned regression problems in $\ma \in \R^{n \times d}$ can indeed be solved in $\otilde(nd + k^\omega)$ time.
\Cref{t:ls} attains this running time even under the relaxed assumption that $\ma$ is only $k$-well conditioned on average in the sense that the values of $\sigma_i(\ma)^2/\sigma_d(\ma)^2$ for $i > k$ are $O(1)$ on average. We prove this theorem in \Cref{s:ls} using a straightforward strategy of recursively preconditioning using matrices $\ma_t \in \R^{n_t \times d}$ with progressively fewer rows ($n_t$ roughly decreases as $t$ increases). The resulting algorithm is one of the simplest in this paper and we present it as a warm-up to more advanced results.

\begin{theorem}[Regression Warm-up]\label{t:ls}
    There is an algorithm that given $\A\in\R^{n\times d}$, $\b\in\R^d$, $\epsilon\in(0,1)$, and $k \in [d]$ such that $\bar\kappa_{k,2}(\A)^2 :=\frac1{d-k}\sum_{i>k}\frac{\sigma_i(\ma)^2}{\sigma_d(\ma)^2} = O(1)$, outputs $\xout\in\R^d$ such that $\|\A\xout-\A\xopt\|\leq \epsilon\|\A\xopt\|$ whp., where $\x^* = \argmin_{\x\in\R^d}\|\A\x-\b\|^2$, in $\tilde O((\nnz(\A) + d^2)\log(1/\epsilon)) + O(k^\omega)$ time.
\end{theorem}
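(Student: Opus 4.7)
My plan is to establish \Cref{t:ls} via a recursive preconditioning scheme built on a chain of oblivious subspace embeddings (OSEs) of geometrically decreasing size. Concretely, I would construct a sequence $\A = \A_0, \A_1, \ldots, \A_T$, where $\A_{t+1} = \mS_t \A_t \in \R^{n_{t+1} \times d}$ and $\mS_t$ is an OSE with $(1+c)$-distortion for some absolute constant $c<1$. Choosing $n_1 = \otilde(d)$, subsequent $n_t$ decreasing geometrically, and $n_T = \otilde(k)$, and using a sparse OSE family such as OSNAP or CountSketch composed with SRHT~\cite{nn-sparse,cohen2016optimal}, applying $\mS_t$ costs $\otilde(\nnz(\A_t) + n_{t+1} d)$. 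Hence the entire chain is built in preprocessing time $\otilde(\nnz(\A) + d^2)$, since $\nnz(\A_t) = n_t d$ once $t\ge 1$ and $\sum_t n_t d = \otilde(d^2)$.

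Given the chain, the solver recursively runs preconditioned accelerated gradient descent (\precon): at level $t$ we approximately solve the regression in $\A_t$ by \precon{} with preconditioner $\A_{t+1}^\top \A_{t+1}$, whose inverse–vector product is computed by a recursive sub-solve at level $t+1$. Because each $\mS_t$ yields $O(1)$ spectral distortion, the effective condition number between adjacent levels is $O(1)$, so the outermost \precon{} runs for $\otilde(\log(1/\epsilon))$ iterations, and each inner recursive call—serving only to apply a preconditioner—needs only $O(1)$ iterations at constant target accuracy, which suffices by the standard robustness of \precon{} to approximate preconditioner application (cf.~\cite{js_talg}).

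The main technical obstacle is the base case at level $T$: since $\A_T$ has only $\otilde(k)$ rows, $\A_T^\top \A_T$ is rank-deficient and cannot directly serve as a PD preconditioner for $\A_{T-1}^\top \A_{T-1}$. I would resolve this by using a Nystr\"om-type preconditioner of the form $\mvar{P} := \sigma_d^2(\A_{T-1})\,\I + \A_T^\top \A_T$, which by the Woodbury identity can be applied in $O(kd + k^\omega)$ time after a one-time $O(k^\omega)$ factorization of the $k\times k$ capacitance matrix $\sigma_d^2(\A_{T-1})\I + \A_T \A_T^\top$. The crux of the argument is to prove that $\mvar{P}$ is $O(1)$-spectrally equivalent to $\A_{T-1}^\top \A_{T-1}$ under just the \emph{average} (as opposed to worst-case) conditioning hypothesis; this requires showing that the OSE chain preserves both the top-$k$ subspace and the \emph{tail Frobenius energy} $\sum_{i>k}\sigma_i(\A_t)^2$ up to constants, so that the ridge $\sigma_d^2$ correctly captures, in aggregate, the contribution of the non-top-$k$ spectrum omitted by the rank-deficient $\A_T^\top \A_T$. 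The hypothesis $\bar\kappa_{k,2}(\A)^2 = O(1)$ is exactly what makes this aggregate balance hold.

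Summing costs, the outer \precon{} runs $\otilde(\log(1/\epsilon))$ iterations, each costing one $\A$-product of cost $\otilde(\nnz(\A))$ plus a telescoping recursive sub-solve whose per-level cost sums to $\sum_t \otilde(n_t d) + O(kd + k^\omega) = \otilde(d^2 + k^\omega)$. The $O(k^\omega)$ base-case factorization is done once outside the iteration loop, so the $k^\omega$ term stays additive and the remaining work multiplies with $\log(1/\epsilon)$, yielding the claimed $\otilde((\nnz(\A) + d^2)\log(1/\epsilon)) + O(k^\omega)$ runtime.
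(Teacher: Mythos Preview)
Your proposal has a genuine gap at the heart of the recursion. You assert that each $\mS_t$ is an OSE with $(1+c)$-distortion for $\A_t$, so that $\A_{t+1}^\top\A_{t+1}$ is an $O(1)$-spectral approximation of $\A_t^\top\A_t$. But an OSE for a $d$-column matrix requires $\Omega(d)$ rows; once $n_{t+1}<d$ (which, under a geometric decrease from $\otilde(d)$ to $\otilde(k)$, happens at essentially every intermediate level), $\A_{t+1}^\top\A_{t+1}$ has rank at most $n_{t+1}<d$ and is \emph{singular}. A singular matrix cannot be an $O(1)$ preconditioner for the PD matrix $\A_t^\top\A_t$; the relative condition number is infinite. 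You correctly notice this rank deficiency at the base level $T$, but it is not confined to the base---it afflicts every level of your unregularized chain after the first.

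The paper's fix is exactly the missing ingredient: a \emph{regularized} preconditioning chain $(\A_t,\nu_t)$ with geometrically increasing ridges $\nu_t$, where one only needs $\A_t^\top\A_t+\nu_t\I\approx_{O(1)}\A_{t-1}^\top\A_{t-1}+\nu_t\I$. This holds with an $\otilde(k_t)$-row sketch provided $\nu_t\geq\sigAvg_{k_t}(\A)$ (\Cref{l:reg-approx}), and the resulting relative condition number between consecutive levels is $O(\nu_t/\nu_{t-1})=O(1)$. The $\bar\kappa_{k,2}$ hypothesis is then used not at the bottom of the chain but at the \emph{top}: it bounds the condition number between $\A^\top\A$ and the regularized $\A_0^\top\A_0+\nu_0\I$ with $\nu_0\approx\tfrac{k}{d}\sigAvg_k(\A)$, which is where the final $O(\log(1/\epsilon))$ outer iterations come from. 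Your base-case ridge $\sigma_d^2(\A_{T-1})$ is also too small---the regularized embedding guarantee at level $\otilde(k)$ only kicks in at $\nu\approx\sigAvg_k(\A)\approx\tfrac{d}{k}\sigma_d^2$, so with ridge $\sigma_d^2$ you would not get $O(1)$ equivalence even there.
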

The relaxed notion of well-conditioned in \Cref{t:ls}, defined via $\bar\kappa_{k,2}(\A)$, for general matrices (along with an analogous $\bar\kappa_{k,1}(\M)=\frac1{d-k}\sum_{i>k}\frac{\lambda_i(\M)}{\lambda_d(\M)}$ for PD matrices $\M$) has been relied upon extensively in literature on linear systems, regression and quadratic optimization, e.g., \cite{strohmer2009randomized,leventhal2010randomized,lee2013efficient,musco2018spectrum}, including the recent prior works on the $k$-well-conditioned problem \cite{derezinski2023solving,derezinski2024faster,derezinski2024fine,derezinski2025randomized}. Thus, it might appear as though \Cref{t:ls} largely resolves the major questions regarding the complexity of $k$-well-conditioned regression. However, perhaps surprisingly, we show that this is not the case. Our algorithmic results reveal that these existing notions of averaged condition numbers 
are still \textit{too restrictive} to capture the true complexity of these problems. Formally, we  propose the following definition of a generalized averaged condition number.
\begin{definition}[$(k,p)$-averaged conditioned number]\label{d:kappabar}
We define the \emph{$(k,p)$-averaged condition number of $\A$} for $\A \in \R^{n\times d}$, $k\in[d]$, and $p \in (0, \infty]$, as
\begin{align*}
    \bar\kappa_{k,p}(\A) := \bigg(\frac1{d-k}\sum_{i>k}\frac{\sigma_i(\ma)^p}{\sigma_d(\ma)^p}\bigg)^{1/p}
    \text{ if } p \neq \infty
    \text{ and }
    \bar\kappa_{k,\infty}(\A) := \lim_{p \rightarrow \infty}  
    \bar\kappa_{k,p}(\A) 
    = \frac{\sigma_{k+1}(\ma)}{\sigma_d(\ma)}
    \,.
\end{align*}
Additionally, we say that $\A$ is \emph{$(k,p)$-well-conditioned} if $\bar\kappa_{k,p}(\A) = O(1)$.
\end{definition} 
The $(k,p)$-averaged condition number is simply the generalized mean with exponent $p$ computed over the ratios $\sigma_i/\sigma_d$ for $i>k$. Thus, $\bar\kappa_{k,p_1}(\A)\leq \bar\kappa_{k,p_2}(\A)$ for any $p_1<p_2$. Also, $\ma$ is $(k,\infty)$-well-conditioned if and only if it is $k$-well-conditioned. This monotonicity of $\bar\kappa_{k,p}(\A)$ in $p$ highlights how \Cref{t:ls} corresponds to a more general result than solving dense $k$-well-conditioned regression problems in nearly linear time. 

Remarkably, we show how to carefully apply and extend our framework to obtain similar regression results for $(k,p)$-well-conditioned $\ma$ with any constant $p > 1/2$. \Cref{t:primal-dual-optimized}, proved in \Cref{s:primal-dual} and simplified in \Cref{t:ls-optimized} below, obtains the same (up to polylogarithmic factors) running times as in \Cref{t:ls} but only requires that $\sum_{i > k} \frac{1}{d - k} \sqrt{\sigma_i(\ma) / \sigma_d(\ma)} = O(1)$, rather than $\sum_{i > k} \frac{1}{d - k} \sigma_i(\ma)^2 / \sigma_d(\ma)^2 = O(1)$. To see why this is significant, note that when each $\sigma_i(\ma) / \sigma_d(\ma) = O(d^2)$, \Cref{t:primal-dual-optimized} yields a nearly linear time algorithm for dense inputs whereas $\sum_{i > k} \frac{1}{d - k} \sigma_i(\ma)^2 / \sigma_d(\ma)^2$ could be as large as $d^3$ for any $k\leq d/2$.

\begin{theorem}[Regression]\label{t:ls-optimized}
    There is an algorithm that given $\A\in\R^{n\times d}$, $\b\in\R^d$,  $\epsilon\in(0,1)$, $k \in [d]$, and constant $p > 1/2$ where $\ma$ is $(k,p)$-well-conditioned outputs $\xout\in\R^d$ such that $\|\A\xout-\A\xopt\|\leq \epsilon\|\A\xopt\|$ whp., where $\x^* = \argmin_{\x\in\R^d}\|\A\x-\b\|^2$, in time $\tilde O((\nnz(\A) + d^{2})\log(1/\epsilon)) + O(k^\omega)$.
\end{theorem}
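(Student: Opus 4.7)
My plan is to establish Theorem~\ref{t:ls-optimized} by instantiating a primal-dual recursive preconditioning framework, sharper than the one underlying the warm-up Theorem~\ref{t:ls}. The key upgrade is to reformulate the regression as a saddle-point problem in a primal variable $\x \in \R^d$ and a dual variable $\y \in \R^n$ (roughly, $\min_\x \max_\y \,2\y^\top(\A\x-\b) - \|\y\|^2$), which lets us apply $\A$ and $\A^\top$ asymmetrically across levels of the recursion and, in turn, to exploit averaged conditioning even when the non-top-$k$ singular values are much more spread out than $\bar\kappa_{k,2}(\A) = O(1)$ would allow.

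Concretely, I would build a chain of sketched matrices $\A = \A_0, \A_1, \ldots, \A_T$, where each $\A_{t+1}$ is obtained from $\A_t$ by an oblivious subspace embedding (e.g., an OSNAP or SRHT sketch) that reduces the row count geometrically, terminating at $\A_T$ with $\tilde O(k)$ rows whose associated normal-equations system is solved directly in $O(k^\omega)$ time via a Woodbury decomposition around the top-$k$ block. At each level, I would run preconditioned accelerated gradient descent on the saddle-point formulation involving $\A_t$, using the next matrix $\A_{t+1}$ as a preconditioner and recursively invoking the inner solver to apply $(\A_{t+1}^\top \A_{t+1})^{-1}$ approximately; the primal-dual structure is what permits the per-iteration cost to be $\tilde O(\nnz(\A_t) + d^2)$ rather than scaling with $n_t d$ separately in both directions. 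Standard warm-start/accuracy-boosting then lets inner inexactness propagate through the $O(\log d)$ levels of recursion without loss.

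The main obstacle, and where the restriction $p > 1/2$ enters, is choosing the schedule of sketch sizes $\{n_t\}$ and then analyzing how per-level iteration counts aggregate. At level $t$ the iteration count scales with $\sqrt{\kappa_t}$, where $\kappa_t$ is the condition number of $\A_t^\top\A_t$ preconditioned by $\A_{t+1}^\top\A_{t+1}$; by standard subspace-embedding guarantees this $\kappa_t$ is controlled by a suffix of the singular values $\sigma_i(\A)/\sigma_d(\A)$ for $i > k$. Showing that the total work remains $\tilde O((\nnz(\A)+d^2)\log(1/\epsilon)) + O(k^\omega)$ then reduces to bounding a telescoped sum roughly of the form $\sum_{i > k}(\sigma_i(\A)/\sigma_d(\A))^{1/2}$ by a polylogarithmic function of $\bar\kappa_{k,p}(\A)$. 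This is exactly a power-mean (Hölder) inequality that holds when $p > 1/2$ and fails at $p = 1/2$, which is what produces the sharp threshold in the theorem statement. I expect the bulk of the technical effort to be in this balancing step and in verifying that the primal-dual inner solver composes cleanly with the sketch-based preconditioners at every level of the chain.
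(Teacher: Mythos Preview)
Your proposal has a genuine gap that would prevent it from reaching the $p>1/2$ threshold.

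The paper's ``primal-dual'' is not a saddle-point reformulation. The term refers to alternating between two kinds of preconditioning: \emph{primal} steps that reduce the leading (row) dimension, and \emph{dual} steps that reduce the second (column) dimension, glued together by the Woodbury identity, which lets a solver for $\B\B^\top+\nu\I$ serve as a solver for $\B^\top\B+\nu\I$ even though these matrices have different sizes (\Cref{l:primal-dual-recursion}). Concretely, the chain consists of pairs $\B_t=\S_t\A\mPi_{t-1}^\top$ and $\A_t=\S_t\A\mPi_t^\top$, where $\S_t$ and $\mPi_t$ are sparse embeddings applied on the left and the right, so that $\A_t$ is roughly $\tilde O(k_t)\times\tilde O(k_t)$, not $n_t\times d$.

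This two-sided reduction is exactly what buys the improvement from $p>1$ to $p>1/2$. Your chain reduces only rows, so each $\A_t$ remains $n_t\times d$ and the per-level cost is $\tilde O(dk_{t-1})$; the spectral balancing then hinges on the bound $\sqrt{k\sum_{i>k}\sigma_i^2/\sigma_d^2}\le d\,\bar\kappa_{0,1}(\A)$, which is precisely the warm-up argument of \Cref{s:warm-up-improved} and only yields $p>1$ (\Cref{t:ls-optimized-warmup}). When both dimensions shrink, the per-level cost becomes $\tilde O(k_{t-1}^2)$, and the relevant inequality becomes $\sqrt{k^3\sum_{i>k}\sigma_i^2/\sigma_d^2}\le d^2\,\bar\kappa_{0,1/2}(\A)$ (\Cref{l:running-time-general} with $p=2$); this is where the $p>1/2$ threshold actually originates. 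Your saddle-point device does not supply column reduction---the dual variable $\y$ lives in $\R^n$, not in a shrinking space---so the second dimension stays at $d$ throughout and you cannot access the stronger spectral bound.

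Two secondary ingredients you would also need: the sketch-size schedule is not geometric but $k_t=\max\{\lceil d\exp(-\alpha t^2)\rceil,2k\}$ with only $T=O(\sqrt{\log(d/k)})$ levels, chosen so that the $\exp(\tilde O(t))$ recursion overhead stays sub-polynomial; and the regularization parameters $\nu_t\approx\sigAvg_{k_t}(\A)$ are not known a priori, so the paper discovers them inductively by certifying the spectral-approximation condition at each level via an inexact-power-method tester (\Cref{l:primaldual-condition-tester}) that runs on the partial chain already built.
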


We also show how to apply our framework to solving PD linear systems. Our resulting algorithm, developed in \Cref{s:psd}, provides a similar recursive scheme as our regression algorithm but the recursive steps are tailored to the structure of PD linear systems. While solving PD linear systems can essentially be viewed as a special case of the regression task, the PD structure of the input matrix allows us to further relax the conditioning assumptions. As is common for comparing regression and PD linear system solving algorithms (e.g., \cite{derezinski2023solving}), we essentially obtain algorithms where previous dependencies on the squared singular values of $\ma$ are replaced with the eigenvalues of $\ma$. Consequently, as the following \Cref{t:psd-optimized} asserts, we obtain nearly linear time algorithms for solving dense $d \times d$ PD systems in the desired $\otilde(d^2 + k^\omega)$ time provided that the input matrix is $(k,p)$-well-conditioned for any constant $p > 1/4$. This result improves upon the $p > 1/2$ requirement from our regression result, applies to a broader class of matrices than those that are $k$-well-conditioned, and matches the conditional lower bound of \cite{derezinski2024faster}.

\begin{theorem}[PD linear system]\label{t:psd-optimized}
    There is an algorithm that given $\M\in\PD{d}$, $\b\in\R^d$, $\epsilon\in(0,1)$, $k \in [d]$, and constant $p > 1/4$ such that $\mm$ is $(k,p)$-well-conditioned outputs $\xout\in\R^d$ such that $\|\xout-\xopt\|_{\M}\leq \epsilon\|\xopt\|_{\M}$ whp., where $\x^* = \M^{-1}\b$, in time $\otilde(d^{2}\log(1/\epsilon)) + O(k^\omega)$. \end{theorem}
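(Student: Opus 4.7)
The plan is to follow the recursive preconditioning blueprint of \Cref{t:ls} and \Cref{t:ls-optimized}, but to replace the regression-style sketching step by one that exploits positive definiteness to obtain a sharper threshold on $p$. Concretely, I would build a chain of PD matrices $\M = \M_0, \M_1, \ldots, \M_T$ such that (i) preconditioned AGD applied to systems in $\M_t$ with preconditioner $\M_{t+1}$ converges in polylogarithmically many iterations, and (ii) $\M_T$ can be written as a PD matrix of effective rank $O(k)$ plus a well-conditioned shift, so that $\M_T^{-1}$ can be applied in $O(k^\omega) + \otilde(d^2)$ total time using the Woodbury identity after a single $k \times k$ solve. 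A system in $\M$ is then solved by running AGD at level $0$, each iteration of which invokes a solver for $\M_1$, and so on recursively down to the base case.

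The source of the improvement from the regression threshold $p > 1/2$ (on $\A$) to $p > 1/4$ (on $\M$) is that positive definiteness lets us effectively sketch $\M^{1/2}$ rather than $\M$. The identity $\sigma_i(\M^{1/2}) = \lambda_i(\M)^{1/2}$ yields $\bar\kappa_{k,p}(\M^{1/2}) = \bar\kappa_{k,p/2}(\M)^{1/2}$, so a regression-style sketch-quality bound that would require $p > 1/2$ on the sketched factor translates into only $p > 1/4$ on $\M$. Since $\M^{1/2}$ cannot be formed explicitly without incurring $\otilde(d^\omega)$ work, I would instead build each $\M_{t+1}$ directly from $\M_t$ using a randomized low-rank approximation of its top spectrum (via, e.g., Nystr\"om-style sketching with a fast oblivious subspace embedding), form $\M_{t+1}$ by a PSD correction, and use the Woodbury identity so that $\M_{t+1}^{-1}\v$ can be evaluated in $\otilde(d^2)$ time given a solver for $\M_t$ together with a $k \times k$ factorization. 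The $(k,p)$-well-conditioning assumption, for constant $p > 1/4$, controls the stable rank of the residual spectrum precisely at the level needed to keep the sketch dimension at each level at most $\otilde(d^\alpha)$ for some $\alpha < 1$.

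The quantitative analysis then balances, at each level, an $\otilde(d^2)$ matrix-vector application of $\M_t$ per AGD step, an $\otilde(d^2)$ cost of maintaining the Woodbury low-rank correction, and the single amortized $O(k^\omega)$ base-case solve. With $T = O(\log d)$ levels and geometrically shrinking sketch dimensions, every level costs $\otilde(d^2)$, while the base case contributes the $O(k^\omega)$ term, yielding the claimed $\otilde(d^{2}\log(1/\epsilon)) + O(k^\omega)$ bound. The step I expect to be the main obstacle is showing that the $(k,p)$-well-conditioning property transfers across the recursion: after a sketched PSD subtraction, the residual matrix $\M_{t+1}$ must itself be $(k_{t+1}, p)$-well-conditioned for a geometrically decaying $k_{t+1}$, with the relative condition number $\kappa(\M_t^{-1/2}\M_{t+1}\M_t^{-1/2})$ remaining polylogarithmic. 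This is exactly where the PD structure is essential, since it both allows the low-rank correction to be PSD (keeping every $\M_t$ PD and preserving AGD's guarantees) and supplies the square-root scaling that drives the $p > 1/4$ threshold and matches the $\Omega(d^2 + k^\omega)$ conditional lower bound of \cite{derezinski2024faster}.
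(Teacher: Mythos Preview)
Your proposal has the right high-level ingredients—Nystr\"om sketching to exploit PD structure, recursive preconditioning, Woodbury at the base, and the observation $\bar\kappa_{k,q}(\M^{1/2}) = \bar\kappa_{k,q/2}(\M)^{1/2}$ that halves the threshold on $p$. But there is a real gap in how the recursion closes, and the obstacle you flag is not the one that matters.

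The missing step is the reduction from solving $\M_t + \nu\I$ to solving a \emph{strictly smaller regularized PD system}. If the preconditioner at level $t$ is the regularized Nystr\"om approximation $\N = \C\W_\nu^{-1}\C^\top + \nu\I$ with $\C = \M_t\S^\top$ and $\W_\nu = \S\M_t\S^\top + \nu\I$, then applying $\N^{-1}$ via Woodbury requires solving
\[
\C^\top\C + \nu\W_\nu \;=\; \S\M_t^2\S^\top + \nu\,\S\M_t\S^\top + \nu^2\I,
\]
which is not of the form ``smaller PD matrix plus shift'' and still depends on $\M_t$ through $\M_t^2$. The paper's key lemma (\Cref{l:dual-reduction}) shows $\C^\top\C + \nu\W_\nu \approx_{O(1)} (\W_{\nu/2})^2$, so this system can be solved by twice solving the genuinely smaller system $\S\M_t\S^\top + \nu\I$. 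This is what allows the chain to be $\M_{t+1} = \S_{t+1}\M_t\S_{t+1}^\top$, with $\M_t$ of size $s_t \times s_t$ shrinking in \emph{both} dimensions. That two-sided shrinkage is exactly what buys the improvement to $p > 1/4$: the per-level cost decays like $k_t^2$ rather than $d\,k_t$, which lets \Cref{l:running-time-general} be invoked with exponent $p=2$ rather than $p=1$ (cf.\ \Cref{s:primal-dual}). Your chain of $d\times d$ matrices with low-rank PSD corrections, even if made rigorous, mirrors the warm-up regression recursion of \Cref{s:ls} and would yield a threshold of $p > 1/2$ on $\M$, not $p > 1/4$; also, with $T=O(\log d)$ levels, polylogarithmically many inner iterations each, and no dimension reduction, the total call count is $(\mathrm{polylog})^{O(\log d)}$, which is not $\tilde O(1)$.

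Second, the obstacle you identify—that $(k,p)$-well-conditioning must ``transfer'' to each $\M_{t+1}$—is not needed and is not what the paper proves. The chain condition is just $\A_t^\top\A_t + \nu_t\I \approx_{O(1)} \A_0^\top\A_0 + \nu_t\I$ in the implicit factorization $\M_t = \A_t\A_t^\top$, and this follows directly from the OSE property of the composed sketches (\Cref{l:ose-chaining,l:dual-chain}) with no hypothesis on the conditioning of $\M_t$. The $(k,p)$-condition on $\M$ enters only once, at the end, to bound the total work via \Cref{l:running-time-general}. The genuine second obstacle is finding the regularizers $\nu_t \approx \sigAvg_{k_t}(\A)$ at runtime without access to the spectrum; the paper handles this by an inductive power-method test (\Cref{l:dual-condition-tester}) that certifies each spectral approximation using only matrix-vector products with $\M_{t-1}$, $\M_t$, and $\M_{t-1}\S_t^\top$.
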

In Sections \ref{s:primal-dual} and \ref{s:psd}, we give more general results which include explicit dependence on the averaged condition numbers, obtaining $\tilde O(\nnz(\A)+d^2\bar\kappa_{k,p}(\A)+k^\omega)$ runtime for solving regression with any $\A\in\R^{n\times d}$ and $p>1/2$ (\Cref{t:primal-dual-optimized}), as well as $\tilde O(d^2\sqrt{\bar\kappa_{k,p}(\M)}+k^\omega)$ runtime for solving a linear system with any $\M\in\PD{d}$ and $p>1/4$ (\Cref{t:dual-main-optimized}). In the extreme cases of $p=1/2$ and $p=1/4$, respectively, our results imply only slightly weaker running times (matching the above up to sub-polynomial factors, see Remark \ref{r:almost-linear}).

\begin{table}\centering\begin{tabular}{c||c|c|c|c}%{r||c|l}
   & \multirow{2}{*}{Runtime guarantee} & \multicolumn{2}{c|}{\underline{$(k,p)$-Well-conditioned matrices}} & \multirow{2}{*}{Ref.}\\
   &&General & Positive definite&\\
    \hline\hline
\textit{Conditional lower bound} & $\Omega(d^2 + k^\omega)$ & \multicolumn{2}{c|}{$p\in(0,\infty]$}
& \cite{derezinski2024faster}\\
\hline
Krylov Subspace Methods & $\tilde O(d^2k)$  
&\multicolumn{2}{c|}{$p=\infty$}& \cite{axelsson1986rate}\\
\hline
Fast Matrix Multiplication & $\tilde O(d^2k^{\omega-2})$  
& \multicolumn{2}{c|}{$p\in[2,\infty]$}
& \cite{musco2018spectrum}\\
\hline
Randomized Block Kaczmarz &$\tilde O(d^2 + dk^{\omega-1})$ & $p\in[2,\infty]$&
$p\in[1,\infty]$& \cite{derezinski2023solving}\\
\hline
Sketched Preconditioning  & $\tilde O(d^{2.065} + k^\omega)$ & $p\in[2,\infty]$&
$p\in[1,\infty]$& \cite{derezinski2024faster}\\
\hline
\hline
\textbf{Our results} &  $\tilde O(d^2 + k^\omega)$ & $p\in(1/2,\infty]$&
$p\in(1/4,\infty]$&\textbf{Thm.~\ref{t:ls-optimized}, \ref{t:psd-optimized}} 
  \end{tabular}
  \caption{Running time comparison for solving a $(k,p)$-well-conditioned, $d\times d$ linear system (Definition \ref{d:kappabar}). Lower values of $p$ and larger values of $k$ correspond to broader classes of matrices.
  }
  \label{tab:comparison}
\end{table}

\paragraph{Comparison to Prior Work.}
\Cref{t:ls-optimized,t:psd-optimized} can be directly compared with recent prior work on solving $(k,p)$-well-conditioned problems \cite{musco2018spectrum,derezinski2023solving,derezinski2024faster}, as they provide guarantees in terms of averaged condition numbers for both regression and linear systems. We improve upon these results in two ways (see \Cref{tab:comparison}). 

First, we avoid the complexity trade-off that previously existed in the literature with respect to the dependence on $k$ and the matrix dimensions. To illustrate this, let us focus on the case of $d\times d$ linear systems. Comparing the state-of-the-art results, $\tilde O(d^2+dk^{\omega-1})$ by \cite{derezinski2023solving} and $\tilde O(d^{2.065}+k^\omega)$ by \cite{derezinski2024faster}, we observe a trade-off between the two complexities, leading to each method being preferable in a different regime of $k$. Our algorithms overcome this trade-off, achieving the best of both worlds, including the first near-optimal complexity bounds for solving $k$-well-conditioned $d\times d$ linear systems when $k$ is between $d^{0.73}$ and $d^{0.87}$, closing the last remaining complexity gap in this problem. 

Second, we obtain these running times using more relaxed notions of averaged condition numbers $\bar\kappa_{k,p}$ compared to the previous works. In the case of regression, we require $p>1/2$ compared to the previous best of $p\geq 2$, and in the case of PD linear systems, we require $p>1/4$ compared to the previous best of $p\geq 1$.

Interestingly, \Cref{t:ls-optimized,t:psd-optimized}  imply new complexity guarantees even for $k=0$, where the associated algorithms do no need to  rely on fast matrix multiplication.
Thus, it is also helpful to compare with state-of-the-art iterative methods which do not use fast matrix multiplication, and often rely on somewhat different notions of averaged condition numbers. Namely, sampling and accelerated stochastic variance reduced gradient methods can solve regression in $\otilde(\sum_{i} d \norm{\a_i} \sqrt{\tau_i(\ma)}/\sigma_d(\ma))$ time where $\a_i$ is the $i$th row of $\A$ and $\tau_i(\ma)$ is the leverage score of that row \cite{agarwal2020leverage}, and accelerated coordinate descent methods \cite{ZhuQRY16,NesterovS17} can solve PD systems in $\otilde(\sum_{i} d \sqrt{\M_{ii} / \lambda_d(\mm)})$ time. In certain corner cases, e.g., when $\A$ has orthogonal rows or $\M$ is diagonal, these bounds reduce to $\tilde O(d^2\bar\kappa_{0,1})$ and $\tilde O(d^2\sqrt{\bar\kappa_{0,1/2}})$ running times, respectively; however attaining these latter running times in general was an open problem. Remarkably, we not only resolve this problem, but also obtain an even better dependence on the conditioning parameter $p$.

\paragraph{Application to Spectrum Approximation.}
To illustrate the significance of our new algorithmic results, we describe how they can be used to obtain improved runtimes for a range of tasks involving estimation of the spectrum of a PD matrix or the singular values of an arbitrary matrix, where linear system and regression solvers arise as subroutines. In particular, via a direct reduction from Schatten $p$-norm approximation to solving $(0,p)$-well-conditioned linear systems (\Cref{c:schatten-reduction}), we obtain the first nearly linear time approximation algorithm for the nuclear norm $\|\A\|_1=\sum_i\sigma_i(\A)$ of a dense matrix $\A$. The first methods for this problem which avoided an $\Omega(d^\omega)$ term in their running time were provided by \cite{musco2018spectrum}, attaining a $\tilde O((\nnz(\A)+d^{2.18})\poly(1/\epsilon))$ time $(1\pm\epsilon)$-approximation algorithm. 

\begin{theorem}[Nuclear norm approximation]\label{t:nuclear}
Given a matrix $\A\in\R^{n\times d}$ and $\epsilon>0$, we can compute $X\in\R$ that whp.\ satisfies $X\in(1\pm\epsilon)\|\A\|_1$ in time $\tilde O((\nnz(\A)+d^{2})\poly(1/\epsilon))$.
\end{theorem}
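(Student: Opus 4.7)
The plan is to apply the Schatten $p$-norm reduction (\Cref{c:schatten-reduction}) with $p = 1$, since the nuclear norm $\|\A\|_1 = \sum_i \sigma_i(\A)$ is precisely the Schatten $1$-norm. That reduction transforms the problem of computing a $(1 \pm \epsilon)$-multiplicative approximation of $\|\A\|_1$ into solving $\poly(\log d, 1/\epsilon)$ many PD linear systems in shifted matrices of the form $\A^\top\A + \lambda \I$, and, crucially, guarantees that each such system is $(0,1)$-well-conditioned in the sense of \Cref{d:kappabar}.

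Concretely, I expect the reduction to proceed by writing $\|\A\|_1 = \tr((\A^\top\A)^{1/2})$, attacking the trace via a Hutchinson-style estimator that contributes an $\tilde O(1/\epsilon^{2})$ sample factor, and approximating $x \mapsto \sqrt{x}$ by a rational function obtained from an integral representation such as $\sqrt{x} = \tfrac{1}{\pi}\int_0^\infty \tfrac{x}{x+t}\,\tfrac{dt}{\sqrt{t}}$ discretized by Gaussian quadrature. Each evaluation $\v^\top \sqrt{\A^\top\A}\,\v$ then reduces to evaluating $\v^\top \A^\top\A(\A^\top\A + \mu_\ell \I)^{-1}\v$ at a handful of quadrature nodes $\mu_\ell$, with each evaluation being a single PD linear system solve.

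Since $1 > 1/4$, every such $(0,1)$-well-conditioned PD system falls squarely within the scope of \Cref{t:psd-optimized} applied with $k = 0$. Because the coefficient matrix has the form $\A^\top\A + \lambda\I$, matrix-vector products cost only $O(\nnz(\A) + d)$, and the implicit-access implementation of \Cref{t:psd-optimized} therefore runs in $\tilde O((\nnz(\A) + d^{2})\poly(\log(1/\epsilon)))$ time per system. Multiplying this per-system cost by the $\poly(\log d, 1/\epsilon)$ number of systems, and boosting the success probability via the standard median-of-independent-trials argument, yields the claimed $\tilde O((\nnz(\A) + d^{2})\poly(1/\epsilon))$ total runtime.

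The hard part is located not in this final assembly step but inside the reduction itself (\Cref{c:schatten-reduction}): one must choose the quadrature nodes carefully and rigorously verify that, uniformly across all shifts $\mu_\ell$ needed to approximate $\sqrt{x}$ on the relevant portion of the spectrum of $\A^\top\A$, the resulting shifted systems are genuinely $(0,1)$-well-conditioned — that is, $\tfrac{1}{d}\sum_i (\sigma_i(\A)^2+\mu_\ell)/(\sigma_d(\A)^2+\mu_\ell) = O(1)$ — and that the resulting rational approximation error translates into a $(1\pm\epsilon)$ multiplicative error on $\|\A\|_1$ rather than only an additive one. Once those two quantitative bounds are in hand, \Cref{t:psd-optimized} delivers \Cref{t:nuclear} essentially as a corollary.
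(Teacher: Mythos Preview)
Your overall plan to instantiate \Cref{c:schatten-reduction} at $p=1$ is right, and your description of the \cite{musco2018spectrum} reduction is fine. But the final step---plugging in \Cref{t:psd-optimized}---does not go through as written, for two reasons.

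First, the conditioning claim is off. What the reduction hands you is a matrix $\A_\lambda$ (with $\A_\lambda^\top\A_\lambda=\A^\top\A+\lambda\I$ and $\sqrt\lambda\gtrsim\frac{\epsilon}{d}\|\A\|_1$) that is $(0,1)$-well-conditioned \emph{as a general matrix}: $\tfrac1d\sum_i\sqrt{\sigma_i^2+\lambda}/\sqrt{\sigma_d^2+\lambda}=O(1/\epsilon)$. Translated to the PD matrix $\M=\A^\top\A+\lambda\I$ this says $\M$ is $(0,1/2)$-well-conditioned, not $(0,1)$. Your displayed condition $\tfrac1d\sum_i(\sigma_i^2+\mu_\ell)/(\sigma_d^2+\mu_\ell)=O(1)$ is $\bar\kappa_{0,1}(\M)$ and is \emph{false} in general: take $\sigma_1=1$, $\sigma_2=\cdots=\sigma_d=1/d$, so $\|\A\|_1=O(1)$, $\lambda=\Theta(\epsilon^2/d^2)$, and the sum is $\Theta(d)$. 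Fortunately $1/2>1/4$, so \Cref{t:psd-optimized} would still apply---but only after correcting the parameter.

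Second, and more seriously, \Cref{t:psd-optimized} takes $\M\in\PD d$ as an explicit $d\times d$ input; there is no ``implicit-access implementation'' in the paper, and the preprocessing in \Cref{s:psd} genuinely needs to materialize $\C_1=\M\S_1^\top$ and $\M_1=\S_1\M\S_1^\top$. If $\M=\A^\top\A+\lambda\I$ is only available through $\A$, computing $\A^\top(\A\S_1^\top)$ costs $\tilde O(\nnz(\A)\cdot s_1)$ with $s_1=\Theta(d/\mathrm{polylog}(d))$, which is not nearly linear even after an initial row-sketch of $\A$. So the runtime you claim for the PD route does not follow.

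The paper avoids both issues by invoking the \emph{regression} solver, \Cref{t:ls-optimized}, directly on $\A_\lambda$: since $\A_\lambda$ is $(0,1)$-well-conditioned and $1>1/2$, each solve costs $\tilde O(\nnz(\A_\lambda)+d^2)=\tilde O(\nnz(\A)+d^2)$, and composing with \Cref{l:spectrum-reduction} gives \Cref{t:nuclear}. Swapping \Cref{t:psd-optimized} for \Cref{t:ls-optimized} in your last paragraph (and dropping the erroneous $(0,1)$-PD claim) fixes the argument.
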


This result is based on combining the algorithmic framework for spectrum estimation of \cite{musco2018spectrum} with our regression solvers, improving on the previous best runtime of $\tilde O\big((\nnz(\A)+d^{2.11})\poly(1/\epsilon)\big)$ by \cite{derezinski2024faster}. Notably, unlike the previous running times, our nuclear norm approximation algorithm does \emph{not} rely on fast matrix multiplication to achieve its claimed running time. In Section \ref{s:spectrum}, we illustrate how our solvers can be used to provide faster runtimes for more general spectrum approximation tasks, in particular estimating any Schatten $p$-norm $\|\A\|_p=(\sum_i\sigma_i^p)^{1/p}$. Here, we obtain the first nearly linear time approximation algorithms for all $p\in(1/2,2)$, and provide the new best known time complexities for all $p\in(0,2)$.

\subsection{Related Work}
\label{s:intro:related_work}

Extensive literature has been dedicated to algorithms
for solving linear systems and regression tasks. Many of these works have approached the topic from a numerical linear algebra perspective \cite{golub2013matrix}, including through direct solvers such as Gaussian elimination \cite{higham2011gaussian} as well as iterative solvers such as Richardson and Chebyshev iteration \cite{golub1961chebyshev}, and finally Krylov subspace methods such as conjugate gradient \cite{hestenes1952methods}.
From the computer science perspective, early algorithmic improvements have come primarily from fast matrix multiplication, beginning with \cite{Strassen1969}, which continues to be an active area of research, e.g., see \cite{coppersmith1987matrix,williams2012multiplying,williams2024multiplication}. This has led to algorithms that solve a $d\times d$ linear system in $O(d^\omega)$ time. 
More recently, many works have focused on using randomization in order to speed up these algorithms for various classes of linear system and regression tasks. This is where we will focus our discussion of related work, as it is the most relevant to our approach.

One of the most important randomized tools for computational linear algebra is matrix sketching, along with its underlying theoretical framework of subspace embeddings \cite{sarlos-sketching}.
In the context of solving linear systems, sketching-based techniques have focused primarily on the highly over-determined regression setting ($n\gg d$; see, e.g., \cite{woodruff2014sketching} for an overview). This line of works, including \cite{rokhlin2008fast,cw-sparse,nn-sparse}, has produced algorithms for solving the $n\times d$ regression problem in time $\tilde O(\nnz(\A) + d^\omega)$, by using a subspace embedding to reduce the large dimension $n$, and then using fast matrix multiplication to construct a preconditioner for an iterative solver. Here, once again, $d^\omega$ arises as the key bottleneck.

The $k$-well-conditioned problem setting was first explicitly introduced by \cite{derezinski2023solving}, although it has appeared in a number of prior works either as a motivation or an algorithmic subroutine \cite{axelsson1986rate,sw09,gonen2016solving,musco2018spectrum}. These prior works have relied either on a careful convergence analysis of Krylov subspace methods \cite{axelsson1986rate,sw09}, or on using approximate low-rank approximation via the block power method, accelerated with fast matrix multiplication, to construct preconditioners that capture the top-$k$ part of the input's spectrum \cite{gonen2016solving,musco2018spectrum}. This approach attains $\tilde O(d^2k^{\omega-2})$ running time for $d\times d$ systems. This can be further optimized via fast rectangular matrix multiplication \cite{le2012faster}, but even with those improvements, the runtime is larger than $d^{2+\theta}$ for some $\theta>0$ given any $k=\Omega(d^{0.33})$. \cite{derezinski2023solving} took a different approach to solving $k$-well-conditioned systems, by showing that optimized variants of randomized block Kaczmarz and block coordinate descent exhibit a preconditioned convergence, without requiring to build a preconditioner. They obtained $\tilde O(d^2+dk^{\omega-1})$ running time, which was later fine-tuned (mainly in terms of the dependence on the conditioning) by \cite{derezinski2024fine,derezinski2025randomized}.

Most recently, \cite{derezinski2024faster} developed a new sketching-based preconditioning approach for the $k$-well-conditioned problem setting. Their algorithms skip the block power method used by \cite{gonen2016solving,musco2018spectrum}. This avoids the expensive dense matrix multiplication but also reduces the quality of the resulting preconditioner. They also utilize a second level of preconditioning to approximately apply the Woodbury formula for low-rank matrix inversion. This results in a $\tilde O(d^{2.065}+k^\omega)$ running time, where the sub-optimal dependence on $d$ is a consequence of the weaker preconditioner.  Our preconditioning chain technique can be viewed as a way of addressing this key limitation of their approach by chaining together many weak preconditioners.

There is also a line of work on randomized iterative methods for solving linear systems. For instance, the well-known stochastic variance reduced gradient method \cite{johnson2013accelerating} can solve regression in $\otilde(\sum_{i} d \cdot \|\a_i\|^2 / \sigma_d^2(\A) )$ time
and coordinate descent can solve PD linear systems in $\otilde(\sum_{i} d \cdot \mm_{ii} / \lambda_d(\mm))$ \cite{Nesterov12}. After multiple improvements, e.g., \cite{lee2013efficient,frostig2015regularizing,lin2015universal,ZhuQRY16,NesterovS17,allen2018katyusha,agarwal2020leverage}, the state-of-the-art runtime of these methods (in certain regimes) is $\otilde(\sum_{i} d \norm{\a_i} \sqrt{\tau(\ma)_i}/\sigma_d(\ma))$ for regression \cite{agarwal2020leverage} and $\otilde(\sum_{i} d \sqrt{\M_{ii} / \lambda_d(\mm)})$ for PD systems \cite{ZhuQRY16,NesterovS17}.

Finally, we note that, broadly, the idea of computing multiple matrix approximations for solving linear systems
has arisen in a number of applications. For example, iterative row sampling, sketching, or both to build preconditioners have been used for regression, \cite{iterative-row-sampling, CohenLMMPS15}, sparsification \cite{KapralovLMMS17}, and solving structured linear systems \cite{kyng2016sparsified, kyng2016approximate,cohen2018solving,PengS22,Jambulapati0MSS23}. The more particular idea of recursive preconditioning, i.e., solving a matrix approximation, by solving a matrix approximation, etc., has been used in a variety of prior work for solving graph structured linear systems, e.g., \cite{SpielmanT14,KoutisMP10,KoutisMP11,CohenKMPPRX14,CohenKPPRSV17,js_talg}. At a high level, we follow a similar template, however our algorithms differ in how the approximations are computed and chained together, as well as in how we reason about the running time to obtain fine-grained dependencies on the spectrum of~the~matrices.

\section{Overview of Approach}
\label{s:overview}

Here we provide an overview of our approach. We first introduce the main recursive preconditioning technique via a simplified variant of our regression solver (Section~\ref{s:ls}). Then we describe how this solver can be refined to attain our main regression result (Section~\ref{s:primal-dual}), discuss how to further optimize our techniques for PD systems (Section~\ref{s:psd}), and we overview the applications to spectrum approximation (Section~\ref{s:spectrum}).

\subsection{Warm-up: Recursive Regression Solver}
\label{s:overview-warm-up}
Here we discuss our approach to proving \Cref{t:ls}, and illustrate how it can lead to an improved dependence on averaged condition numbers. We consider the more general problem of solving linear systems in $\ma^\top \ma$ for full column rank $\ma \in \R^{n \times d}$, since regression, $\min_\x\|\A\x-\b\|^2$ , is equivalent to solving $\A^\top\A\x=\A^\top\b$.

Our approach is rooted in two prominent sets of tools for solving the problem: (1) 
Matrix sketching and low-rank approximation techniques from randomized linear algebra, e.g., \cite{woodruff2014sketching,cohen2016optimal,cohen2016nearly}; and
(2) preconditioning and acceleration techniques for iterative linear system solvers, e.g., \cite{js_talg}. 

\paragraph{Tool \#1: Low-Rank Approximation.} It is well known that given a matrix $\ma\in\R^{n\times d}$ there are distributions of matrices $\S \in \R^{s \times n}$ with $s\ll n$ such that $\ma^\top \S^\top \S \ma$ whp.\ approximates $\ma^\top \ma$ and $\S\A$ can be computed efficiently. For example, without any assumptions on the singular values~of~$\ma$, \cite{cw-sparse,nn-sparse,cohen2016nearly,chenakkod2024optimal} provide ``sparse oblivious subspace embeddings'' $\S \in \R^{\tilde O(d)\times n}$ such that $\S \ma$ can be computed in $\otilde(\nnz(\ma))$ time and whp.\ $\ma^\top \S^\top \S \ma \approx_2 \ma^\top \ma$, where $\M\approx_c\N$ denotes that $c^{-1}\N\preceq\M\preceq c\N$ in the Loewner ordering.

Additionally, when better bounds are given on the tail of the singular values of $\A$, it is possible to reduce the dimension of $\S$ further and obtain better approximations. For example, for all $k\leq d$ there are $\S \in \R^{\tilde O(k) \times d}$ for which it is possible to ensure that, whp., $\ma^\top \S^\top \S \ma + \nu \I \approx_2 \ma^\top \ma + \nu \I$ for $\nu = \frac1k\sum_{i>k}\sigma_i^2(\A)$ and maintain that $\S \A$ is computable in $\otilde(\nnz(\ma))$ time. Consequently this $\S \ma$ effectively provides a low-rank approximation of matrix $\A$ with respect to the singular values above $\nu$ threshold, where $\nu$ depends on $k$ and the heaviness of the tail of the singular values, i.e., $\frac1k\sum_{i>k}\sigma_i^2$ (we use $\sigma_i:=\sigma_i(\A)$ as a shorthand).

\paragraph{Tool \#2: Preconditioning.} It is also known that to efficiently solve a linear system $\mm \x = \b$ for $\mm\in\PD{d}$, 
it suffices to be able to repeatedly solve a linear system in $\tilde{\M}$, where $\tilde{\M}$ is an approximation of $\M$, and to be able to apply $\mm$ to a vector. 
For example, this can be achieved via the preconditioned Richardson iteration, $\x_{t+1} = \x_t - \tilde\M^{-1}(\M\x_t-\b)$. As long as $\M\preceq \tilde\M\preceq \kappa\M$, preconditioned Richardson started at $\x_0=\mathbf{0}$ after $t=O(\kappa\log1/\epsilon)$ steps finds $\xout=\x_t$ such that $\|\xout - \mm^{-1}\x\|_{\mm}^2\leq\epsilon\|\b\|_{\mm^{-1}}^2$.

However, when only a crude approximation of $\mm$ is known (i.e., $\kappa$ is larger than a constant), then accelerated iterative methods, such as the preconditioned Chebyshev iteration \cite{golub1961chebyshev}, are preferable to Richardson iteration since they require only $O(\sqrt\kappa\log1/\epsilon)$ steps to converge to an $\epsilon$-solution. In this paper we use a closely related method of preconditioned AGD from \cite{js_talg}, see  \Cref{lem:precon}.

\paragraph{Na\"ive Approach.} These powerful tools lead to fast running times for solving linear systems in many different settings. For example, when $\A$ is tall ($n\gg d$), then simply combining subspace embeddings with preconditioned Richardson or Chebyshev reduces solving linear systems in $\A^\top\A$ to solving $\tilde O(1)$ linear systems in $\A^\top\S^\top\S\A$, where $\S\A$ is $\tilde O(d)\times d$,
yielding an $\otilde(\nnz(\ma) + d^\omega)$ time solver for $\A^\top\A$ \cite{rokhlin2008fast}.

By leveraging the low-rank structure we can further reduce the cost of constructing the preconditioner: If we pick an embedding $\S \in \R^{\tilde O(k) \times d}$ and precondition the system with $\tilde\M=\A^\top\S^\top\S\A+\nu\I$ where $\nu=\frac1k\sum_{i>k}\sigma_i^2$, then it is possible to obtain a crude approximation $\A^\top\A\preceq \tilde\M\preceq\kappa\A^\top\A$ with:
\begin{align}
\kappa = O\Big(\frac{\nu}{\sigma_{d}^2}\Big) = O\bigg(\frac1k\sum_{i>k}\frac{\sigma_i^2}{\sigma_d^2}\bigg) = O\Big(\frac dk\bar\kappa_{k,2}(\A)^2\Big).\label{eq:overview-warm-up-naive}
\end{align}
When the matrix $\A$ is $k$-well-conditioned, then $\bar\kappa_{k,2}(\A)=O(1)$, and therefore to solve $\ma^\top \ma$ with preconditioned Chebyshev iterations it suffices to solve $\tilde O(\sqrt{d/k})$ linear systems with $\tilde\M$ (along with as many applications of $\A$). A version of this approach was used by \cite{derezinski2024faster}, which (after optimizing over the size of $\S$) obtains an $\tilde O(\nnz(\A)+d^{2.065}+k^\omega)$ time algorithm for regression. Unfortunately, due to the $\sqrt{d/k}$ factor, this strategy does not yield a nearly-linear running time, which leads to sub-optimal dependence on $d$.

\paragraph{Our Approach.} To improve upon this na\"ive approach we leverage recursive preconditioning, an idea that has been used for solving graph-structured linear systems, e.g., Laplacians \cite{SpielmanT14,KoutisMP10,KoutisMP11,CohenKMPPRX14,js_talg} and directed Laplacians \cite{CohenKPPRSV17}. Rather than directly approximating $\ma$ by a rank $k$ matrix, we build a sequence of matrices that successively approximate each other. We then use a solver for the second as a preconditioner for the first, which we develop by using a solver for the third, etc.

To adapt this idea for regression, we construct what we call a \emph{regularized preconditioning chain} (adapting terminology from the prior work on solving graph structured systems), which is a sequence $((\A_t,\nu_t)_{t\in\{0,1,...,T\}})$, where the first matrix $\A_0$ is our given input matrix $\A$, and subsequent matrices $\A_t=\S_t\A\in\R^{\tilde O(k_t)\times d}$ are progressively smaller (and coarser) approximations of $\A$ such that 
\begin{align*}
\A_t^\top\A_t+\nu_t\I\approx_4\A_{t-1}^\top\A_{t-1}+\nu_t\I,
\end{align*}
 where the 4 is chosen somewhat arbitrarily (see \Cref{footnote:4}). We show that this can be done so that decreasing $k_t$ by a constant factor 
 (say $\beta$) only increases $\nu_t$ by roughly the same factor. Therefore, while each individual matrix $\M_t = \A_t^\top\A_t+\nu_t\I$ is only a $O(d/k_t)$-approximation of $\A^\top\A$, it is a much better $O(\beta)$-approximation of $\M_{t-1}$. Thus, we can solve a linear system with $\M_{t-1}$ using only $\tilde O(\sqrt{\beta})$ applications of a solver for $\M_t$. However, the cost of applying $\M_t$ decreases by a factor of $\beta$ compared to the cost of applying $\M_{t-1}$. Consequently, at each level of preconditioning the costs decay faster than the number of recursive calls grows (so long as $\beta$ is large enough);  a similar phenomenon occurs in the prior work on graph structured linear systems. The overall running time is then dominated by the top-level operations, i.e., $\tilde O(\nnz(\A) + d^2)$ time.

The above analysis used that decreasing $k_t$ by $\beta$ sufficiently increases $\nu_t$. When $\ma$ is $k$-well-conditioned this occurs for $k_t \gg k$, however once we reach $\A_T\in\R^{\tilde O(k)\times d}$ this may not be the case; consequently, at this point we stop the recursion and solve the corresponding system directly. Since the matrix $\M_T=\A_T^\top\A_T+\nu_T\I$ is given through a low-rank decomposition, a well-known strategy is to use the Woodbury matrix identity: 
\begin{align}
    (\A_T^\top\A_T+\nu_T\I)^{-1} = \frac1{\nu_T}\big(\I-\A_T^\top(\A_T\A_T^\top+\nu_T\I)^{-1}\A_T\big).\label{eq:overview-woodbury}
\end{align}
To decrease the costs associated with inverting $\A_T\A_T^\top+\nu_T\I$, we follow a strategy similar to \cite{derezinski2024faster}. We perform a final preconditioning step of computing $\A_T\mPi^\top\in\R^{\tilde O(k)\times \tilde O(k)}$ for a subspace embedding $\mPi\in\R^{\tilde O(k)\times d}$, to reduce the second dimension of $\A_T$ and construct the preconditioner $\A_T\mPi^\top\mPi\A_T^\top+\nu_T\I$. This idea (we call it \emph{dual preconditioning}, and we revisit it in greater detail in the next section) brings the cost of constructing all of the preconditioning matrices to $\tilde O(T\cdot\nnz(\A)+ k^\omega)$ and since $T = O(\log(d/k))$, recovers \Cref{t:ls}.

\paragraph{Improved Conditioning Dependence.} 
To further improve the above result, we depart from the above analysis and take a closer look at tuning the preconditioning chain. Rather than simply incrementing the regularization amount $\nu_t$ and reducing the dimension $k_t$ by multiplicative constants, we fine-tune this process at each level, to balance out the quality of the preconditioner with the cost of each iteration.

To illustrate this, consider the total cost of all of the matrix-vector products at the $t$-th preconditioning level if we choose the ideal regularization threshold $\nu_t = \frac1{k_t}\sum_{i>k_t}\sigma_i^2$ for that embedding dimension. 
Using $\runtime_{\A_{t-1}}$ to denote the cost of applying $\A_{t-1}$ to a vector (the matrix which is being preconditioned), we get:
\begin{align*}
    \tilde O\Big(\exp\big(\tilde O(t)\big)\runtime_{\A_{t-1}} \sqrt{\nu_{t}/\sigma_d^2}\Big) 
    = \tilde O\bigg(\exp\big(\tilde O(t)\big)d k_{t-1}\sqrt{\frac1{k_{t}}\sum_{i>k_{t}}\frac{\sigma_i^2}{\sigma_d^2}}  \bigg) 
    = \tilde O\bigg(\exp\big(\tilde O(t)\big)d\, \frac{k_{t-1}}{k_{t}} \sqrt{k_{t}\sum_{i>k_{t}}\frac{\sigma_i^2}{\sigma_d^2}} \bigg),
\end{align*}
where $\exp(\tilde O(t))$ accounts for the overhead of propagating the recursive updates to the top level, $k_{t-1}/k_t$, represents the cost associated with the dimension reduction at the $t$-th preconditioning level, and $k_t\sum_{i>k_t}\sigma_i^2/\sigma_d^2$ controls the quality of the preconditioner.
Through an elementary calculation (\Cref{l:generalized-mean}), we bound the preconditioner quality at level $t$ by a quantity that depends on $\bar{\kappa}_{0,1}(\ma)$ and is independent of $k_t$: 
\begin{align}
    \sqrt{k\sum_{i>k}\frac{\sigma_i^2}{\sigma_d^2}} \ \leq \ \sum_{i \in[n]} \frac{\sigma_i}{\sigma_d} = d\cdot\bar\kappa_{0,1}(\A)\quad\text{for all $k$.}\label{eq:ls-fine-bound}
\end{align}
We note that this inequality is essentially tight, which can be observed by choosing $\sigma_i = 1/i$ and comparing both sides.
So, if we were able to control both $k_{t-1}/k_t$  and $\exp(\tilde O(t))$, then this algorithm would attain a $\tilde O(\nnz(\A) + d^2\bar\kappa_{0,1}(\A))$ running time (where $\nnz(\A)$ accounts for the top-level preconditioning),
and a more careful application of \eqref{eq:ls-fine-bound} allows us to further replace $\bar\kappa_{0,1}$ with $\bar\kappa_{k,1}$. Consequently, this would extend our solver from $(k,2)$-well-conditioned systems to the broader class of those that are $(k,1)$-well-conditioned.

Balancing the terms $k_{t-1}/k_t$ and $\exp(\tilde O(t))$ leads to a trade-off in selecting the number of preconditioning levels. If we use too many levels, then $\exp(\tilde O(t))$ grows, whereas if we select too few levels, then the dimension reduction term $k_{t-1}/k_t$ grows.
We balance this trade-off by a recursive scheme where $k_t= O(d\exp(-\alpha t^2))$ for a carefully chosen $\alpha=\tilde O(1)$, so that there are only $T= O(\sqrt{\log(d/k)})$ preconditioning levels (fewer than in the previous scheme). This leads to both terms scaling sub-polynomially, as $\exp(\tilde O(\sqrt{\log d}))$.

The above argument leads to an almost linear running time, up to sub-polynomial factors, scaling with $\bar\kappa_{k,1}(\A)$. However, in Section \ref{s:warm-up-improved}, we show that a further generalization of the spectral bound in \eqref{eq:ls-fine-bound} allows us to absorb those sub-polynomial factors by slightly restricting the condition number, obtaining $\tilde O(\nnz(\A) + d^2\bar\kappa_{k,1+\eta}(\A)\log(d)^{O(1/\eta)})$ running time for any $\eta>0$, after $\tilde O(\nnz(\A)+k^\omega)$ preprocessing (see Theorem \ref{thm:regression:optimized}).

\subsection{Alternating Primal-Dual Regression Solver}
\label{s:overview-primal-dual}

The analysis described in Section \ref{s:overview-warm-up} does not directly yield \Cref{t:ls-optimized}, since it applies only as broadly as $(k,1)$-well-conditioned systems; it does not directly yield a nearly linear running time for $(k,p)$-well-conditioned systems with $p\in(1/2,1)$. Here we describe our approach to proving \Cref{t:ls-optimized}, by devising a more elaborate recursive scheme that reduces both of the dimensions of $\A$, rather than just the leading one.

\paragraph{Tool: Dual Preconditioning.}
A key limitation of the recursive preconditioning scheme described in Section \ref{s:overview-warm-up} is that it only reduces the leading dimension $n$ of $\A\in\R^{n\times d}$, while keeping the second dimension $d$ unchanged. This may appear necessary, since our preconditioning chain relies on spectral approximations of the form $\A_t^\top\A_t+\nu_t\I\approx_4\A_{t-1}^\top\A_{t-1}+\nu_t\I$, and such an approximation requires matching matrix dimensions. What if, instead, our matrices satisfy a ``dual'' spectral approximation guarantee:
\begin{align*}
    \A_t\A_t^\top+\nu_t\I\approx_4 \A_{t-1}\A_{t-1}^\top+\nu_t\I,
\end{align*}
where the second dimension is being reduced? In fact, this guarantee can still be used to precondition a solver for $\A_{t-1}^\top\A_{t-1}+\nu_t\I$, by relying on the same Woodbury formula \eqref{eq:overview-woodbury} we used to solve the bottom level of recursion in the previous section. This suggests that reducing the second dimension in $\A$ is also a viable preconditioning strategy, but it does not fit under the previous recursive preconditioning chain paradigm.

\paragraph{Our Approach.} To incorporate this ``dual'' preconditioning strategy in a recursive framework, we define what we call a \emph{primal-dual regularized preconditioning chain}: $((\A_0,\nu_0),(\A_t,\B_t,\nu_t)_{t \in [T]})$, where again, $\A_0$ is our initial matrix $\A$, and the $\A_t, \B_t$ alternate between satisfying primal and dual approximation conditions:
\begin{align*}
    \B_{t}^\top\B_{t}+\nu_t\I \ \approx_4\ \A_{t-1}^\top\A_{t-1}+\nu_t\I \quad\text{and}\quad
    \A_t\A_t^\top+\nu_t\I \ \approx_4\ \B_t\B_t^\top+\nu_t\I\quad\text{for all } t\in[T].
\end{align*}
To solve this chain recursively, we start by using primal preconditioning, solving for $\A_0^\top\A_0+\nu_0\I$ using $\B_1^\top\B_1+\nu_1\I$, thereby reducing the leading dimension and increasing the regularizer. Then, we use dual preconditioning via the Woodbury formula, twice: first going from $\B_1^\top\B_1+\nu_1\I$ to $\A_1\A_1^\top+\nu_1\I$, in order to reduce the second dimension, and then going from $\A_1\A_1^\top+\nu_1\I$ to $\A_1^\top\A_1+\nu_1\I$, so that we can close the loop and restart the process recursively. We construct the matrices in this chain by sketching $\A$ on both sides, so that, for example, $\A_t = \S_t\A\mPi_t^\top\in\R^{s_t\times \tilde O(s_t)}$, where $s_t=\tilde O(k_t)$ and $\S_t\in\R^{s_t\times n}$, $\mPi_t\in\R^{\tilde O(s_t)\times d}$ are both subspace embeddings, reducing dimensions on both sides of $\A$, with $\B_t$ constructed similarly.

\paragraph{Improved Conditioning Dependence.}
Due to the modified recursion structure in the primal-dual chain, we are able to extend our solver even further, to cover all $(k,1/2)$-well-conditioned systems. To see why this is possible, observe that unlike in the primal case, the above preconditioning chain reduces both of the matrix dimensions as it progresses, going from a $\tilde O(k_t)\times \tilde O(k_t)$ matrix $\A_t$ to a $\tilde O(k_{t+1})\times \tilde O(k_{t+1})$ matrix $\A_{t+1}$. So,  the cost incurred by the solver along the $t$th level of preconditioning can be written as:
\begin{align*}
    \tilde O\Big(\exp\big(\tilde O(t)\big)\runtime_{\A_{t-1}}\sqrt{\nu_{t}/\sigma_d^2}\Big) 
    = \tilde O\bigg(\exp\big(\tilde O(t)\big)
    k_{t-1}^2\sqrt{\frac1{k_{t}}\sum_{i>k_{t}}\frac{\sigma_i^2}{\sigma_d^2}} \, \bigg) 
    = \tilde O\bigg(\exp\big(\tilde O(t)\big)
    \Big(\frac{k_{t-1}}{k_{t}}\Big)^2 \sqrt{k_{t}^3\sum_{i>k_{t}}\frac{\sigma_i^2}{\sigma_d^2}}\,\bigg).
\end{align*}
The different exponent of $k_t$ in the square root term affects the trade-off with the spectral sum, and by extending the bound in \eqref{eq:ls-fine-bound}, we are able to show that this trade-off is captured by $\bar\kappa_{0,1/2}(\A)$ as follows:
\begin{align}
    \sqrt{k^3\sum_{i>k}\frac{\sigma_i^2}{\sigma_d^2}} \ \leq \bigg(\sum_{i \in [n]}\frac{\sigma_i^{1/2}}{\sigma_d^{1/2}}\,\bigg)^2 = d^2\bar\kappa_{0,1/2}(\A)\quad\text{ for all $k$.}\label{eq:overview-primal-dual-bound}
\end{align}
Using a similar recursion sequence $k_t = O(d\exp(-\alpha t^2))$ as in the previous section, we can balance the remaining two terms, $(k_{t-1}/k_t)^2$ and $\exp(\tilde O(t))$, so that they are both sub-polynomial, thus attaining an almost linear running time that scales with $\bar\kappa_{0,1/2}(\A)$. By stopping the recursion at a dimension $\tilde O(k)\times\tilde O(k)$ matrix, and restricting the conditioning dependence slightly, we further obtain the desired nearly linear running time of $\tilde O(\nnz(\A) + d^2\bar\kappa_{k,1/2+\eta}(\A)\log(d)^{O(1/\eta)})$ for any $\eta>0$, after $\tilde O(\nnz(\A)+k^\omega)$ time preprocessing, thus recovering Theorem \ref{t:ls-optimized} (see Theorem \ref{t:primal-dual-optimized} for a detailed statement).

\paragraph{Estimating Regularization Parameters.}
One remaining component to address is finding the right parameter values to attain those running times. Here, the main challenge lies with the regularization parameters $\nu_t \approx \frac1{k_t}\sum_{i>k_t}\sigma_i^2$. Since we do not have the singular values of $\A$, and there are too many parameters to perform a brute force grid search efficiently, we select the parameters by algorithmically verifying that the sequence $(\A_t,\B_t,\nu_t)_{t\in[T]}$ satisfies the approximation conditions associated with the primal-dual chain. 

We do this by induction, starting with $\nu_T$ (which we find by grid search). Then, having obtained an intermediate primal-dual chain $((\A_t,\nu_t),(\A_i,\B_i,\nu_i)_{i\in\{t+1,...,T\}})$, we find the next regularization parameter $\nu_t$ that satisfies $\B_t^\top\B_t+\nu_t\I\approx_4\A_{t-1}^\top\A_{t-1}+\nu_t\I$. To do this, we observe that the approximation condition can be verified by estimating the spectral norm of two matrices, because:
\begin{align}
    \A^\top\A+\nu\I\approx_c \B^\top\B+\nu\I\quad\Leftrightarrow\quad
    \|\A_{\nu}(\B^\top\B+\nu\I)^{-1}\A_{\nu}^\top\|\leq c\text{ and }\|\B_{\nu}(\A^\top\A+\nu\I)^{-1}\B_{\nu}^\top\|\leq c,\label{eq:overview-power}
\end{align}
where $\A_\nu = [\A^\top\mid \sqrt\nu\I]^\top$, and $\B_{\nu}=[\B^\top\mid\sqrt\nu\I]^\top$. Thus, in order to test the spectral approximation condition in the primal-dual chain, it suffices to be able to solve linear systems with $\A_{t-1}^\top\A_{t-1}+\nu\I$ and $\B_t^\top\B_t+\nu\I$, so that we can apply the power method for estimating the two spectral norms. Fortunately, this is precisely what the intermediate preconditioning chain enables us to do.  Iterating this procedure all the way to $\nu_0$, we recover the entire primal-dual chain needed for the algorithm. In order to formalize this scheme, in Section \ref{s:primal-dual-testing} we provide an analysis of the power method for approximating the spectral norm of a matrix given access only to approximate matrix-vector products with that matrix.

\subsection{Implicit Dual Solver for PD Systems}

Here we describe our approach to proving  \Cref{t:psd-optimized}. In order to take advantage of the PD structure when solving the system $\M\x=\b$ given $\M\in\PD{d}$, further insights are required and the key tools of the previous sections do not readily apply. While we ultimately follow a recursive scheme that is similar to our scheme for regression in its analysis, how we compute the low-rank approximation and how we precondition both change substantially. As we shall see, this leads to even more improved conditioning dependence for our PD solver (Theorem \ref{t:psd-optimized}) than is obtained for the regression task (see details in Section \ref{s:psd}).

\paragraph{Tool: Nystr\"om Approximation.} Since we only have access to $\M\in\PD{d}$ and not its square root or a factorization $\A^\top \A$, the previous approach of sketching ``on the inside'' does not apply. Consequently, we cannot readily compute a low-rank approximation in the same way we did. Prior works \cite{avron2017faster,frangella2021randomized,derezinski2024faster} get around this by instead performing the sketching ``on the outside'' via a generalized version of the so-called Nystr\"om method \cite{Williams01Nystrom}. To construct a Nystr\"om approximation, one uses an embedding matrix $\S\in\R^{s\times d}$ as follows:
\begin{align*}
    \tilde\M = \M\S^\top(\S\M\S^\top)^{-1}\S\M = \C\W^{-1}\C^\top,
\end{align*}
where $\C=\M\S^\top\in\R^{d\times s}$ is a tall matrix and $\W=\S\M\S^\top\in\R^{s\times s}$ is a small PD matrix. If we write $\M$ as a factorization $\A\A^\top$, then the Nystr\"om approximation can be written as $\tilde\M=\A\mathbf{P}_{\S\A}\A^\top$, where $\mathbf{P}_{\S\A} = (\S\A)^\top(\S\A(\S\A)^\top)^{-1}\S\A$ is the orthogonal projection onto the row-span of $\S\A$. Note that we are using a ``dual'' factorization $\A\A^\top$ for $\M$, as opposed to the ``primal'' factorization $\A^\top\A$. We do this so that, in each case, the leading dimension of $\A$ is the one being sketched. The intuition behind the Nystr\"om approximation is that if $\S\A$ yields a good primal low-rank approximation of $\A$, then the projection $\mathbf{P}_{\S\A}$ should retain most of the top part of $\M$'s spectrum.

\paragraph{Na\"ive Approach.} 
This primal-dual connection can be leveraged to use the Nystr\"om approximation as a preconditioner. Concretely, if $\A^\top\S^\top\S\A+\nu\I\approx_4\A^\top\A+\nu\I$, then $\|\M-\tilde\M\|\leq 4\nu$ and therefore,
\begin{align}
    \tilde\M + \nu\I \preceq \M +\nu\I \preceq 5(\tilde\M + \nu\I). \label{eq:nystrom}
\end{align}
Building on this, \cite{derezinski2024faster} observed that in the $k$-well-conditioned setting $\tilde\M+\nu\I$ serves as a useful crude preconditioner for $\M$.
Using a calculation similar to that in the regression setting, $\S \in \R^{\tilde O(k)\times d}$ can be constructed so that whp.\ the preconditioned system has relative condition number 
\begin{align*}
    O\Big(\frac{\nu}{\lambda_d(\M)}\Big) = O\bigg(\frac1k\sum_{i>k}\frac{\lambda_i(\M)}{\lambda_d(\M)}\bigg) = O\Big(\frac dk\bar\kappa_{k,1}(\M)\Big),
\end{align*}
where note that, when compared with \eqref{eq:overview-warm-up-naive}, we replaced $\bar\kappa_{k,2}^2$ by $\bar\kappa_{k,1}$ thanks to the PD structure of $\M$. 
Thus, using preconditioned Chebyshev, $\tilde O(\sqrt{\bar\kappa_{k,1}(\M)\cdot d/k})$ iterations are sufficient to solve for $\M$.
Once again, the crude preconditioner incurs a sub-optimal $\sqrt{d/k}$ factor which prevents this na\"ive approach from attaining nearly-linear time. However, it is not immediately clear how to build a regularized preconditioning chain out of Nystr\"om approximations, given that applying the Nystr\"om matrix $\tilde\M$ to a vector requires solving an unregularized PD linear system with $\W=\S\M\S^\top$.

\paragraph{Our Approach.} To address these challenges, we extend the Nystr\"om method into a black-box reduction for solving a large regularized PD system using a smaller regularized PD system. The first step in this reduction is an observation that we can still retain the Nystr\"om preconditioning guarantee \eqref{eq:nystrom} if we replace the inner PD matrix $\W=\S\M\S^\top$ with its regularized version, $\W_{\nu} = \S\M\S^\top+\nu\I$. Writing this in terms of the dual factorization, one might hope that we can use it to reduce from solving for
$\M_{t}+\nu_{t}\I = \A_{t}\A_{t}^\top+\nu_{t}\I$ to solving for a strictly smaller matrix $\M_{t+1}+\nu_{t+1}\I = \S_{t+1}\A_{t}\A_{t}^\top\S_{t+1}^\top+\nu_{t+1}\I$, which would offer the key recursion for a regularized preconditioning chain in this setting.

We show that this reduction is indeed possible, although it requires a careful multi-stage preconditioning procedure of its own. Let $\tilde\M=\C\W_{\nu_t}^{-1}\C^\top$ be our modified Nystr\"om approximation for $\M_{t}=\A_{t}\A_{t}^\top$ and let $\A_{t+1}=\S_{t+1}\A_{t}$. Our goal is to solve for $\M_{t}+\nu_t\I$ given only a solver for $\M_{t+1}+\nu_{t+1}\I$, where $\M_{t+1}=\A_{t+1}\A_{t+1}^\top$, using  $\tilde\M+\nu_t\I$ as an intermediate preconditioner. If all we needed was to apply  $\tilde\M$ to a vector, then we would have the desired result, since $\W_{\nu_t} = \M_{t+1}+\nu_t\I$. However, we need to solve linear systems in the Nystr\"om preconditioner $\tilde\M+\nu_t\I$; for this, we apply the Woodbury matrix identity, to obtain:
\begin{align*}
    (\tilde\M+\nu_t\I)^{-1} = \frac1{\nu_t}\big(\I - \C(\C^\top\C+\nu_t\W_{\nu_t})^{-1}\C^\top\big).
\end{align*}
It remains to address solving $\C^\top\C+\nu_t\W_{\nu_t}$, which we do by showing that $\C^\top\C+\nu_t\W_{\nu_t}\approx_{O(1)} (\W_{\nu_t/2})^2$, essentially as a byproduct of \eqref{eq:nystrom}.
Thus, overall, we solve $\M_{t}+\nu_t\I$ by preconditioning with $\tilde\M+\nu_t\I$, which we solve by inverting $\C^\top\C+\nu_t\W_{\nu_t}$, which we do by preconditioning with $(\W_{\nu_t/2})^2$, which can be done by preconditioning twice with $\W_{\nu_{t+1}}=\M_{t+1}+\nu_{t+1}\I$. All of these preconditioners are constant factor approximations of the corresponding matrices, except for the last one which requires $\tilde O(\sqrt{\nu_{t+1}/\nu_t})$ iterations.

Fortunately, this somewhat elaborate reduction can be treated black box as implementing a recursive preconditioning scheme for $\M_t+\nu_t\I\in\R^{\tilde O(k_t)\times \tilde O(k_t)}$ with $\M_{t+1} = \S_{t+1}...\S_1\M_0\S_1^\top...\S_{t+1}^\top=\S_{t+1}\M_{t}\S_{t+1}^\top$, to solve $\M_0=\M$. What remains is to tune the decreasing sizes $k_t$ and increasing regularizations $\nu_t$. 

\paragraph{Improved Conditioning Dependence.} 
The computational cost analysis follows similarly for the PD solver as in the primal-dual regression solver, since here again we are reducing both dimensions as we go from $\M_t$ to $\M_{t+1}$ at each step of the recursion. However, due to the implicit dual preconditioning, the spectral sum bounds that arise are based on the singular values of $\A$, which correspond to the square roots of the eigenvalues of $\M=\A\A^\top$. This means that the averaged condition number from \eqref{eq:overview-primal-dual-bound}, when converted from $\A$ to $\M$ yields an even better conditioning dependence, since:
\begin{align*}
    \bar\kappa_{k,1/2}(\A) = \bigg(\sum_{i>k}\frac{\sigma_i(\A)^{1/2}}{\sigma_d(\A)^{1/2}}\bigg)^2 = \bigg(\sum_{i>k}\frac{\lambda_i(\M)^{1/4}}{\lambda_d(\M)^{1/4}}\bigg)^2 = \sqrt{\bar\kappa_{k,1/4}(\M)}.
\end{align*}
Thus, following the same reasoning as in Section \ref{s:overview-primal-dual}, again letting $k_t = O(d\exp(-\alpha t^2))$ with the factor $\alpha$ adjusted for the implicit dual preconditioning scheme, we obtain the nearly linear running time of $\tilde O(d^2\sqrt{\bar\kappa_{k,1/4+\eta}(\M)}\log(d)^{O(1/\eta)})$ for any $\eta>0$ after $\tilde O(\nnz(\A)+k^\omega)$ time preprocessing, recovering Theorem \ref{t:psd-optimized} (see Theorem \ref{t:dual-main-optimized} for a detailed statement). Similarly to the primal-dual case, we rely on an inductive approach based on inexact power method to find the appropriate regularization parameters $\nu_t$ during runtime. Here, the implementation requires additional care, since we do not have direct access to matrices $\A_t$ but rather only to $\M_t$ (details in \Cref{s:dual-condition-tester}).

\subsection{Applications: Spectrum Approximation}

Finally, we discuss how our linear solvers imply faster algorithms for a range of spectrum approximation tasks, in particular leading to the  nearly linear time algorithm for nuclear norm approximation (Theorem~\ref{t:nuclear}). Our approach is primarily based on the algorithmic framework of \cite{musco2018spectrum} which studies algorithms for finding approximate histograms of the singular value distribution of an arbitrary matrix $\A\in\R^{n\times d}$. They show that, given access to a linear solver for $\A^\top\A+\lambda\I$ with an appropriate value of $\lambda$, we can use rational polynomial functions and trace estimation techniques to efficiently count singular values in a specific range. This tool can then be used to efficiently approximate various functions of the singular values, including the Schatten, Ky Fan, and Orlicz norms, among others. 

Since the linear systems that arise as subroutines in their framework are regularized, they exhibit properties similar to those of $k$-well-conditioned systems. So, it is natural to ask what specific conditioning property best fits into this model. Remarkably, as a by-product of their analysis, we are able to show that our notion of a $(0,p)$-well-conditioned system \emph{precisely} corresponds to the linear systems that arise in Schatten $p$-norm approximation (\Cref{c:schatten-reduction}). This follows since, to produce a constant-factor approximation of $\|\A\|_p^p$, we need to solve a linear system with regularizer $\lambda=\Omega\big((\frac1d\|\A\|_p^p)^{2/p}\big)$. Letting $\A_\lambda$ denote the matrix $\A$ augmented by $\sqrt\lambda\I_d$ so that $\A_\lambda^\top\A_\lambda=\A^\top\A+\lambda\I$, this gives:
\begin{align*}
    \bar\kappa_{0,p}(\A_\lambda) = \bigg(\frac1d\sum_{i \in [d]}\frac{\sigma_i^p}{(\sigma_d+\sqrt{\lambda})^p}\bigg)^{1/p} \leq \frac{\|\A\|_p}{\sqrt\lambda d^{1/p}} = O(1).
\end{align*}
This implies a reduction from Schatten $p$-norm approximation to solving a $(0,p)$-well-conditioned system $\tilde O(1)$ times. Applying our primal-dual regression solver with $p=1$ (Theorem \ref{t:ls-optimized}) as a subroutine in this framework recovers the nuclear norm approximation guarantee from Theorem \ref{t:nuclear}. Similar nearly linear time guarantees follow for all Schatten norms with $p>1/2$ (or $p>1/4$ when $\A$ is~PD). On the other hand, when $p\in(0,1/2)$ (for general $\A$), then we can still use our solver, but it will require larger $k$ to compensate for the fact that $\bar\kappa_{k,1/2}$ is bigger than constant. Choosing larger $k$ makes $\bar\kappa_{k,1/2}$ progressively smaller, but we also incur the preprocessing cost of $\tilde O(k^\omega)$. Optimizing over this trade-off leads to an $\tilde O\big((\nnz(\A)+d^{2+(\omega-2)(1+\frac{p\omega}{0.5-p})^{-1}+o(1)})\poly(1/\epsilon)\big)$ running time for any fixed $p\in(0,1)$, which is the new state-of-the-art time complexity for Schatten $p$-norm approximation in this regime (see~Section~\ref{s:spectrum}).

\addtocontents{toc}{\protect\setcounter{tocdepth}{2}}
\section{Preliminaries}
\label{s:prelim}
\paragraph{Notation.}
For integer $k$, $[k] \defeq \{1,\ldots,k\}$. We let $\log(\cdot)$ denote that natural logarithm (base $e$) and write $\log_c(\cdot)$ when the base is instead $c$. We let $\SM{d}$ denote the set of $d \times d$ real-valued symmetric matrices, $\PSD{d}$ denote the set of $d \times d$ real-valued symmetric positive semi-definite (PSD) matrices, and $\PD{d}$ denote the set of $d \times d$ real-valued symmetric positive definite (PD) matrices. For $\M \in \SM{d}$, we let $\lambda_1(\M)\geq\lambda_2(\M)\geq ...\geq\lambda_d(\M)$ denote its eigenvalues, and for $\M \in \PSD{d}$, we let $\|\x\|_\M =\sqrt{\x^\top\M\x}$. Similarly, for $\A\in\R^{n\times d}$ with $n\geq d$ we let $\sigma_1(\A)\geq \sigma_2(\A)\geq ...\geq \sigma_d(\A)$ denote its singular values. For $\M \in \SM{d}$ and $\A\in\R^{n\times d}$ with $n\geq d$ we denote the normalized total tail of eigenvalues and squared singular values as 
\[
\lambAvg_k(\mm) \defeq \frac{1}{k} \sum_{i > k} \lambda_i(\mm)
\quad\text{ and }\quad\sigAvg_k(\ma) \defeq \frac{1}{k} \sum_{i>k}\sigma_i(\A)^2\,,
\quad\text{ respectively.}
\]
Also, we define $\kappa(\A) \defeq \sigma_{1}(\A)/\sigma_{d}(\A)$ as the condition number of $\A$.
For $\A,\B \in \PSD{d}$, we use $\A\preceq \B$ to denote the Loewner ordering, and we use $\A\approx_{c}\B$ to denote the pair of inequalities $c^{-1}\B\preceq\A\preceq c\B$. We use $\|\v\|$ to denote the Euclidean norm of a vector, $\|\A\|$ 
to denote the spectral norm of a matrix, and $\|\A\|_F$ to denote  the Frobenius norm of a matrix.

\paragraph{Asymptotic Notation.}
$\otilde(T)$ denotes $O(T \cdot \poly \log(n+d))$. 
We use the term ``with high probability'' (whp.) in reference to the properties of an algorithm if it has the property with probability $1 - \frac{1}{(n+d)^c}$ for any $c$ that can be obtained by changing the constants hidden in the $O(\cdot)$ notation describing the algorithm's running time.
When $\log(\cdot)$ is used inside $O(\cdot)$ notation we replace the input argument with the maximum of the argument and $e^2$, e.g., so $\otilde(n^2 \log^c(\kappa(\ma)))$ runtimes do not convey faster than $\otilde(n^2)$ time algorithms as $\kappa(\ma) \rightarrow 1$ or fail to grow as $c$ grows. 
We use $\nnz(\ma)$ to denote the number of nonzero entries in $\ma \in \R^{n \times d}$ and when $\nnz(\ma)$ appears in $O(\cdot)$ notation we add $n + d$ to it to simplify runtime bounds.

\paragraph{Numerical Precision.} In this paper, for simplicity, we perform our analysis in the real-RAM model of computation
where arithmetic operations are performed exactly. Showing that a variant of our algorithms obtain comparable running times when arithmetic operations are performed approximately is a natural question to improve this work. Prior work on numerical stability and bit precision of the types of methods we use in this paper \cite{demmel2007fast,greenbaum1989behavior,musco2018stability, Peng18} and the fact that our algorithms rely on matrix approximations and preconditioned AGD with error in the preconditioner gives hope that such an improvement may be possible. 

\paragraph{Runtimes and Solvers.} In the remainder of the preliminaries we give facts regarding linear system solvers and matrix sketches that we use throughout the paper.

\begin{definition}[PD Solvers]
We call $\widehat{\x} \in \R^d$ an \emph{$\epsilon$-(approximate )solution} to $\M \x = \b$ for $\M \in \PD{d}$ and $\b \in \R^{d}$ if $\norm{\widehat{\x}-\mm^{-1}\b}_{\mm}^{2}\leq\epsilon\norm \b_{\mm^{-1}}^{2}$. 
We call $f:\R^{d}\rightarrow\R^{d}$
an\emph{ $\epsilon$-solver for $\mm\in \PD{d}$} if for all $\b \in \R^d$, $f(\b)$ is an $\epsilon$-solution to $\M \x = \b$.
\end{definition}

We use $\runtime_f$ to denote the runtime upper-bound for an algorithm $f$, and for a matrix $\A\in\R^{n\times d}$, we use $\runtime_{\A}$ to denote the maximum cost of applying $\A$ or $\A^\top$ to a vector, with an added $n+d$.
We assume that when $f$ is a solver for $\mm \in \R^{d \times d}$ then $\runtime_f = \Omega(d)$ to simplify our expressions, and when $\epsilon=\Theta(1)$, we say that it is constant accuracy solver.

\begin{lemma}[Preconditioned AGD; Adapted from Theorem 4.4 \cite{js_talg}\footnote{This lemma specializes Theorem 4.4 of \cite{js_talg} to deterministic solvers. Also note that Theorem 4.4 \cite{js_talg} did not provide an explicit iteration count, but the one stated follows from their proof and associated pseudocode.}]
\label{lem:precon} For any $\mm,\mn \in \PD{d}$ with $\mm\preceq\mn\preceq\kappa\mm$, $\kappa\geq1$, and $f$ that is a $\frac{1}{10\kappa}$-solver
for $\mn$ there is procedure, $\precon_{\mm,f,\kappa,\epsilon}(\cdot)$, that is an $\epsilon$-solver for $\mm$ that applies $f$ and $\mm$ at most $\lceil 4 \sqrt{\kappa}\log(2/\epsilon)\rceil$ times each, and uses $O(n \sqrt{\kappa} \log(1/\epsilon))$ additional time to run.
\end{lemma}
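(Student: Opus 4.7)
The plan is to recast solving $\mm\x = \b$ as minimizing the strongly convex quadratic $q(\x) = \frac{1}{2}\x^\top\mm\x - \b^\top\x$, and apply Nesterov's accelerated gradient descent in the preconditioned variable $\y = \mn^{1/2}\x$. The hypothesis $\mm \preceq \mn \preceq \kappa\mm$ translates to $\mn^{-1/2}\mm\mn^{-1/2}$ having spectrum in $[1/\kappa, 1]$, i.e.\ the transformed quadratic $\tilde q(\y) = \frac{1}{2}\y^\top(\mn^{-1/2}\mm\mn^{-1/2})\y - (\mn^{-1/2}\b)^\top \y$ is $1$-smooth and $(1/\kappa)$-strongly convex. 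Rewriting each AGD update back in $\x$-coordinates, the $\mn^{\pm 1/2}$ factors only appear as consecutive pairs, so the iteration needs to apply $\mm$ and $\mn^{-1}$ to vectors but never their square roots.

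Next I would invoke the standard potential-function analysis for Nesterov-AGD on smooth strongly convex quadratics: tracking a Lyapunov function combining the function gap $\tilde q(\y_t) - \min \tilde q$ and a quadratic in the momentum variable, one shows it contracts by a factor of $1 - \Omega(1/\sqrt{\kappa})$ per step, yielding an $\epsilon$-solution in $T = \lceil 4\sqrt{\kappa}\log(2/\epsilon)\rceil$ iterations when the preconditioner is applied exactly. Each iteration does one application of $\mm$, one application of $\mn^{-1}$, and $O(d)$ additional vector arithmetic (totalling $O(d\sqrt\kappa\log(1/\epsilon))$ additional time, matching the statement's $O(n\sqrt\kappa\log(1/\epsilon))$ overhead).

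The main obstacle, and the content of the adapted \cite{js_talg} analysis, is replacing the exact action of $\mn^{-1}$ with the inexact solver $f$ without losing acceleration. The approach is to exploit that a $\tfrac{1}{10\kappa}$-solver $f$ guarantees $\|f(\v) - \mn^{-1}\v\|_\mn^2 \leq \tfrac{1}{10\kappa}\|\v\|_{\mn^{-1}}^2$, which directly upper-bounds the perturbation to the preconditioned gradient used inside AGD in the natural $\mn$-norm. One then shows that these per-iteration perturbations add at most a controlled increment to the Lyapunov function each step; because there are only $O(\sqrt\kappa\log(1/\epsilon))$ steps and the contraction rate is $1 - \Omega(1/\sqrt\kappa)$, summing the perturbations geometrically yields an accumulated error within a constant factor of the exact-case error, so the overall procedure remains an $\epsilon$-solver. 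Acceleration is famously fragile under inexact gradients, so the delicate point is choosing the per-call accuracy of $f$ exactly at the $\Theta(1/\kappa)$ scale required to balance the momentum amplification against the error absorption in the potential.
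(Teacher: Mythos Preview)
The paper does not actually prove this lemma: it is stated as a black-box import from \cite{js_talg} (Theorem~4.4 there), with only a footnote noting that the explicit iteration count can be read off from their proof and pseudocode. So there is no ``paper's own proof'' to compare against.

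That said, your sketch is the standard route and matches what one expects the cited result to do: change variables to make the quadratic $1$-smooth and $1/\kappa$-strongly convex, run Nesterov's AGD, observe that in the original coordinates each step needs one application of $\mm$ and one approximate solve with $\mn$, and then carry out the inexact-gradient analysis showing that a $\Theta(1/\kappa)$-accurate solve per step is enough for the potential to still contract at rate $1 - \Omega(1/\sqrt{\kappa})$. One small note: your claim that ``the $\mn^{\pm 1/2}$ factors only appear as consecutive pairs'' is not literally true of the textbook AGD iteration once you substitute $\y = \mn^{1/2}\x$ (you get terms like $\mn^{-1/2}$ acting on the momentum variable), but it is true that the resulting iteration can be rewritten so that only $\mn^{-1}$ (not $\mn^{-1/2}$) is ever applied, by tracking the $\x$-variables directly; this is exactly what the ``associated pseudocode'' in \cite{js_talg} does.
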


\begin{definition}[$(\epsilon,\nu)$-embedding]
    We call $\S\in\R^{s\times n}$ an \emph{$(\epsilon,\nu)$-(regularized )embedding  for $\A \in \R^{n \times d}$} if
    \begin{align*}
         \A^\top\S^\top\S\A+\nu\I&\approx_{1+\epsilon}\A^\top\A+\nu\I.
    \end{align*}    
\end{definition}

\begin{definition}[Oblivious subspace embeddings]
    A random matrix $\S\in \R^{s\times n}$ has $(\epsilon,\delta,d,\ell)$-OSE moments if for all matrices $\U\in\R^{n\times d}$ with orthonormal columns, $\E\|\U^\top\S^\top\S\U-\I_d\|^\ell<\epsilon^\ell\delta$.
\end{definition}
 Note that, via a simple application of Markov's inequality, if $\S$ has $(\epsilon,\delta,d,\ell)$-OSE moments for $\epsilon,\delta\in(0,1/2)$, then it is a $(2\epsilon,\nu)$-embedding for any $\A\in\R^{n\times d}$ and any $\nu\geq 0$.
 
\begin{definition}[Sparse embedding matrices]\label{d:sketching}
We say that a random matrix $\S\in\R^{s\times n}$ is a sparse embedding matrix with sketch size $s$ and $b$ non-zeros per column, if it has independent columns, and each column consists of $b$ random $\pm1/\sqrt b$ entries placed uniformly at random without replacement.
\end{definition}
In the following lemma, we cite one of many standard guarantees for sparse oblivious subspace embeddings, e.g., \cite{nn-sparse,cohen2016nearly,chenakkod2023optimal}. We choose the one which gives the optimal dependence on the sketch size $s$ to illustrate how one could optimize the construction of our recursive schemes.
\begin{lemma}[\cite{chenakkod2024optimal}]\label{l:ose-sparse}
    For $\epsilon,\delta\in(0,1/2)$, a sparse embedding matrix $\S$ with sketch size $s=O((d+\log(1/\delta\epsilon))/\epsilon^2)$ and $b=O(\log^2(d/\delta\epsilon)/\epsilon+\log^3(d/\delta\epsilon))$ non-zeros per column has $(\epsilon,\delta,d,16\log(d/\delta\epsilon))$-OSE moments.
\end{lemma}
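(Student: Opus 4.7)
The plan is to prove the OSE-moment bound by the trace moment method, following the approach developed in the OSNAP analysis of Nelson--Nguyen and refined in subsequent work culminating in \cite{chenakkod2024optimal}. Let $\M := \U^\top \S^\top \S \U - \I_d$ for $\U \in \R^{n \times d}$ with orthonormal columns. Since $\M$ is symmetric, I would first use the standard bound $\E\|\M\|^\ell \leq \E\,\tr(\M^\ell)$ (taking $\ell$ to be an even integer, ensured by rounding up) to reduce the spectral $\ell$-th moment to a trace moment, so the task becomes bounding $\E\,\tr(\M^\ell)$ by $\epsilon^\ell \delta$ for $\ell = \Theta(\log(d/\delta\epsilon))$.

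To expand the trace, I would decompose each entry of $\S$ as $S_{ji} = b^{-1/2}\eta_{ji}\sigma_{ji}$, where $\eta_{ji} \in \{0,1\}$ indicates whether position $j$ is among the $b$ nonzero rows chosen for column $i$, and $\sigma_{ji} \in \{\pm 1\}$ is an independent Rademacher sign. Expanding $\tr(\M^\ell)$ into a sum over length-$\ell$ tuples of index pairs $(i_t, j_t)$ realizes it as a sum over closed walks on a bipartite ``wiring diagram'' whose left vertices are column indices $i \in [n]$ and right vertices are row indices $j \in [s]$. Because of the Rademacher signs, only walks in which every distinct $(i,j)$ edge is traversed an even number of times contribute in expectation; subtracting off $\I_d$ further eliminates the ``trivially paired'' contributions, leaving the genuinely off-diagonal walk structures.

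Next, I would group the surviving walks by their combinatorial type: the joint partition of the $\ell$ steps induced by coincidences of column indices and of row indices. For each type, I would factor the expectation into a $\U$-dependent part, controlled by orthonormality identities such as $\sum_i \|U_{i,:}\|_2^2 = d$ and $\|\U^\top \U\| = 1$, and an $\S$-dependent part, bounded by the row-sparsity budget $b$ and the sketch size $s$. I would split the analysis into a \emph{heavy} regime handling rows of $\U$ with large $\ell_\infty$ mass (this is what forces the $\log^3(d/\delta\epsilon)$ term in $b$, needed to tame rare but concentrated collisions) and a \emph{light} regime handling the spread-out contributions (which forces the $d/\epsilon^2$ term in $s$ and the $\log^2(d/\delta\epsilon)/\epsilon$ term in $b$). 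Summing over all walk types and applying Markov's inequality at the end would yield the claimed $(\epsilon,\delta,d,16\log(d/\delta\epsilon))$-OSE moment guarantee.

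The main obstacle is the combinatorial enumeration in the third step. Naive counting of walk isomorphism classes loses factors of $\log d$, which would inflate the sketch size from the optimal $O(d/\epsilon^2)$ to $O(d\log d/\epsilon^2)$. Attaining the optimal dimension requires a tight type-by-type accounting that matches the multiplicity of each walk type against the $1/b$ savings from sparsity and the $1/s$ savings from the sketch dimension, balancing them separately in the heavy and light regimes. This delicate matching is the heart of the \cite{chenakkod2024optimal} refinement over previous OSNAP-style bounds, and executing it while simultaneously producing the exact $\epsilon$- and $\log$-dependencies stated for $b$ (in particular the $\log^3$ term from the heavy regime) is the technically most demanding part of the argument.
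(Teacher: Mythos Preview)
The paper does not prove this lemma: it is quoted directly from \cite{chenakkod2024optimal} as a black-box input to the preconditioning framework, with no argument given here. So there is no ``paper's own proof'' to compare your proposal against.

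That said, your sketch is a fair high-level description of the line of argument in the cited work and its predecessors (Nelson--Nguyen, Cohen): reduce to $\E\,\tr(\M^\ell)$, expand over even-multiplicity bipartite walks, and carry out a careful combinatorial enumeration to avoid losing the $\log d$ factor in the sketch size. You correctly flag that the enumeration is where all the difficulty lies, and indeed what you have written is a plan rather than a proof---the heavy/light split and the precise accounting that pins down the $\log^2/\epsilon + \log^3$ dependence in $b$ are asserted but not executed. For the purposes of this paper, the intended ``proof'' is simply the citation, so no further work is needed here.
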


\section{Warm-up: Recursive Regression Solver}
\label{s:ls}

In this section, we provide our simplest recursive preconditioning method for solving regression, and more broadly, solving linear systems in $\A^\top \A$ given $\A\in\R^{n \times d}$. Leveraging this solver, we prove \Cref{t:ls}. While the approach presented here does not fully recover our main algorithmic result for regression (\Cref{t:ls-optimized}), we use it to introduce many of the key ideas necessary for the proof of that result (which is in \Cref{s:primal-dual}).

\subsection{Reduction Tools}
\label{sec:primal:tools}

Here we provide our tools for low-rank approximation that we leverage to develop our regression solver.

\begin{lemma}\label{l:reg-approx}
    Given $k\leq d$, $\epsilon\in(0,1)$, $\A \in \R^{n \times d}$, suppose that $\S\in\R^{s\times n}$ has $(\epsilon,\delta,2k,\ell)$-OSE moments for some $\ell\geq 2$.
    Then, with probability $1-\delta$, $\S$ is an $(6\epsilon,\nu)$-embedding for $\A$ with $\nu = \sigAvg_k (\ma)$.
\end{lemma}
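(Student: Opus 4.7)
The plan is to reduce the embedding condition to a spectral-norm bound and split according to the SVD of $\ma$. By the standard equivalence $\M\approx_{1+c}\N \iff \|\N^{-1/2}\M\N^{-1/2} - \mI\|\le c$ for $\M,\N\in\PD{d}$ (applied to $\M = \ma^\top\mS^\top\mS\ma + \nu\mI$ and $\N = \ma^\top\ma + \nu\mI$), it suffices to prove $\|\mb^\top(\mS^\top\mS-\mI)\mb\|\le 6\epsilon$, where $\mb \defeq \ma(\ma^\top\ma+\nu\mI)^{-1/2}$. The central numerical fact for $\nu=\sigAvg_k(\ma)$ is the inequality $\sigma_{2k+1}(\ma)^2 \le \nu$, because $k\,\sigma_{2k+1}^2 \le \sum_{i=k+1}^{2k}\sigma_i^2 \le k\nu$; this is what lets a $2k$-dimensional OSE suffice, by anchoring a head/tail split at index $2k$.

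Using the SVD of $\ma$, I split $\mb = \mb_H + \mb_T$ into the top-$2k$ and remaining singular components. A direct computation gives $\|\mb_H\|\le 1$ (with left singular vectors lying in $\mU_H\in\R^{n\times 2k}$, the top $2k$ left singular vectors of $\ma$), $\|\mb_H\|_F^2 \le 2k$, $\|\mb_T\|^2 \le \sigma_{2k+1}^2/(\sigma_{2k+1}^2+\nu) \le 1/2$, and $\|\mb_T\|_F^2 \le \nu^{-1}\sum_{i>2k}\sigma_i^2 \le k$. Expanding $\mb^\top(\mS^\top\mS-\mI)\mb$ as four bilinear blocks in $\{\mb_H,\mb_T\}$ and noting that the two cross blocks are transposes of each other, the target reduces to bounding $\|\mb_H^\top(\mS^\top\mS-\mI)\mb_H\| + 2\|\mb_H^\top(\mS^\top\mS-\mI)\mb_T\| + \|\mb_T^\top(\mS^\top\mS-\mI)\mb_T\|$. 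The head-head block is handled directly: factor $\mb_H = \mU_H\C_H$ with $\|\C_H\|\le 1$ and apply the $(\epsilon,\delta,2k,\ell)$-OSE moments property to $\mU_H$; combined with Markov's inequality this gives $\|\mU_H^\top\mS^\top\mS\mU_H - \mI_{2k}\|\le \epsilon$ with probability at least $1-\delta$, hence a head-head contribution of at most $\epsilon$.

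For the cross and tail-tail blocks, the plan is to invoke an AMM bound of the form $\|\M^\top(\mS^\top\mS-\mI)\N\| \le \epsilon(\|\M\|\|\N\| + \|\M\|_F\|\N\|_F/(2k))$ which follows from the $(\epsilon,\delta,2k,\ell)$-OSE moments property on the same high-probability event. Substituting the estimates above, the tail-tail block is bounded by $\epsilon(1/2 + k/(2k)) = \epsilon$, and each cross block by $\epsilon(1/\sqrt 2 + \sqrt{2k}\cdot\sqrt k/(2k)) = \epsilon\sqrt 2 \le 2\epsilon$. Summing the four contributions yields $\epsilon + 2\cdot 2\epsilon + \epsilon = 6\epsilon$, as required.

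The main obstacle is establishing the spectral-norm AMM bound above from the $\ell$-th OSE moment condition. Since the column space of $\mb_T$ may have dimension much larger than $2k$, the OSE guarantee cannot be applied directly to $\mb_T$; the plan is to derive the AMM bound either by reducing to a Frobenius-norm AMM (which follows cleanly from the $\ell$-th moment assumption) combined with a stable-rank argument for $\mb_T$, or via a matrix-Bernstein-type concentration built from the $\ell$-th moment condition. Tracking the numerical constants through the four blocks is what pins down the exact factor of $6\epsilon$ claimed in the lemma.
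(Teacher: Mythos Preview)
Your approach is in the right spirit but takes an unnecessary detour. The paper's proof is much shorter: it applies Theorem~1 of \cite{cohen2016optimal} (the spectral-norm approximate-matrix-product bound that follows from $(\epsilon,\delta,2k,\ell)$-OSE moments) \emph{directly} to the single matrix $\mb = \A(\A^\top\A+\nu\I)^{-1/2}$, using only that $\|\mb\|\le 1$ and $\|\mb\|_F^2 \le 2k$. This yields $\|\mb^\top(\S^\top\S-\mI)\mb\|\le \epsilon(\|\mb\|^2 + \|\mb\|_F^2/k) \le 3\epsilon$ in one line, and the conversion to a $(1+6\epsilon)$-embedding is immediate. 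There is no head/tail split, no separate treatment of cross terms, and no obstacle: the AMM inequality you single out as the ``main obstacle'' is precisely the theorem the paper cites as a black box, and once you have it, applying it to $\mb$ as a whole dominates your block-by-block analysis.

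Two smaller issues with your write-up. First, the ``standard equivalence'' you invoke, $\M\approx_{1+c}\N \iff \|\N^{-1/2}\M\N^{-1/2}-\mI\|\le c$, is not an equivalence in the direction you need: the bound $\|\N^{-1/2}\M\N^{-1/2}-\mI\|\le c$ only gives $\M\approx_{1/(1-c)}\N$, so targeting $c=6\epsilon$ does not deliver a $(6\epsilon,\nu)$-embedding. You need the tighter bound $c\le 3\epsilon$ (as the paper obtains), since $1/(1-3\epsilon)\le 1+6\epsilon$ for $\epsilon\le 1/6$. Second, the specific AMM form you write, $\epsilon(\|\M\|\|\N\| + \|\M\|_F\|\N\|_F/(2k))$, is not the standard statement of Cohen's result; the usual version is $\epsilon\sqrt{(\|\M\|^2+\|\M\|_F^2/k)(\|\N\|^2+\|\N\|_F^2/k)}$, which neither implies nor is implied by your form in general. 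For the symmetric case $\M=\N=\mb$ this distinction disappears, which is another reason the paper's direct route is cleaner.
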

\begin{proof}
Let $\sigma_i \defeq \sigma_i(\ma)$.
If $\S$ has $(\epsilon,\delta,2k,\ell)$-OSE moments, then we will use a standard approximate matrix product guarantee \cite{cohen2016optimal} applied to
$\B=\A(\A^\top\A+\nu\I)^{-1/2}$. Namely, Theorem 1 of \cite{cohen2016optimal} implies that $\S$ satisfies:
\begin{align}
  \|\B^\top\S^\top\S\B-\B^\top\B\|\leq \epsilon\big(\|\B\|^2+\|\B\|_F^2/k \big).\label{eq:B}
\end{align}
Observe that $\|\B\|\leq 1$. Also, if $k=d$, then $\nu=0$ and $\|\B\|_F^2/d=1$, whereas if $k<d$ then $\nu>0$ and:
\begin{align*}
\|\B\|_F^2=\tr(\A(\A^\top\A+\nu\I)^{-1}\A^\top)=\sum_i\frac{\sigma_i^2}{\sigma_i^2+\nu}\leq k + \frac1\nu\sum_{i>k}\sigma_i^2=2k,
\end{align*}
we obtain that the right-hand side of \eqref{eq:B} can be bounded by $3\epsilon$, which can be written as $-3\epsilon\I\preceq \B^\top\S^\top\S\B -\B^\top\B\preceq3\epsilon\I$. Multiplying all expressions by $(\A^\top\A+\nu\I)^{1/2}$ on each side, we get:
\begin{align*}
    -3\epsilon(\A^\top\A+\nu\I)\preceq\A^\top\S^\top\S\A-
  \A^\top\A&\preceq 3\epsilon(\A^\top\A+\nu\I),
\end{align*}
which in turn implies that $\A^\top\S^\top\S\A+\nu\I \approx_{1+6\epsilon}\A^\top\A+\nu\I$, as long as $\epsilon\leq 0.1$. 
\end{proof}
\begin{corollary}\label{c:fast-embedding}
    Given $k\leq d$, $\epsilon\in(0,1)$, $\A\in\R^{n\times d}$, in $O(\nnz(\A)\log^3(k/\delta)/\epsilon)$ time we can compute $\S\A\in\R^{s\times d}$ such that $s=O((k+\log(1/\delta\epsilon))/\epsilon^2)$ and with probability $1-\delta$, $\S$ is an $(\epsilon,\sigAvg_k (\ma))$-embedding for $\A$.
\end{corollary}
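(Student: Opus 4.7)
The plan is to obtain this corollary by instantiating Lemma~\ref{l:ose-sparse} with the sparse embedding construction of Definition~\ref{d:sketching} and then invoking Lemma~\ref{l:reg-approx} to upgrade the oblivious subspace embedding (OSE) moment guarantee into the regularized spectral-approximation guarantee that defines an $(\epsilon,\nu)$-embedding.

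Concretely, the first step is to set the OSE distortion parameter to $\epsilon' = \epsilon/6$ and apply Lemma~\ref{l:ose-sparse} with subspace dimension $2k$ and failure probability $\delta$. This yields a sparse embedding matrix $\S\in\R^{s\times n}$ with sketch size
\[
s = O\!\left(\frac{k + \log(1/(\delta\epsilon))}{\epsilon^2}\right)
\]
and $b = O(\log^3(k/(\delta\epsilon))/\epsilon)$ nonzeros per column (folding the $\log^2/\epsilon$ and $\log^3$ terms together using $\epsilon < 1$) having $(\epsilon/6,\delta,2k,\ell)$-OSE moments for $\ell=16\log(2k/(\delta\epsilon))$. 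The second step is to invoke Lemma~\ref{l:reg-approx} directly on this $\S$ with distortion parameter $\epsilon/6$ and regularization $\nu = \sigAvg_k(\ma)$; by the lemma's conclusion, with probability at least $1-\delta$, $\S$ is a $(6\cdot\epsilon/6,\nu) = (\epsilon,\sigAvg_k(\ma))$-embedding for $\A$, giving the claimed approximation.

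The remaining step is the runtime accounting. Since $\S$ has at most $b$ nonzeros per column, computing $\S\A$ reduces to, for each nonzero entry of $\A$, performing $b$ scalar multiplications and additions into the output matrix. This gives a total cost of $O(b \cdot \nnz(\A)) = O(\nnz(\A)\log^3(k/\delta)/\epsilon)$ time (absorbing the $\log(1/\epsilon)$ dependence inside the $\log^3$ factor in the standard way).

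There is essentially no technical obstacle here: the corollary is a direct combination of the previous two results, with the only care needed being (i) to choose the OSE parameters so that the factor-of-$6$ blowup in Lemma~\ref{l:reg-approx} still lands inside the target distortion $\epsilon$, (ii) to use subspace dimension $2k$ (not $k$) to match the hypothesis of Lemma~\ref{l:reg-approx}, and (iii) to simplify the resulting $b$ bound to the stated $O(\log^3(k/\delta)/\epsilon)$ form.
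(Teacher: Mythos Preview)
Your proposal is correct and follows essentially the same approach as the paper: instantiate Lemma~\ref{l:ose-sparse} for a sparse embedding with subspace dimension $2k$, then feed the resulting OSE moment guarantee into Lemma~\ref{l:reg-approx}. You are in fact more explicit than the paper about rescaling the distortion parameter to $\epsilon/6$ to absorb the factor-$6$ loss in Lemma~\ref{l:reg-approx}, and about the runtime accounting via the per-column sparsity $b$.
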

\begin{proof}
    The claim follows by using Lemma \ref{l:ose-sparse} to show that a sparse subspace embedding matrix $\S$ with $O(\log^3(k/\delta)/\epsilon)$ non-zeros per column has $(\epsilon,\delta,2k,\ell)$-OSE moments, and then applying Lemma \ref{l:reg-approx} to $\S$. 
\end{proof}

Next, we present a linear system solver for matrices which are described via a low-rank decomposition of the type that arises in the Woodbury formula. 
\begin{lemma}[Woodbury solver]\label{l:woodbury}
    Consider $\M = \C\W^{-1}\C^\top+\nu\I$, where $\nu>0$, $\W\in\PD{d}$, and $\C\in\R^{n\times d}$. Given an $\frac{\epsilon\nu^2}{\|\M\|^2}$-solver $f$ for $\C^\top\C+\nu\W$, consider $g(\b) = \frac1\nu(\b-\C\cdot f(\C^\top\b))$. Then, $g$ is an $\epsilon$-solver for $\M$ that applies $f$, $\C$, and $\C^\top$ once each, and takes $O(n)$ additional time.
\end{lemma}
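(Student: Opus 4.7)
The plan is to start from the Woodbury matrix identity applied to $\M = \nu\I + \C\W^{-1}\C^\top$, which yields
\begin{align*}
\M^{-1} \;=\; \frac{1}{\nu}\I - \frac{1}{\nu}\,\C(\C^\top\C + \nu\W)^{-1}\C^\top.
\end{align*}
Let $\N \defeq \C^\top\C + \nu\W$. This identity shows that the map $\b \mapsto \frac{1}{\nu}(\b - \C\cdot\N^{-1}\C^\top\b)$ is \emph{exactly} $\M^{-1}$, so if $f$ were an exact solver for $\N$, then $g$ would be an exact solver for $\M$. The proposed $g$ replaces $\N^{-1}\C^\top\b$ by $\hat{\y} \defeq f(\C^\top\b)$, and the runtime and oracle-count claims are then immediate from the definition of $g$.

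The core of the proof is the error analysis. Let $\y^\ast \defeq \N^{-1}\C^\top\b$. Since $g(\b) - \M^{-1}\b = -\tfrac{1}{\nu}\C(\hat\y - \y^\ast)$, I would compute
\begin{align*}
\|g(\b) - \M^{-1}\b\|_\M^2 \;=\; \frac{1}{\nu^2}(\hat\y-\y^\ast)^\top \C^\top \M \C (\hat\y-\y^\ast).
\end{align*}
The goal is now to turn this into (i) a factor of $\|\hat\y-\y^\ast\|_\N^2$, which is controlled by the solver guarantee for $f$, and (ii) a factor involving $\|\C^\top\b\|_{\N^{-1}}^2$, which can be related back to $\|\b\|_{\M^{-1}}^2$.

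For step (i), I would use the Loewner bound $\C^\top\M\C \preceq \|\M\|\cdot\C^\top\C \preceq \|\M\|\cdot\N$, which follows because $\C^\top\C \preceq \C^\top\C + \nu\W = \N$ (using $\nu\W\succeq 0$). This gives
\begin{align*}
\|g(\b) - \M^{-1}\b\|_\M^2 \;\leq\; \frac{\|\M\|}{\nu^2}\,\|\hat\y-\y^\ast\|_\N^2 \;\leq\; \frac{\|\M\|}{\nu^2}\cdot\frac{\epsilon\nu^2}{\|\M\|^2}\,\|\C^\top\b\|_{\N^{-1}}^2 \;=\; \frac{\epsilon}{\|\M\|}\,\|\C^\top\b\|_{\N^{-1}}^2,
\end{align*}
using the definition of an $\tfrac{\epsilon\nu^2}{\|\M\|^2}$-solver for $\N$ in the second inequality.

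For step (ii), I would reuse the Woodbury identity in the form $\C\N^{-1}\C^\top = \I - \nu\M^{-1}$, which gives
\begin{align*}
\|\C^\top\b\|_{\N^{-1}}^2 \;=\; \b^\top(\I - \nu\M^{-1})\b \;=\; \|\b\|^2 - \nu\|\b\|_{\M^{-1}}^2 \;\leq\; \|\M\|\cdot\|\b\|_{\M^{-1}}^2,
\end{align*}
where the final inequality uses $\I \preceq \|\M\|\M^{-1}$ (a direct consequence of $\M \preceq \|\M\|\I$) and drops the nonnegative subtracted term. Combining the two chains yields $\|g(\b)-\M^{-1}\b\|_\M^2 \leq \epsilon\|\b\|_{\M^{-1}}^2$, as required.

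The only mildly subtle step is the calibration of the solver accuracy: the $\|\M\|^2$ in the denominator of the required accuracy for $f$ is precisely what is needed to cancel the two factors of $\|\M\|$ that appear — one from bounding $\C^\top\M\C\preceq\|\M\|\N$, and one from relating $\|\C^\top\b\|_{\N^{-1}}^2$ back to $\|\b\|_{\M^{-1}}^2$. Once this calibration is identified, the rest is mechanical.
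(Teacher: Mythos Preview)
Your proposal is correct and follows essentially the same route as the paper's proof: both invoke the Woodbury identity to express $\M^{-1}$, bound $\C^\top\M\C \preceq \|\M\|\N$, apply the solver guarantee for $f$, and then control $\|\C^\top\b\|_{\N^{-1}}^2 \leq \|\b\|^2 \leq \|\M\|\,\|\b\|_{\M^{-1}}^2$. Your presentation is in fact slightly more explicit in justifying the Loewner bound $\C^\top\M\C \preceq \|\M\|\N$ and in unpacking the identity $\C\N^{-1}\C^\top = \I - \nu\M^{-1}$, but the argument is the same.
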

\begin{proof}
    Let $\y^* = (\C^\top\C+\nu\W)^{-1}\C^\top\b$, and let $\hat\y=f(\C^\top\b)$. Then, $\|\hat\y-\y^*\|_{\C^\top\C+\nu\W}^2\leq \tilde\epsilon\|\y^*\|_{\C^\top\C+\nu\W}^2$ where $\tilde\epsilon = \frac{\epsilon\nu^2}{\|\M\|^2}$. Using the Woodbury formula we have:
    \begin{align*}
        (\M+\nu\I)^{-1} = \frac1\nu\big(\I-\C(\C^\top\C+\nu\W)^{-1}\C^\top\big),
    \end{align*}
    which implies that our goal is to approximate $\x^* = \frac1\nu(\b-\C\y^*)$. We do this using $\hat\x=\frac1\nu(\b-\C\hat\y)=g(\b)$. First note that we have $\|\hat\x-\x^*\|_{\M}^2 = \frac1{\nu^2}\|\C(\hat\y-\y^*)\|_{\M}^2=\frac1{\nu^2}\|\hat\y-\y^*\|_{\C^\top\M\C}^2$. 
    Observe that $\C^\top\M\C\preceq \|\M\|(\C^\top\C+\nu\W)$. From this, it follows that:
\begin{align*}
    \|\hat\x-\x^*\|_{\M}^2
    &\leq \frac{\|\M\|}{\nu^2}\|\hat\y-\y^*\|_{\C^\top\C+\nu\W}^2
    \leq \frac{\|\M\|}{\nu^2}\tilde\epsilon\|\y^*\|_{\C^\top\C+\nu\W}^2
    = \frac{\epsilon}{\|\M\|} \|\b\|_{\C(\C^\top\C+\nu\W)^{-1}\C^\top}^2
    \leq \frac{\epsilon}{\|\M\|}\|\b\|^2,
    \end{align*}
    and observing that $\|\b\|^2=\|\M^{1/2}\M^{-1/2}\b\|^2\leq \|\M\|\|\b\|_{\M^{-1}}^2$  concludes the claim.
    \end{proof}

Using the above Woodbury solver, we construct an efficient solver for the bottom level of our recursion.
\begin{lemma}\label{l:msp}
There is an algorithm that given $\A \in \R^{s \times d}$ and $\nu > 0$ in $\otilde(\nnz(\A)\log^3(s/\delta) + (s+\log(1/\delta))^\omega)$ time with probability $1-\delta$ returns an $\epsilon$-solver, $p$, 
for $\A^\top \A + \nu\I$ with $\runtime_p = O((\nnz(\A) + s^2)\log(\kappa/\epsilon))$ for $\kappa=1+\|\A\|^2/\nu$.
\end{lemma}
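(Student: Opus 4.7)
The plan is to solve $\A^\top\A + \nu\I$ via two nested applications of the Woodbury reduction (\Cref{l:woodbury}). The outer application, with $\C = \A^\top$ and $\W = \I_s$, reduces the problem to solving the $s \times s$ system $\A\A^\top + \nu\I_s$. I would solve this reduced system with preconditioned AGD (\Cref{lem:precon}) using a sketched preconditioner, and invert the preconditioner exactly via a second (inner) Woodbury reduction to a small dense system that is Cholesky-factored once during preprocessing.

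To build the sketched preconditioner, I would apply \Cref{c:fast-embedding} to $\A^\top \in \R^{d \times s}$ with a small constant accuracy $\epsilon_0$ and parameter $k = s$. Since $\A^\top$ has at most $s$ nonzero singular values, $\sigAvg_s(\A^\top) = 0$, and the resulting sparse embedding $\mPi \in \R^{t \times d}$ with $t = O(s + \log(1/\delta))$ is an $(\epsilon_0, 0)$-embedding for $\A^\top$ with probability $1 - \delta$. Thus $\B \defeq \A\mPi^\top \in \R^{s \times t}$ satisfies $\B\B^\top + \nu\I_s \approx_{1+O(\epsilon_0)} \A\A^\top + \nu\I_s$ (adding the common term $\nu\I_s$ preserves the Loewner approximation). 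Computing $\B$ via the sparse-embedding product costs $O(\nnz(\A)\log^3(s/\delta))$ time.

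To invert $\B\B^\top + \nu\I_s$ exactly, I would invoke \Cref{l:woodbury} with $\C = \B$ and $\W = \I_t$, reducing to the $t \times t$ system $\B^\top\B + \nu\I_t$. During preprocessing I would form $\B^\top\B$ and its Cholesky factor in $O(t^\omega) = O((s+\log(1/\delta))^\omega)$ time via fast matrix multiplication; each subsequent solve then costs $O(t^2)$, so the inner Woodbury yields an exact solver for $\B\B^\top + \nu\I_s$ whose application takes $O(st + t^2) = O(s^2)$ time (absorbing the $\log(1/\delta)$ dependence, which is not tracked in $\runtime_p$).

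Finally, I would apply \Cref{lem:precon} with $\mm = \A\A^\top + \nu\I_s$ and $\mn$ equal to a constant scaling of $\B\B^\top + \nu\I_s$ chosen so that $\mm \preceq \mn \preceq \kappa_{\text{AGD}}\mm$ with $\kappa_{\text{AGD}} = O(1)$; the exact inner solver trivially satisfies the $\tfrac{1}{10\kappa_{\text{AGD}}}$-solver requirement. Each AGD iteration costs $O(\nnz(\A) + s^2)$ (one application of $\A$ and $\A^\top$ plus one preconditioner solve), and $O(\log(1/\epsilon'))$ iterations reach accuracy $\epsilon'$. Closing the outer Woodbury requires inner accuracy $\epsilon' = \epsilon\nu^2/\|\A^\top\A + \nu\I\|^2 = \epsilon/\kappa^2$ for $\kappa = 1 + \|\A\|^2/\nu$, yielding the claimed $O((\nnz(\A) + s^2)\log(\kappa/\epsilon))$ solver runtime. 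The main obstacle I anticipate is carefully threading the accuracy requirements through the two nested Woodbury layers, so that the $\|\M\|^2/\nu^2 = \kappa^2$ blow-up from the outer reduction contributes only logarithmically to the final iteration count rather than polynomially.
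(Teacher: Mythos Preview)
Your proposal is correct and follows essentially the same approach as the paper: an outer Woodbury reduction to $\A\A^\top + \nu\I_s$, a sparse-embedding preconditioner $\B\B^\top + \nu\I_s \approx_{O(1)} \A\A^\top + \nu\I_s$, preconditioned AGD, and the inner accuracy set to $\epsilon\nu^2/\|\A^\top\A+\nu\I\|^2 = \epsilon/\kappa^2$. The paper simplifies one step: rather than your inner Woodbury to solve $\B\B^\top + \nu\I_s$, it simply forms and inverts this $s\times s$ matrix directly during preprocessing (in $O((s+\log(1/\delta))^\omega)$ time), which gives a clean $O(s^2)$ per preconditioner application without the $\log(1/\delta)$ slack you had to absorb.
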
 
\begin{proof}
Without loss of generality, assume that $s\leq d$, since otherwise the result is immediate. Based on Lemma \ref{l:woodbury} (with $\C=\A^\top$ and $\W=\I$) it suffices to construct a solver for $\A\A^\top+\nu\I$ of sufficient accuracy.

To construct a solver for $\A\A^\top+\nu\I$, we observe that since $\A$ is a wide matrix, we can sketch it using a $(1,\nu)$-embedding $\S$ for $\A^\top$, so that $\A\S^\top\S\A^\top+\nu\I\approx_{2}\A\A^\top+\nu\I$. To do this, in fact it suffices to use a sparse subspace embedding $\S\in\R^{s'\times d}$ with $s'=O(s+\log(1/\delta))$ and $b=O(\log^3(s/\delta))$ non-zeros per column. We compute $\S\A^\top$ and then $(\A\S^\top\S\A^\top+\nu\I)^{-1}$, at the cost of $O(\nnz(\A)\log^3(s/\delta)+(s+\log(1/\delta))^\omega)$, and use this as a preconditioner for solving $\A\A^\top+\nu\I$. Thus, via Lemma \ref{lem:precon}, we can obtain an $\epsilon_f$-solver $f$ for $\A\A^\top+\nu\I$ which runs in time $O(\nnz(\A)+ s^2)\log(1/\epsilon_f))$. 

By Lemma \ref{l:woodbury}, it suffices to use accuracy $\epsilon_f=\frac{\epsilon\nu^2}{(\|\A\|^2+\nu)^2}$. This gives us an $\epsilon$-solver for $\A^\top\A+\nu\I$ that runs in time $O((\nnz(\A) + s^2)\log(\kappa/\epsilon))$. Note that, in order to find $\epsilon_f$, we can use the power method to get a $2$-approximation of $\|\A\|$ in $O(\nnz(\A)\log(s)\log(1/\delta))$ time (Lemma \ref{l:power}).
\end{proof}

\subsection{Recursive Preconditioning}
\label{sec:primal:recursive}

Below we define a \emph{preconditioning chain} which is a sequence of matrices that relatively spectrally approximate each other. This definition and notation is similar to prior work on solving graph-structured linear systems such as Laplacians \cite{SpielmanT14,KoutisMP10,KoutisMP11,CohenKMPPRX14,js_talg}.
\begin{definition}[Preconditioning Chain]
\label{def:pchain}
We call $C = (\mm_0, ((\mm_t, \kappa_t))_{t \in [T]})$ a \emph{length $T$, $d$-dimensional preconditioning chain}, denoted $C \in \chain_T^d$, if $T \geq 1$, $\mm_t \in \PD{d}$ and $\kappa_t \in \R_{\geq 1}$ for all $t \in \{0\} \cup [T]$, and
\[
    \mm_{t-1} \preceq \mm_{t} \preceq \kappa_{t} \mm_{t-1}
\text{ for all }
    t \in [T]\,.
\]
\end{definition}

\begin{algorithm}
\caption{Recursive Preconditioning}
\label{alg:rprecon}
\SetKwProg{function}{function}{:}{}
\mycomment{$\rprecon$ parameters: $C = (\mm_0, ((\mm_t, \kappa_t))_{t \in [T]}) \in \chain_T^d$, $\baseSolve$ that is a $\frac{1}{10 \kappa_T}$ solver for $\mm_T$, $\epsilon \in (0, 1)$}
\mycomment{$\rprecon$ input: $\b \in \R^d$}
\mycomment{$\rprecon$ output: $\xout \in \R^d$ with $\norm{\xout - \M^{-1} \b}_\M^2 \leq \epsilon \norm{\b}_\M^2$ where $\mm = \mm_0(C)$ }
\BlankLine
\function{$\rprecon_{C, \baseSolve, \epsilon}(\b)$}{
    \lIf{$T = 1$}
    {
        $\preSolve = \baseSolve(\cdot)$
    }
    \Else{
        \mycomment{$C_1$ is a preconditioning subchain of $C$ that $\preSolve$ uses to solve $\mm_1$}  
        \label{line:rprecon:rsolver}
            $\preSolve = \rprecon_{C_1, \baseSolve, \frac{1}{10 \kappa_1}}(\cdot)$
            where $C_1 \defeq (\mm_1,
        ((\mm_2,\kappa_2),\ldots,(\mm_T,\kappa_T)))$
            \;
    }
    \textbf{return}: $\xout = \precon_{\mm_0, \preSolve, \kappa_1, \epsilon}(\b)$
    \label{line:rpagd:out}\;
}
\end{algorithm}

Next, we give a general-purpose recursive preconditioning guarantee with respect to the above definition of a preconditioning chain. We will build on this to construct our simple solver for regression (more elaborate recursive frameworks  will be needed for the primal-dual and implicit dual solvers in later sections). 
\begin{lemma}[Recursive Preconditioning]
\label{lem:recurse}
Let $C = (\mm_0, ((\mm_t, \kappa_t))_{t \in [T]}) \in \chain_T^d$, let $\baseSolve$ be a $\frac{1}{10 \kappa_T}$-solver for $\mm_T$, and let $\epsilon \in (0,1)$. 
Then $\genSolve = \rprecon_{C, \baseSolve, \epsilon}(\cdot)$ (see \Cref{alg:rprecon}) is an $\epsilon$-solver for $\mm_{0}$ with 
\[
\runtime_{\genSolve} =O\left(K_T \runtime_{\baseSolve}+ \textstyle{\sum_{i\in[T]}} K_i \runtime_{\mm_{i-1}}\right)
\]
    where 
\[
\kappa_0 \defeq (10 \epsilon)^{-1}
	\text{ and }
K_t \defeq \prod_{i \in [t]} \lceil 4 \sqrt{\kappa_i} \log(20\kappa_{i-1}) \rceil
	\text{ for all }
t \in [T]
\,.
\]
\end{lemma}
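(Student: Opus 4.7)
The plan is to prove the lemma by induction on the chain length $T$, using Lemma~\ref{lem:precon} as the one-level reduction at each step. The base case $T=1$ follows directly from Lemma~\ref{lem:precon}: the chain condition gives $\mm_0 \preceq \mm_1 \preceq \kappa_1 \mm_0$, and $\baseSolve$ is by hypothesis a $\frac{1}{10\kappa_1}$-solver for $\mm_1$, so $\precon_{\mm_0,\baseSolve,\kappa_1,\epsilon}$ is an $\epsilon$-solver for $\mm_0$ that invokes $\baseSolve$ and $\mm_0$ at most $\lceil 4\sqrt{\kappa_1}\log(2/\epsilon)\rceil$ times. Since $\epsilon=1/(10\kappa_0)$ by definition of $\kappa_0$, this count equals $\lceil 4\sqrt{\kappa_1}\log(20\kappa_0)\rceil = K_1$, and the additional $O(n\sqrt{\kappa_1}\log(1/\epsilon))$ overhead is absorbed into $O(K_1 \runtime_{\mm_0})$ since $\runtime_{\mm_0}\ge n$.

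For the inductive step, I would apply the induction hypothesis to the subchain $C_1 = (\mm_1,((\mm_2,\kappa_2),\ldots,(\mm_T,\kappa_T))) \in \chain_{T-1}^d$ with target accuracy $\epsilon' = 1/(10\kappa_1)$, so that $\preSolve = \rprecon_{C_1,\baseSolve,\epsilon'}$ is a $\frac{1}{10\kappa_1}$-solver for $\mm_1$. Writing $L_i \defeq \lceil 4\sqrt{\kappa_i}\log(20\kappa_{i-1})\rceil$ so that $K_t = \prod_{i\in[t]} L_i$, the inductive hypothesis gives runtime
\[
\runtime_{\preSolve} = O\Big(K'_{T-1}\runtime_{\baseSolve} + \textstyle\sum_{j=1}^{T-1} K'_j \runtime_{\mm_j}\Big),
\]
where the shifted chain's parameters are $\kappa'_0 = 1/(10\epsilon') = \kappa_1$ and $\kappa'_i = \kappa_{i+1}$, hence $K'_t = \prod_{i=1}^{t}\lceil 4\sqrt{\kappa_{i+1}}\log(20\kappa_i)\rceil$. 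Then one more application of Lemma~\ref{lem:precon} with $\M = \mm_0$, $\N = \mm_1$, $\kappa = \kappa_1$, $f = \preSolve$ yields that the returned $\xout = \precon_{\mm_0,\preSolve,\kappa_1,\epsilon}(\b)$ is an $\epsilon$-solver for $\mm_0$, invoking $\preSolve$ and $\mm_0$ at most $L_1$ times each.

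For the runtime bookkeeping, the outer preconditioned AGD contributes $O(L_1\runtime_{\mm_0})$ for applications of $\mm_0$ and $O(L_1\runtime_{\preSolve})$ for the inner solves. The key identity is that index-shifting telescopes cleanly: $L_1\cdot K'_j = \prod_{i=1}^{j+1} L_i = K_{j+1}$ for all $j\in\{0,\ldots,T-1\}$, so in particular $L_1 = K_1$ and $L_1 K'_{T-1} = K_T$. Substituting these gives
\[
\runtime_{\genSolve} = O\Big(K_1 \runtime_{\mm_0} + K_T \runtime_{\baseSolve} + \textstyle\sum_{j=1}^{T-1} K_{j+1}\runtime_{\mm_j}\Big) = O\Big(K_T \runtime_{\baseSolve} + \textstyle\sum_{i\in[T]} K_i \runtime_{\mm_{i-1}}\Big),
\]
which is the claimed bound. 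The main (mild) obstacle is simply keeping the index shifts consistent between the chain $C$ and its tail $C_1$, in particular verifying that the target accuracy $1/(10\kappa_1)$ for the subchain matches the requirement $\kappa'_0 = 1/(10\epsilon')$ from the statement, so that the inductive formula telescopes exactly into $K_T$ and $K_i$ without any constant factor slack.
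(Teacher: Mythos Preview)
Your proposal is correct and follows essentially the same approach as the paper: induction on the chain length $T$, using \Cref{lem:precon} as the one-level reduction, with the index shift $\kappa'_0=\kappa_1$, $\kappa'_i=\kappa_{i+1}$ making the products telescope as $L_1\cdot K'_j=K_{j+1}$. The paper organizes the induction slightly differently (it first isolates the claim about $\preSolve$ and inducts on that, with a trivial base case $\preSolve=\baseSolve$), but the substance is identical; one small notational slip in your write-up is that the ambient dimension here is $d$, not $n$, so the absorption of the additive overhead uses $\runtime_{\mm_0}\ge d$.
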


\begin{proof}
First we show that it suffices to prove that 
\begin{equation}
\label{eq:recurse_suffice}
\text{
$\preSolve$
 is a $\frac{1}{10 \kappa_1}$-solver for $\mm_1$ with
 }
 \runtime_\preSolve = 
 \frac{K_T \cdot O(\runtime_{\baseSolve})  + \sum_{i\in\{2,\ldots,T\}} K_i \cdot O(\runtime_{\mm_{i-1}})}{K_1}\,.
\end{equation}
$\mm_0 \preceq \mm_1 \preceq \kappa_1 \mm_0$
by the definition of $\chain_T^d$ (\Cref{def:pchain}) and consequently \Cref{lem:precon} applies to the call to $\precon$ on \Cref{line:rpagd:out}. \Cref{lem:precon} shows that $\xout$ is an $\epsilon$-approximate solution to $\mm_0 \x = \b$ and that 
\begin{equation*}
\runtime_{\genSolve} = O(\lceil 4 \sqrt{\kappa_1}) \log(2/\epsilon_0) \rceil (\runtime_{\mm_0} + \runtime_\preSolve + d)
= O(K_1 (\runtime_{\mm_0} + \runtime_\preSolve))
\end{equation*}
since $\kappa_0 = (10\epsilon)^{-1}$, $\runtime_{\mm_0} = \Omega(d)$, and $C_1$ can be computed implicitly in $O(1)$ time. \eqref{eq:recurse_suffice} then yields the result.

We now prove \eqref{eq:recurse_suffice} by induction on $T$. The base case of $T = 1$ follows trivially from the definition of $\baseSolve$ as $\preSolve = \baseSolve$. Next we suppose \eqref{eq:recurse_suffice} holds for $T = T'\geq 1$ and consider the case of $T = T' + 1$. In this case,  $C_1 \in C_{T'}^{d}$ (\Cref{line:rprecon:rsolver}) by construction. Consequently, the preceding paragraph and that \eqref{eq:recurse_suffice} holds for $T = T'$ implies that $\preSolve = \rprecon_{C_1, \baseSolve, \frac{1}{10 \kappa_1}}(\cdot)$ (\Cref{line:rprecon:rsolver}) is a $\frac{1}{10\kappa_1}$ for $\mm_1$ with $\runtime_{\preSolve}$ with 
\[
\runtime_{\genSolve} = \hat{K}_T \cdot O(\runtime_{\baseSolve}) + \sum_{i\in\{2,\ldots,T\}} \hat{K}_i \cdot O(\runtime_{\mm_{i-1}})
\]
    where 
\[
\hat{K}_t \defeq \prod_{i = 2}^{T} \lceil 4 \sqrt{\hat{\kappa}_i} \log(20\hat{\kappa}_{i-1}) \rceil
    \text{ , }
\hat{\kappa}_1 \defeq (10 \cdot (1/(10 \kappa_1)))^{-1} 
\text{ , and }
\hat{\kappa}_t \defeq \kappa_t
\text{ for all }
t \in \{2,\ldots,T\}
\,.
\]
Since each $\hat{K}_t = K_t / K_1$ for all $t \in \{2,\ldots,T\}$ the result follows by induction.
\end{proof}

\subsection{Regularized Preconditioning for Regression}
\label{sec:primal:regression}

Building on the analysis of the previous section, here we give an algorithm for regression and prove \Cref{t:ls}, one of our key results regarding regression. Our algorithms use a particular type of preconditioning chain, which we call a \emph{regularized preconditioning chain}.

\begin{definition}[Regularized Preconditioning Chain]
\label{def:primalchain}
We call $P = ((\ma_t, \nu_t)_{t \in \{\{0\} \cup [T]\}})$ a \emph{length $T$, $d$-dimensional, regularized preconditioning chain}, denoted $P \in \primalchain_T^{d}$, if $T \geq 1$, each $\ma_t \in \R^{n_t \times d}$, and\footnote{\label{footnote:4}The constant of $4$ in $\ma_{t}^\top\ma_{t} + \nu_t \mI \approx_4
\ma_{0}^\top\ma_{0} + \nu_t \mI$ is somewhat arbitrary but chosen to be consistent with \Cref{s:psd} where the constant is made large enough to algorithmically verify the spectral approximation condition in Lemma \ref{l:dual-condition-tester}.}
\[
0 < \nu_0 \leq \cdots \leq \nu_T
    \text{ and }
\ma_{t}^\top\ma_{t} + \nu_t \mI \approx_4
\ma_{t - 1 }^\top\ma_{t - 1} + \nu_t \mI
    \text{ for all }
    t \in [T]\,.
\]
\end{definition}

Our regression algorithms then apply the general recursive preconditioning framework to regularized preconditioning chains. The following theorem gives the main guarantees on linear system solvers that we obtain from regularized preconditioning chains.

\begin{theorem}[Recursive Regularized Preconditioning]
\label{thm:apx-gen}
There is an algorithm that given $((\ma_t, \nu_t)_{t \in \{0\} \cup [T]\}}) \in \primalchain_T^{d}$ where each $\ma_T \in \R^{n_T \times d}$, in $\otilde(\nnz(\ma_T)) + O(n_T^\omega)$ time develops a $\runtime$-time $\epsilon$-solver for $\ma_0^\top \ma_0 + \nu_0 \I$ with 
\begin{equation}
\label{eq:regression-runtime-general}
\runtime = 
O\left(\sum_{t\in[T]} R_t \cdot \nnz(\ma_{t -1}) + R_T \cdot \left(\nnz(\ma_t) + n_T^2 
    \log\left(\frac{\nu_t + \norm{\ma}^2}{\nu_{T - 1}}\right)\right)\right)
\end{equation}
where
\begin{equation}
\label{eq:regression-quantities-general}
R_t \defeq \sqrt{\frac{\nu_t}{\nu_{0}}}
\left(20
\left(
\log(320) + \frac{1}{t}
\log\left(\frac{\nu_{t} }{ \nu_0 }\right)
\right)
\right)^{t}
\text{ for all }
t \in [T]
\,.
\end{equation}
\end{theorem}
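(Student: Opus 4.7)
The plan is to reduce this to Lemma~\ref{lem:recurse} by turning the regularized chain $P = ((\ma_t, \nu_t)_{t \in \{0\} \cup [T]})$ into a standard preconditioning chain (Definition~\ref{def:pchain}). I would set $\mm_t \defeq 4^{t}(\ma_t^\top \ma_t + \nu_t \I)$, so that $\mm_0 = \ma_0^\top \ma_0 + \nu_0 \I$ is the target system, and verify the two Loewner conditions using $\nu_{t-1}\leq \nu_t$ and the $\approx_4$ guarantee in $P$. For each $t \in [T]$ one gets $\mm_{t-1} \preceq 4^{t-1}(\ma_{t-1}^\top\ma_{t-1} + \nu_t\I) \preceq 4^{t}(\ma_{t}^\top\ma_{t} + \nu_t\I) = \mm_t$, and symmetrically $\mm_t \preceq 4^{t+1}(\ma_{t-1}^\top\ma_{t-1}+\nu_t\I) \preceq 16(\nu_t/\nu_{t-1})\,\mm_{t-1}$, so $C \defeq (\mm_0, ((\mm_t, \kappa_t))_{t \in [T]}) \in \chain_T^{d}$ with $\kappa_t \le 16\,\nu_t/\nu_{t-1}$. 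The scalar $4^t$ is inconsequential since any $\epsilon$-solver for $\mm_0$ is an $\epsilon$-solver for $\ma_0^\top\ma_0 + \nu_0\I$.

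For the base solver I would invoke Lemma~\ref{l:msp} on $(\ma_T, \nu_T)$ (preceded by one call to the power method, Lemma~\ref{l:power}, to estimate $\|\ma_T\|$). This consumes exactly the claimed $\otilde(\nnz(\ma_T)) + O(n_T^\omega)$ preprocessing budget and produces a solver of any tolerance $\epsilon'$ with per-call cost $O((\nnz(\ma_T) + n_T^2)\log(\kappa'/\epsilon'))$ for $\kappa' = 1+\|\ma_T\|^2/\nu_T$. Lemma~\ref{lem:recurse} requests accuracy $\epsilon' = 1/(10\kappa_T) = \Theta(\nu_{T-1}/\nu_T)$, so the internal log factor becomes $O(\log((\nu_T + \|\ma_0\|^2)/\nu_{T-1}))$, matching the base-solver term in~\eqref{eq:regression-runtime-general}.

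Plugging $C$ and $\baseSolve$ into Lemma~\ref{lem:recurse} yields an $\epsilon$-solver for $\mm_0$ of runtime $O(K_T \runtime_{\baseSolve} + \sum_{t \in [T]} K_t\runtime_{\mm_{t-1}})$, and since $\runtime_{\mm_{t-1}} = O(\nnz(\ma_{t-1}))$ all that remains is to show $K_t = O(R_t)$. The $\sqrt{\kappa_i}$ portion telescopes: $\prod_{i=1}^t 4\sqrt{\kappa_i} = O\bigl(16^t\sqrt{\nu_t/\nu_0}\bigr)$. For the logarithmic factors I would apply AM-GM to the telescoping sum $\sum_{i=1}^t \log(\nu_i/\nu_{i-1}) = \log(\nu_t/\nu_0)$, obtaining
\begin{align*}
    \prod_{i=1}^{t} \log(20\,\kappa_{i-1})
    \;\leq\; \prod_{i=1}^{t} \bigl(\log 320 + \log(\nu_i/\nu_{i-1})\bigr)
    \;\leq\; \bigl(\log 320 + \tfrac{1}{t}\log(\nu_t/\nu_0)\bigr)^{t}.
\end{align*}
Absorbing constants into the factor of $20$ outside reproduces $R_t$ as in~\eqref{eq:regression-quantities-general}, and combining the two runtime contributions gives the stated bound.

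The delicate point is the outermost factor $\log(20\kappa_0) = \log(2/\epsilon)$ that Lemma~\ref{lem:recurse} contributes to every $K_t$ from the top-level AGD call: unlike the other $\log(20\kappa_{i-1})$ for $i\geq 2$, it is not a $\nu_i/\nu_{i-1}$ ratio and so does not participate in the AM-GM telescoping cleanly. The resolution is that $R_t$ is written with constant slack (the factors $20$ and $320$ rather than the raw $16$ and $256$), and in the intended use (e.g.\ the proof of Theorem~\ref{t:ls} where $\epsilon$ is at worst inverse-polynomial in the relevant spectral quantities) this factor is dominated by the $\log 320 + (1/t)\log(\nu_t/\nu_0)$ already present in $R_t$; one can alternatively absorb it into the base-solver logarithm since $\log(1/\epsilon) = O(\log((\nu_T+\|\ma_0\|^2)/\nu_{T-1}))$ in that regime, leaving the overall runtime form unchanged.
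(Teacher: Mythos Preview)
Your approach is essentially the paper's: define $\mm_t=4^t(\ma_t^\top\ma_t+\nu_t\I)$, verify $C\in\chain_T^d$ with $\kappa_t=16\nu_t/\nu_{t-1}$, build $\baseSolve$ via Lemma~\ref{l:msp}, and invoke Lemma~\ref{lem:recurse}. The telescoping of $\prod\sqrt{\kappa_i}$ and the AM--GM bound on the log product are exactly what the paper does.

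The gap you flagged is real, but your proposed resolutions are not valid. The factor $\log(20\kappa_0)=\log(2/\epsilon)$ cannot be absorbed into the constants $20,320$ (it is unbounded as $\epsilon\to0$), and routing it into the base-solver logarithm only touches the $K_T$ term, not $K_t$ for $t<T$. The paper does not absorb it at all: it separates the $i=1$ factor out of $\prod_{i\in[t]}\log(20\kappa_{i-1})$, writes the remaining $t-1$ factors as $\prod_{j\in[t-1]}\log(320\,\nu_j/\nu_{j-1})$, pads to $t$ factors, and applies AM--GM to that product. The conclusion is $K_t=O(R_t\log(1/\epsilon))$, not $K_t=O(R_t)$. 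In other words, the runtime bound~\eqref{eq:regression-runtime-general} as literally stated is missing a multiplicative $\log(1/\epsilon)$; the paper's proof establishes the bound with that factor, and every downstream use (Theorems~\ref{thm:regression:square} and~\ref{thm:regression:optimized}) explicitly carries a $\log(\epsilon^{-1})$ in the final runtime. So the correct fix is simply to state $K_t=O(R_t\log(1/\epsilon))$ and move on, rather than trying to hide the $\epsilon$-dependence.

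One minor indexing slip: in your displayed inequality the inner log should be $\log(\nu_{i-1}/\nu_{i-2})$ (coming from $\kappa_{i-1}$), not $\log(\nu_i/\nu_{i-1})$; after the shift $j=i-1$ the telescoping and AM--GM go through as you intend.
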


\begin{proof}
Let $\kappa_0 \defeq (10 \epsilon)^{-1}$, $\kappa_t \defeq 16 \cdot \frac{\nu_{t}}{\nu_{t-1}}$, and $\mm_t \defeq 4^t (\ma_t^\top \ma_t + \nu_t \mI)$ for all $t \in \{0 \cup [T]\}$. Note that each $\mm_t \approx_4 4^t (\ma_0^\top \ma_0 + \nu_t \mI)$. Since, the $\nu_i$ are non-negative and monotonically increasing, for all $t \in [T]$, 
\begin{align*}
&\mm_{t - 1}
    \preceq 4^{t - 1}  (\ma_{t - 1}^\top \ma_{t - 1} + \nu_{t} \mI)
    \preceq 4^{t}  (\ma_t^\top \ma_t + \nu_t \mI) 
    \preceq \mm_t \\
&\mm_{t}
    \preceq 4^{t} \cdot 4 \cdot (\ma_{t-1}^\top \ma_{t - 1} + \nu_{t} \mI)
    \preceq 4^{t - 1} \cdot (16 \nu_t/\nu_{t -1}) \cdot (\ma_{t - 1}^\top \ma_{t - 1} + \nu_{t - 1} \mI)
    = \kappa_{t} \mm_{t - 1}.
\end{align*}
Consequently, $C = (\mm_0, ((\mm_t, \kappa_t))_{t \in [T]}) \in \chain_{T}^d$. Additionally, \Cref{l:msp} implies that in $\otilde(\nnz(\ma_{T})) + O(n_T^\omega)$  time it is possible to develop a $\frac{1}{10 \kappa_T}$-solver, $\baseSolve$, for $\mm_T$ that can be applied in time
\[
O\left(\left(\nnz(\ma_T) + n_T^2\right)\log(\kappa_T \cdot (1 + \norm{\ma}^2 / \nu_T)\right)
    =
O\left(\left(\nnz(\ma_T) + n_T^2\right)\log((\nu_T + \norm{\ma}^2) / \nu_{T-1}\right).    
\]
Since each $\mm_t$ can be applied to a vector in $O(\nnz(\ma_t))$ time, applying \Cref{lem:recurse} with these settings of $C$, $\baseSolve$, and $\epsilon$ yields that there is an $\otilde(\nnz(\ma_{T})) + O(n_T^\omega)$ time algorithm that develops a $\runtime$-time $\epsilon$-solver for $\ma_0^\top \ma_0 + \nu_0 \I$ where for $K_t \defeq \prod_{i \in [t]} \lceil 4 \sqrt{\kappa_i} \log(20\kappa_{i-1}) \rceil$ for all $t \in [T]$, with
\[
\runtime = 
O\left(\left(\sum_{t\in[T]} K_t \cdot \nnz(\ma_{t -1}) + K_T \cdot \left(\nnz(\ma_T) + n_T^2 
    \log\left(\frac{\nu_t + \norm{\ma}^2}{\nu_{T - 1}}\right)\right)\right)
    \log(1/\epsilon)
    \right)
\,.
\]

To complete the proof it suffices to show that $K_t = O(R_t \log(1/\epsilon))$. To see this, note that each $\kappa_i \geq 1$ and $\log(20 \kappa_i) \geq 1$. Consequently, for all $t \in [T]$,
\[
K_t \leq \prod_{i \in [t]} 5 \sqrt{\kappa_i} \log(20 \kappa_{i-1})
= \sqrt{\frac{\nu_t}{\nu_0}} 20^t \log(1/\epsilon) \prod_{i \in [t- 1]} \log(320 \nu_t / \nu_{t - 1})\,.
\]
The result follows as $\prod_{i \in [t- 1]} \log(320 \nu_t / \nu_{t - 1}) \leq \prod_{i \in [t]} \log(320 \nu_t / \nu_{t - 1})$ and\footnote{Tighter bounds are obtainable. In multiple places in this paper we choose somewhat crude bounds on logarithmic factors in favor of simplifying the analysis and claimed bounds.}
\[
\left(\prod_{i \in [t]} \log\left(320 \frac{\nu_i}{\nu_{i-1}}\right)\right)^{1/t}
\leq 
\frac{1}{t} \sum_{i \in [t]} \log\left(320 \frac{\nu_i}{\nu_{i-1}}\right)
=
\log(320) + 
\frac{1}{t} \log\left(\frac{\nu_{t}}{\nu_0}\right)
\] 
by the arithmetic-mean geometric-mean inequality.
\end{proof}

Applying \Cref{thm:apx-gen} with geometrically decreasing $\nu_t$ suffices to obtain the following  \Cref{thm:regression:square} on solving particular linear systems of the form $\ma^\top \ma + \lambda \mI$. Applying \Cref{thm:regression:square} then almost immediately yields the proof of Theorem \ref{t:ls} via a simple reduction.

\begin{theorem}
\label{thm:regression:square}
There is an algorithm that given $\ma \in \R^{n \times d}$ with  $n = \otilde(d)$, $k \in [d]$, and $u \approx_2 \sigAvg_k (\ma)$ in $\otilde(d^2) + O(k^\omega)$ time, whp., produces an $\otilde(d^2 \log(\kappa(\ma)) \log(\epsilon^{-1}))$-time $\epsilon$-solver for $\ma^\top \ma + \frac{k}{d} u\I$.
\end{theorem}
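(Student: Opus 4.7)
The plan is to construct a regularized preconditioning chain
$P = ((\ma_t,\nu_t))_{t \in \{0\} \cup [T]} \in \primalchain_T^{d}$ with a geometric regularization schedule and invoke \Cref{thm:apx-gen} on it. Fix a sufficiently large absolute constant $\beta$, set $T \defeq \lceil \log_\beta(4d/k) \rceil$, $\ma_0 \defeq \ma$, $\nu_0 \defeq (k/d)u$, and for each $t \in [T]$ set $\nu_t \defeq \beta^t \nu_0$ and $k_t \defeq \max(k, \lceil 4d/\beta^t\rceil)$. At each level I invoke \Cref{c:fast-embedding} with accuracy $1/2$ and parameter $k_t$ to produce, in $\otilde(\nnz(\ma))$ time, a sparse embedding $\S_t \in \R^{s_t \times n}$ with $s_t = \otilde(k_t)$ rows that is, whp., a $(1/2, \sigAvg_{k_t}(\ma))$-embedding for $\ma$, and set $\ma_t \defeq \S_t \ma$.

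To verify the chain conditions of \Cref{def:primalchain}, I would use that $k_t \sigAvg_{k_t}(\ma) \leq k\sigAvg_k(\ma)$ whenever $k_t \geq k$ and that $u \approx_2 \sigAvg_k(\ma)$, giving $\sigAvg_{k_t}(\ma) \leq 2(k/k_t) u \leq \nu_t/2$ by the choice of $k_t$. By monotonicity of regularized spectral approximation under increasing the regularizer, each $\S_t$ is actually a $(1/2,\nu_t)$-embedding for $\ma$, and the same holds for $\S_{t-1}$ at the larger regularizer $\nu_t \geq \nu_{t-1}$. Transitivity then yields $\ma_t^\top\ma_t + \nu_t\I \approx_{9/4} \ma_{t-1}^\top\ma_{t-1} + \nu_t\I$, inside the factor-of-$4$ tolerance; the $\nu_t$ are monotone by construction and a union bound over the $T = O(\log(d/k))$ levels preserves the high-probability guarantee.

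Applying \Cref{thm:apx-gen} to $P$ then incurs preprocessing $\otilde(\nnz(\ma_T)) + O(n_T^\omega) + T \cdot \otilde(\nnz(\ma)) = \otilde(d^2) + O(k^\omega)$ (using $n = \otilde(d)$ and $n_T = \otilde(k)$), and yields a runtime with $R_t = \beta^{t/2} C_\beta^t$ for the absolute constant $C_\beta \defeq 20\log(320\beta)$. Since $\nnz(\ma_{t-1}) = \otilde(d^2/\beta^{t-1})$ (including at $t = 1$ where $\nnz(\ma_0) \leq nd = \otilde(d^2)$), each level contributes $R_t \cdot \nnz(\ma_{t-1}) = \otilde(\beta d^2)\cdot (C_\beta/\sqrt{\beta})^t$, which telescopes to $\otilde(d^2)$ as long as $\sqrt{\beta}$ exceeds $C_\beta$ by a constant factor. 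For the base level, $R_T = O((d/k)^{1/2 + \log_\beta C_\beta})$ and $n_T = \otilde(k)$, so $R_T \cdot n_T^2 = \otilde(d^{1/2 + \log_\beta C_\beta} \cdot k^{3/2 - \log_\beta C_\beta}) \leq \otilde(d^2)$ because $k \leq d$ and the $k$-exponent stays positive; the bound $R_T \cdot \nnz(\ma_T) = \otilde(d^2)$ follows analogously. The log factor inside the base-level term is $O(\log \kappa(\ma))$, and the top-level accelerated iteration supplies the $\log(\epsilon^{-1})$ factor, recovering the claimed $\otilde(d^2 \log(\kappa(\ma)) \log(\epsilon^{-1}))$ running time.

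The main obstacle is the tension in choosing $\beta$: too small and $C_\beta/\sqrt{\beta} \geq 1$, so the per-level telescope diverges; too large and $\log_\beta C_\beta$ creeps up toward $3/2$, threatening the bound $R_T \cdot n_T^2 \leq \otilde(d^2)$. The slack exists only because $C_\beta = 20\log(320\beta)$ grows logarithmically in $\beta$, so any sufficiently large absolute constant (say $\beta = 10^6$) simultaneously satisfies $\sqrt{\beta} > 2 C_\beta$ and $\log_\beta C_\beta < 1/2$, and this constant can be absorbed into $\otilde(\cdot)$. The most delicate verification is that the $1/2$-accuracy embeddings composed with the $\beta^t$ regularization schedule keep the chain comfortably inside the $\approx_4$ tolerance of \Cref{def:primalchain}.
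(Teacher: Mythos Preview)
Your proposal is correct and follows essentially the same approach as the paper: construct a regularized preconditioning chain with a geometric schedule (the paper uses $\nu = 10^6$ in the identical role as your $\beta$), verify $\nu_t \geq \sigAvg_{k_t}(\ma)$ so that the embedding guarantees yield the required $\approx_4$ between consecutive levels, apply \Cref{thm:apx-gen}, and check that the constant $20\log(320\beta) \leq \sqrt{\beta}$ so that the per-level costs form a decaying geometric series summing to $\otilde(d^2)$. Your explicit discussion of the tension in choosing $\beta$ makes transparent exactly why the paper's choice of $10^6$ works.
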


\begin{proof}
Let $\nu = 10^6$, $T = \lceil \ln_{\nu}(\frac{d}{k}) \rceil$, $\nu_0 = u k / d$, and $\nu_t = \nu_0 \nu^{t}$ and $k_t \defeq \lceil d \nu^{1 - t}\rceil$ for all $t \in [T]$. Additionally, for all $t \in [T]$ let $\ma_t = \S_t \ma \in \R^{n_t \times d}$ where $\S_t$ is $(1,\sigAvg_{k_t})$-embedding for $\ma$ and $n_t = O(k_t)+\otilde(1)$. By \Cref{c:fast-embedding} these $\ma_t$ can be computed whp.\ in $\otilde(\nnz(\A) T) = \otilde(d^2)$ time (since $n = \otilde(d)$ and $T = \otilde(1)$). Note that, $k_t \geq 2 d \nu^{1 - \ln_v(d/k)} k \geq d$ for all $t \in [T]$ and therefore, 
\[
\nu_t = \frac{ku}{d} \nu^{t} \geq \frac{k}{2d} \nu^{t} \sigAvg_k(\ma)
\geq \frac{k_t}{2d} \nu^{t} \sigAvg_{k_t}(\ma) \geq \sigAvg_{k_t}(\ma)\,.
\]
Consequently, for $\ma_0 = \ma$ and all $t \in [T]$, 
$\ma_t^\top \ma_t + \nu_t \mI \approx_2 \ma^\top \ma + \nu_t \mI$ and $\ma_{t -1}^\top \ma_{t-1} + \nu_t \mI \approx_2 \ma^\top \ma + \nu_t \mI$ since $\nu_t \geq \nu_{t - 1}$. Therefore, $\ma_t^\top \ma_t + \nu_t \mI \approx_4 \ma_{t-1}^\top \ma_{t-1} + \nu_t \mI$ for all $t \in [T]$ and \Cref{thm:apx-gen}, applied to $((\A_t, \nu_t))_{t \in \{\{0\} \cup [T]\}}$, implies that there is an algorithm that in $\otilde(\nnz(\ma_T)) + O(n_T^\omega) = \otilde(d^2) + O(k^\omega)$ time develops a $\runtime$-time $\epsilon$-solver for $\ma_0^\top \ma_0 + \nu_0 \mI$ with $\runtime$ bounded as in \eqref{eq:regression-runtime-general} and \eqref{eq:regression-quantities-general}. 

For $R_t$ as defined in \eqref{eq:regression-quantities-general} and all $t \in [T]$,
\begin{align*}
k_t R_{t} &= \left\lceil d \nu^{1-t} \right\rceil \cdot \sqrt{\frac{\nu_{t}}{\nu_{0}}}
\left(20
\left(
\log(320) + \frac{1}{t}
\log\left(\frac{\nu_{t} }{ \nu_0 }\right)
\right)
\right)^{t} 
\leq
    2 d \nu \nu^{- t} \nu^{\frac{1}{2} t} 
\left(20\left(\log(320) + \log(\nu) \right)\right)^{t}
\,.
\end{align*}
Since $20 (\log(320) + \log(\nu)) \leq \sqrt{\nu}$ and $\nu$ is a constant, this implies that $k_t R_t = O(d)$ for all $t \in [T]$. Additionally (since $\nu$ is a constant), $\nnz(\ma_0) = O(d k_1)$, $\nnz(\ma_t) = O(d n_t) = \otilde(d k_t)$ for all $t \in [T]$, $k_t = O(k_{t+1})$ for all $t \in [T - 1]$. Combining these facts and that $k_t = O(d)$ for all $t \in [T]$ yields that $R_t \cdot \max\{\nnz(\ma_{t-1}),\nnz(\ma_t),n_t^2\} = \otilde(d^2)$ for all $t \in [T]$.  Since $\nu_{T - 1} \geq \nu_T / \nu = \Omega(u) = \Omega(\sigAvg_k(\ma))$, $\log\big(\frac{\nu_t + \norm{\ma}^2}{\nu_{T - 1}}\big) = O\big(\log\big(\frac{\|\ma\|^2}{\sigAvg_k(\ma)}\big)\big) = O(\log(d \kappa(\ma))) = \otilde(\log(\kappa(\ma)))$ and
the result follows by the bound on $\runtime$ in \eqref{eq:regression-runtime-general} and \eqref{eq:regression-quantities-general}.
\end{proof}

\begin{proof}[Proof of \Cref{t:ls}]
    First, we use a standard reduction from solving a tall least square regression task $\min_{\x\in\R^d}\|\A\x-\b\|^2$, where $\A\in\R^{n\times d}$ and $n\gg d$, to one that satisfies $n=\tilde O(d)$. This reduction involves one additional round of preconditioning. First, using Lemma \ref{l:reg-approx} with $k=d$ we construct a matrix $\tilde \A=\S\A$, for a sparse subspace embedding $\S$ of size $\tilde O(d)\times d$ in time $\tilde O(\nnz(\A))$, so that whp.\ $\tilde\A^\top\tilde\A\approx_2\A^\top\A$. Now, given $u\approx_2\sigAvg_k(\tilde\A)\approx_2\sigAvg_k(\A)$, we can solve the linear system $\A^\top\A\x=\A^\top\b$ by preconditioning it with $2(\tilde\A^\top\tilde\A + \frac kd u\I)$. Observe that:
    \begin{align*}
        \A^\top\A\preceq 2\tilde\A^\top\tilde\A\preceq 2\Big(\tilde\A^\top\tilde\A+\frac kd u\I\Big)
        \preceq 8\bigg(\A^\top\A + \frac{\sigAvg_k(\A)}{d\sigma_{d}^2(\A)}\A^\top\A\bigg)
        \preceq 16\bar\kappa_{k,2}^2\cdot \A^\top\A.
    \end{align*}
    Thus, according to Lemma \ref{lem:precon}, given a $\frac1{320\bar\kappa_{k,2}^2}$-solver $f$ for $\tilde\A^\top\tilde\A+\frac kd u\I$, we can construct an $\epsilon$-solver for $\A^\top\A$ that applies $f$ and $\A$ at most $\tilde O(\bar\kappa_{k,2}\log1/\epsilon)$ times. 
    We can now use Theorem \ref{thm:regression:square} to obtain the solver $f$ in $\tilde O (d^2) + O(k^\omega)$ time. The running time of the resulting solver $g$ is thus:
    \begin{align*}
        \runtime_g = \tilde O\Big((\nnz(\A) + \runtime_f)\bar\kappa_{k,2}\log1/\epsilon\Big) 
        = \tilde O\Big((\nnz(\A) + d^2)\bar\kappa_{k,2}\log1/\epsilon\Big).
    \end{align*}
    According to our assumptions, $\bar\kappa_{k,2} = O(1)$, which yields the desired running time. 
    Finally, observe that we can find $u\approx_2\sigAvg_k(\tilde\A)$ and $\tilde\kappa\approx \kappa(\A)$ via log-scale grid-search. We do this by first computing $M=\|\A\|_F^2$, then trying values $\tilde\kappa = 2,4,8,...$ (until it is larger than the polynomial assumed in \Cref{s:intro}), and for each of those values running the algorithm with all candidate values for $u$ from the set $\{M 2^{-t} \mid t\in[\lceil\log_2(d\tilde\kappa)\rceil]\}$.  We output the point $\x$ with the lowest value of $\|\A \x - \b\|^2=\|\A\x-\A\x^*\|^2+\|\A\x^*-\b\|^2$. This introduces an additional $O(\log(d\kappa(\A))^2)$ factor into the overall running time of the procedure.
\end{proof}

\subsection{Improved Conditioning Dependence}
\label{s:warm-up-improved}

Here we show how to apply \Cref{thm:apx-gen} with a different choice of $\nu_t$ and associated $\ma_t \in \R^{n_t \times d}$ to obtain nearly linear time algorithms for $\ma$ with weaker assumptions on the conditioning of $\ma$. Rather than $\nu_t$ increase geometrically and $n_t$ decrease geometrically we choose $n_t = \otilde(\max\{d \exp(-\alpha t^2),2k\})$ for an appropriate choice of $\alpha$ and have $\nu_t \approx \sigAvg_{k_t}(\ma)$. In short, we decrease the $n_t$ at faster rate and pick $\nu_t$ to be nearly the smallest value our analysis allows (rather than changing it at a fixed schedule). 

Provided that the $\sigma_i(\ma)$ are known approximately (so that, e.g., the appropriate $\sigAvg_{k_t}(\ma)$ and $\kappa(\ma)$ can be approximated) we show (see \Cref{t:ls-optimized}) that this allows us to obtain algorithms with similar performance as in \Cref{sec:primal:regression} except with their dependence on $\bar{\kappa}_{k,2}$ to be nearly replaced with a dependence on  $\bar{\kappa}_{k,1+\eta}$ for any constant $\eta \in (0, 1)$. In \Cref{s:primal-dual} we apply a similar scheme for choosing the $\nu_t$ and $n_t$ but a different recursion and additional techniques to both obtain a further improvement to the conditioning dependence and remove the need for approximations to the $\sigAvg_{k_t}(\ma)$ to be given. Nevertheless, we provide the analysis in this section as we believe it serves as helpful warm-up to that in \Cref{s:primal-dual} and because the key technical lemma which we use to analyze the algorithm (\Cref{l:running-time-general}) is also applied in \Cref{s:primal-dual} and \Cref{s:psd} (albeit with different parameters). 

We start by giving the following \Cref{thm:regression:optimized}, which analyzes the described scheme using \Cref{thm:apx-gen} (our theorem regarding recursive regularized preconditioning) provided particular bounds on $\alpha$ hold.

\begin{theorem}\label{thm:regression:optimized}
For $\ma \in \R^{n \times d}$ where $n = \otilde(d)$ and $d \geq 1$, $k \in [d]$ with $2k \leq d$, and $\alpha > 1$  define 
$k_t \defeq \max\{\lceil d \cdot \exp(-\alpha t^2)\rceil, 2k\}$ for all 
$t \in \{0\} \cup [T]$ where $T \defeq \lceil \sqrt{\ln(d/(2k)) / \alpha}\rceil$.
Additionally, let $0 < \nu_0 \leq \nu_1 \leq \cdots \leq \nu_T$ satisfy $\nu_t \approx_4 \sigAvg_{k_t}(\ma)$ for all $t \in [T]$ and $\nu_0 \approx_4 \sigma_d(\ma)^2$ and $\log(20 \log(320 \nu_t / \nu_0)) \leq \alpha$.
There is an algorithm that given $\ma$, $k$, $\alpha$, the $k_t$, and the $\nu_t$, in $\otilde(d^2) + O(k^\omega)$ time, whp., produces a $\runtime$-time $\epsilon$-solver for $\ma^\top \ma + \nu_0 \mI$ with
\begin{equation}
\label{eq:regression_warmup_runtime_unsimplified}
\runtime = \otilde\left( 
\max_{t \in [T]}  d k_{t-1} 
\sqrt{\frac{\sigAvg_{k_t}(\A)}{\sigma_d(\ma)^2}} \exp(\alpha t)
    \ln\left(\kappa(\ma)\right)
    \ln(\epsilon^{-1})\right)\,.
\end{equation}
\end{theorem}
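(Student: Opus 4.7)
The plan is to instantiate the recursive regularized preconditioning framework of \Cref{thm:apx-gen} with a regularized preconditioning chain $((\ma_t,\nu_t))_{t\in\{0\}\cup[T]}$ in which the dimensions $n_t$ shrink super-geometrically (roughly like $d\exp(-\alpha t^2)$) rather than geometrically. The matrix $\ma_0$ will be $\ma$ itself, and for each $t\in[T]$ we will set $\ma_t=\S_t\ma\in\R^{n_t\times d}$ with $n_t=\tilde O(k_t)$, using the sparse oblivious subspace embeddings of \Cref{c:fast-embedding}. The input approximations $\nu_t\approx_4\sigAvg_{k_t}(\ma)$ and $\nu_0\approx_4\sigma_d(\ma)^2$ are precisely what is needed so that these sketches produce valid regularized approximations at each level of the chain.

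First I will use \Cref{c:fast-embedding} with parameter $k_t$ and target accuracy $\epsilon=O(1)$ to obtain $\S_t$ satisfying $\ma_t^\top\ma_t+\sigAvg_{k_t}(\ma)\I\approx_2\ma^\top\ma+\sigAvg_{k_t}(\ma)\I$ whp.\ in time $\tilde O(\nnz(\ma))$ per level, for a total preprocessing cost of $\tilde O(d^2\cdot T)=\tilde O(d^2)$ (since $T=\tilde O(1)$ under our parameter regime). Because $\sigAvg_k(\ma)$ is non-increasing in $k$ and $k_t$ is non-increasing in $t$, we have $\nu_t\gtrsim \sigAvg_{k_t}(\ma)\geq \sigAvg_{k_{t-1}}(\ma)$, and adding more regularization only strengthens the spectral equivalence. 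Consequently both $\ma_t^\top\ma_t+\nu_t\I$ and $\ma_{t-1}^\top\ma_{t-1}+\nu_t\I$ are constant-factor approximations of $\ma^\top\ma+\nu_t\I$, which gives $\ma_t^\top\ma_t+\nu_t\I\approx_4\ma_{t-1}^\top\ma_{t-1}+\nu_t\I$ and shows $((\ma_t,\nu_t))_{t\in\{0\}\cup[T]}\in\primalchain_T^d$. The $\tilde O(k^\omega)$ preprocessing term then comes directly from the $n_T^\omega$ term in \Cref{thm:apx-gen}, using $n_T=\tilde O(k_T)=\tilde O(k)$ since $k_T=2k$ for our choice of $T$.

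The main technical step is then to substitute our choice of $k_t$ and $\nu_t$ into the runtime bound \eqref{eq:regression-runtime-general}--\eqref{eq:regression-quantities-general} from \Cref{thm:apx-gen} and exploit the assumption $\log(20\log(320\nu_t/\nu_0))\leq \alpha$. The quantity inside the $t$-th power in $R_t$ satisfies $20(\log(320)+\tfrac1t\log(\nu_t/\nu_0))\leq 20\log(320\nu_t/\nu_0)\leq e^\alpha$ whenever $\nu_t\geq\nu_0$, so $R_t\leq \sqrt{\nu_t/\nu_0}\exp(\alpha t)$. Combining this with $\nnz(\ma_{t-1})=\tilde O(d k_{t-1})$ (using $n=\tilde O(d)$ for $t=1$ and the sketch size otherwise) and $\nu_t/\nu_0\asymp \sigAvg_{k_t}(\ma)/\sigma_d(\ma)^2$ gives $R_t\cdot\nnz(\ma_{t-1})=\tilde O\big(dk_{t-1}\sqrt{\sigAvg_{k_t}(\ma)/\sigma_d(\ma)^2}\exp(\alpha t)\big)$. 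The base-level term $R_T\cdot(\nnz(\ma_T)+n_T^2\log((\nu_T+\|\ma\|^2)/\nu_{T-1}))$ is dominated by the same maximum, after observing that $n_T^2=\tilde O(dk_T)=\tilde O(dk_{T-1})$ and $\log((\nu_T+\|\ma\|^2)/\nu_{T-1})=\tilde O(\log\kappa(\ma))$.

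The main obstacle I anticipate is a purely bookkeeping one: controlling the $(20(\log 320+\tfrac1t\log(\nu_t/\nu_0)))^t$ factor in $R_t$ tightly enough that it becomes exactly $\exp(\alpha t)$ rather than something larger once the maximum over $t\in[T]$ is taken. The inequality above gives a clean bound, but one must be careful that it is the \emph{same} $\alpha$ that appears in the definition of $k_t$, so that the number of levels $T=\lceil\sqrt{\ln(d/(2k))/\alpha}\rceil$ and the per-level blow-up $\exp(\alpha t)$ remain balanced. No other step presents a real difficulty: the spectral approximation follows from \Cref{c:fast-embedding}, the existence and correctness of the solver is a direct invocation of \Cref{thm:apx-gen}, and the $\tilde O(d^2)+O(k^\omega)$ preprocessing is immediate from $T=\tilde O(1)$ (under our regime) and $n_T=\tilde O(k)$.
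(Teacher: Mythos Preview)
Your proposal is correct and follows essentially the same approach as the paper: build $\ma_t=\S_t\ma$ via \Cref{c:fast-embedding}, verify the chain condition, invoke \Cref{thm:apx-gen}, and then use the hypothesis $\log(20\log(320\nu_t/\nu_0))\le\alpha$ to bound the bracketed factor in $R_t$ by $e^{\alpha}$, yielding $R_t\le O(\sqrt{\sigAvg_{k_t}(\ma)/\sigma_d(\ma)^2})\exp(\alpha t)$.

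One small imprecision worth flagging: since the hypothesis is only $\nu_t\approx_4\sigAvg_{k_t}(\ma)$, you may have $\nu_t$ as small as $\tfrac14\sigAvg_{k_t}(\ma)$, so invoking \Cref{c:fast-embedding} with parameter exactly $k_t$ gives a $(1,\sigAvg_{k_t}(\ma))$-embedding whose regularization level could exceed $\nu_t$. The paper handles this by sketching with parameter $4k_t$ instead, using $\sigAvg_{4k_t}(\ma)\le\tfrac14\sigAvg_{k_t}(\ma)\le\nu_t$; this keeps $n_t=O(k_t)+\otilde(1)$ and makes your ``adding more regularization only strengthens the equivalence'' step go through cleanly. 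This is a one-line fix and does not affect your outline.
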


\begin{proof}
For all $t \in [T]$ let $\ma_t = \S_t \ma \in \R^{n_t \times d}$ where $\S_t$ is $(1,\sigAvg_{4 k_t}(\ma))$-embedding for $\ma$ and $n_t = O(k_t)+\otilde(1)$. By \Cref{c:fast-embedding} these $\ma_t$ can be computed whp.\ in $\otilde(\nnz(\A) T) = \otilde(d^2)$ time (since $n = \otilde(d)$ and $T = \otilde(1)$). Note that $\sigAvg_{4k_t}(\ma) \leq \frac{1}{4} \sigAvg_{k_t}(\ma) \leq \nu_t$ and consequently, for $\ma_0 = \ma$ and all $t \in [T]$, 
$\ma_t^\top \ma_t + \nu_t \mI \approx_2 \ma^\top \ma + \nu_t \mI$ and $\ma_{t -1}^\top \ma_{t-1} + \nu_t \mI \approx_2 \ma^\top \ma + \nu_t \mI$ since $\nu_t \geq \nu_{t - 1}$. Therefore, $\ma_t^\top \ma_t + \nu_t \mI \approx_4 \ma_{t-1}^\top \ma_{t-1} + \nu_t \mI$ for all $t \in [T]$ and \Cref{thm:apx-gen}, applied to $((\A_t, \nu_t))_{t \in \{0\} \cup [T]}$, implies that in $\otilde(\nnz(\ma_T) + n_T^\omega) = \otilde(d^2 + k^\omega)$ time it is possible to develop a $\runtime$-time $\epsilon$-solver for $\ma^\top \ma + \nu_0 \mI$ with $\runtime$ bounded as in \eqref{eq:regression-runtime-general} and \eqref{eq:regression-quantities-general}. 

Note that $\nu_t \leq 4 \sigAvg_{k_t}(\ma)$ for all $t \in [T]$, $\nu_0 \geq \frac{1}{4} \sigma_d(\ma)$, and $\nu_t$ increases monotonically with $t$. Therefore, 
\begin{align}
R_{t} &= 
    \sqrt{\frac{\nu_t}{\nu_{0}}}
\left(20\left(\log(320) + \frac{1}{t}
    \log\left(\frac{\nu_{t} }{ \nu_0 }\right)
    \right)\right)^t
\leq 
4
\sqrt{\frac{\sigAvg_{k_{t}}(\ma)}{\sigma_d(\ma)^2}} \exp( \alpha t)
\text{ for all }
t \in [T]
\,.
\end{align}
Therefore, since $\nnz(\ma_0) = \otilde(d^2) = \otilde(d k_0)$, $\nnz(\ma_t) = \otilde(d k_t)$ and $n_t^2 = \otilde(k_t^2) = \otilde(d k_t)$ for all $t \in [T]$, 
\[
R_t \cdot \max\{\nnz(\ma_{t -1}), \nnz(\ma_{t}), n_t^2\} 
 =   O\left( \sqrt{\frac{\sigAvg_{k_{t}}(\ma)}{\sigma_d(\ma)^2}} \exp( \alpha t)
    \right)
    \text{ for all }
    t \in [T]\,.
\]
The result follows by using $\nu_t \leq 4 \sigAvg_{k_{T}}(\ma) \leq 4 \norm{\ma}^2$ and $\nu_{T-1} \geq \nu_0 \geq \frac{1}{4}\sigma_d(\ma)^2$ to bound $\log((\nu_t + \norm{\ma}^2)/\nu_{T-1})$ and using that $T = \otilde(1)$ (to obtain $\max_{t \in [T]}$ in the bound on $\runtime$ rather than $\sum_{t \in [T]}$).\footnote{Alternatively, it is possible to bound the sum directly by opening up the proof of \Cref{l:running-time-general} and bounding the sum of the $\exp(V_t)$ terms in that proof. We instead replace with the max to simplify the proof as its cost is hidden in $\otilde(\cdot)$.} 
\end{proof}

To analyze the running time in \Cref{thm:regression:optimized} we give a straightforward mathematical fact (\Cref{l:generalized-mean}) and use it to prove a key technical lemma, \Cref{l:running-time-general}, which bounds the $k_{t-1} (\sigAvg_{k_{t-1}} / \sigma_d(\ma)^2)^{1/2}) \exp(\alpha t)$ term in the running time (which in turn essentially corresponds to the cost of the $t$-th round of recursion in our regression algorithm). Though we only need the lemma with $p = 1$ for this purpose, we provide it more generally as in \Cref{s:primal-dual} and \Cref{s:psd}  we use it with $p = 2$ to bound running times similarly.

\begin{lemma}\label{l:generalized-mean}
    Consider $\lambda_1\geq\lambda_2\geq...\geq\lambda_d>0$, $k\in[d]$, and $c\in(0,1)$. Then,
    \begin{align*}
        \frac1k\sum_{i=k+1}^d\lambda_i \leq  \Big(\frac1k\sum_{i=1}^d\lambda_i^{c}\Big)^{1/c}.
    \end{align*}
\end{lemma}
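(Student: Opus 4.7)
My plan is a two-step reduction that isolates the roles of the ``tail'' sum $\sum_{i>k}\lambda_i$ and the ``head'' constraint coming from the sorted order $\lambda_1\geq\cdots\geq\lambda_d$. Since $1-c>0$ and $\lambda_i\leq\lambda_k$ for every $i\geq k$, I would start from the pointwise identity $\lambda_i=\lambda_i^c\cdot\lambda_i^{1-c}$ and upper bound the second factor uniformly, yielding $\lambda_i\leq \lambda_k^{1-c}\,\lambda_i^c$ for all $i>k$. Summing over $i>k$ and enlarging the resulting sum to include all indices gives
\[
\sum_{i>k}\lambda_i \;\leq\; \lambda_k^{1-c}\sum_{i>k}\lambda_i^c \;\leq\; \lambda_k^{1-c}\sum_{i=1}^d\lambda_i^c.
\]

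Next I would bound $\lambda_k$ using the ordering on the $c$-th powers. Since $\lambda_1^c\geq\cdots\geq\lambda_k^c$, we have $k\lambda_k^c\leq \sum_{i=1}^k\lambda_i^c\leq \sum_{i=1}^d\lambda_i^c$, so
\[
\lambda_k \;\leq\; \Big(\tfrac1k\textstyle\sum_{i=1}^d\lambda_i^c\Big)^{1/c}.
\]
Raising to the power $1-c\in(0,1)$ and substituting into the previous display gives
\[
\sum_{i>k}\lambda_i \;\leq\; \Big(\tfrac1k\textstyle\sum_{i=1}^d\lambda_i^c\Big)^{(1-c)/c}\cdot\sum_{i=1}^d\lambda_i^c.
\]

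Finally, I would verify the exponents collapse to the claimed form. Writing $A:=\sum_{i=1}^d\lambda_i^c$, the right-hand side equals $A^{(1-c)/c+1}\,k^{-(1-c)/c}=A^{1/c}\,k^{(c-1)/c}=k\cdot (A/k)^{1/c}$, so dividing through by $k$ gives exactly $\frac{1}{k}\sum_{i>k}\lambda_i\leq (A/k)^{1/c}$, which is the claim. There is no serious obstacle; the only thing to be careful about is that the step $\lambda_i^{1-c}\leq\lambda_k^{1-c}$ crucially uses $c<1$ (so $1-c>0$), which is precisely the hypothesis.
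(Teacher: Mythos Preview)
Your proof is correct and follows essentially the same approach as the paper's: both use the pointwise bound $\lambda_i\le\lambda_i^c\lambda_k^{1-c}$ for $i>k$, then control $\lambda_k$ via $k\lambda_k^c\le\sum_{i\le k}\lambda_i^c$, and finally collapse the exponents. The only cosmetic difference is that the paper first bounds $\lambda_k$ by the $c$-mean over the head indices $i\le k$ before enlarging both sums to $\{1,\ldots,d\}$, whereas you enlarge the tail sum first and bound $\lambda_k$ by the $c$-mean over all $d$ indices; the algebra is identical.
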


\begin{proof}
    First, using monotonicity of $\lambda_i$'s and that generalized means upper-bounds the minimum,
    \begin{align*}
        \lambda_k = \min_{i\in[k]}\lambda_i \leq \Big(\frac1k\sum_{i \in [k]}\lambda_i^c\Big)^{1/c}.
    \end{align*}
    Using the above, we obtain the claim as follows:
    \begin{align*}
        \sum_{i>k}\lambda_i \leq \sum_{i>k}\lambda_i^c\lambda_k^{1-c} \leq        \Big(\sum_{i>k}\lambda_i^c\Big)\Big(\frac1k\sum_{i\in[k]}\lambda_i^c\Big)^{\frac{1-c}c}
        \leq \Big(\sum_{i\in[n]}\lambda_i^c\Big)\Big(\frac1k\sum_{i\in[n]}\lambda_i^c\Big)^{\frac1c - 1}\leq k\cdot 
        \Big(\frac1k\sum_{i \in [n]}\lambda_i^c\Big)^{\frac1c}.
    \end{align*}
\end{proof}

\begin{lemma}\label{l:running-time-general}
        For $\A\in\R^{n\times d}$, $k\in [d]$ with $2k \leq d$, and $\alpha\geq 1$, define  $k_t \defeq \max\{\lceil d\exp(-\alpha t^2)\rceil,2k\}$ for all $t \in \{0\} \cup [T]$ where $T \defeq \lceil\sqrt{\log(d/(2k))/\alpha}\rceil$. For all $p \in [1,100]$,\footnote{The constant of $100$ is arbitrary and can be replaced with any real number larger than $1$.} $\eta \in (0,1)$, and $t \in [T]$,
        \begin{align*}
            k_{t-1}^p\sqrt{\frac{\sigAvg_{k_t}(\A)}{\sigma_d^2(\A)}}\exp(\alpha t) = O\left(d^2\bar\kappa_{k,p^{-1}+\eta}(\A) \exp\left(\alpha\left(p (c_p^2 - 1) + \frac{c_p^2}{\eta}\right)\right)\right)
        \text{ where }
        c_p \defeq 1 + \frac{1}{2p}\,.\footnote{\sidford{I didn't bound the sum since there are different ways to do it and unsure if better to just pay the $\sqrt{\log}$ or get $1/\eta$ factors (which I think is actually improvable $1/\sqrt{\eta}$) - happy to discuss}}
        \end{align*}
    \end{lemma}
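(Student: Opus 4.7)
The plan is to apply Lemma~\ref{l:generalized-mean} to replace the tail sum $\sigAvg_{k_t}(\A)$ with a power of $\bar\kappa_{k,q}(\A)$ for $q := p^{-1} + \eta \in (0, 2)$, then substitute the explicit bounds implied by the definition of $k_t$, and finally maximize a concave quadratic in $t$.

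First, I would apply Lemma~\ref{l:generalized-mean} to the non-increasing positive sequence $\tilde\lambda_j := (\sigma_{k+j}(\A)/\sigma_d(\A))^2$ for $j \in [d-k]$ at position $k_t - k \in [d-k]$ (using $k_t \geq 2k$ and $k_t \leq d$) with exponent $c = q/2 \in (0,1)$. This, together with $k_t - k \geq k_t/2$ and $d - k \leq d$, will yield the clean bound
\[
\sqrt{\sigAvg_{k_t}(\A)/\sigma_d(\A)^2} \;\leq\; (2d/k_t)^{1/q}\,\bar\kappa_{k,q}(\A).
\]

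Next, I would substitute the bounds $k_t \geq d\exp(-\alpha t^2)$ (immediate from the ceiling) and $k_{t-1} \leq 2 d\exp(-\alpha(t-1)^2)$. The latter is routine when $k_{t-1}$ equals the ceiling term; when instead $k_{t-1} = 2k$, the constraint $t \leq T = \lceil\sqrt{\log(d/(2k))/\alpha}\rceil$ forces $(t-1)^2 \leq \log(d/(2k))/\alpha$, which, upon exponentiating, gives $d\exp(-\alpha(t-1)^2) \geq 2k = k_{t-1}$, so the bound still holds. Combining with the previous step will give
\[
k_{t-1}^p\sqrt{\sigAvg_{k_t}(\A)/\sigma_d^2(\A)}\exp(\alpha t) \;\leq\; 2^{p+1/q}\,d^p\,\bar\kappa_{k,q}(\A)\,\exp(\alpha\,g(t)),
\]
where $g(t) := t^2/q - p(t-1)^2 + t = (1/q - p)t^2 + (2p+1)t - p$.

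Finally, since $q > 1/p$ makes $1/q - p < 0$, $g$ is a concave quadratic with maximum at $t^\star = (2p+1)/(2(p - 1/q))$ and value $g(t^\star) = -p + (2p+1)^2/(4(p - 1/q))$. The hardest part, though purely algebraic, will be simplifying this into the lemma's claimed exponent: using the identities $(2p+1)^2/(4p^2) = (1 + 1/(2p))^2 = c_p^2$ and $p - 1/q = p^2\eta/(1 + p\eta)$ will collapse $g(t^\star)$ to $-p + c_p^2(1+p\eta)/\eta = p(c_p^2 - 1) + c_p^2/\eta$, matching the stated bound exactly. For $p$ in the range of interest (so that $d^p \leq d^2$), the factor $2^{p+1/q}$ is a constant absorbed by the $O(\cdot)$ notation, completing the proof.
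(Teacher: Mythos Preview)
Your proposal is correct and follows essentially the same approach as the paper's proof: apply \Cref{l:generalized-mean} with exponent $c=q/2$ to the shifted tail (using $k_t\ge 2k$), substitute the explicit bounds $k_t\ge d\exp(-\alpha t^2)$ and $k_{t-1}\le O(d\exp(-\alpha(t-1)^2))$, and then maximize the resulting concave quadratic in $t$; your $g(t)$ is exactly the paper's $V_t$ after expansion, and your simplification to $p(c_p^2-1)+c_p^2/\eta$ matches theirs. Your treatment is in fact slightly cleaner than the paper's, since you bound $k_{t-1}$ from above directly (with the case analysis for $k_{t-1}=2k$ handled via $t\le T$), whereas the paper detours through $k_{t-1}=O(k_t\exp(\alpha(2t-1)))$ and then states ``$k_{t-1}\ge d\exp(-\alpha(t-1)^2)$'' where an \emph{upper} bound is what is actually needed---your version avoids that slip. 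Your closing remark that the argument yields $d^p$ (hence $d^2$ only for $p\le 2$) is accurate and matches what the paper's own proof establishes; the lemma is only invoked in the paper with $p\in\{1,2\}$, so this suffices for all applications.
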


\begin{proof}
Fix a $t \in [T]$. Note that $k_t \geq 2k$ and therefore, $k_t - k \geq \frac{1}{2} k_t$. Consequently, applying \Cref{l:generalized-mean} with $\lambda_1,\ldots,\lambda_{d - k} = \frac{\sigma_{k +1}(\ma)^2}{\sigma_d(\ma)^2},\ldots,\frac{\sigma_{d}(\ma)^2}{\sigma_d(\ma)^2}$ and $c = q/2$ for $q \defeq p^{-1} + \eta$, yields,  
\[
\frac{1}{k_t - k} \sum_{i > k_t} \frac{\sigma_i(\ma)^2}{\sigma_d(\ma)^2}
\leq \left(\frac{1}{k_t  - k} \sum_{i > k} \frac{\sigma_i(\ma)^q}{\sigma_d(\A)^q} \right)^{\frac{2}{q}}
= \left(\frac{d - k}{k_t - k}\right)^{\frac{2}{q}} \bar{\kappa}_{k,q}(\ma)^2
\leq \left(\frac{2d}{k_t}\right)^{\frac{2}{q}} \bar{\kappa}_{k,q}(\ma)^2\,.
\]
Since $\frac{\sigAvg_{k_t}(\ma)}{\sigma_d(\ma)^2} = \frac{1}{k_t} \sum_{i > k_t} \frac{\sigma_i(\ma)^2}{\sigma_d(\ma)^2} \leq \frac{1}{k_t - k} \sum_{i > k_t} \frac{\sigma_i(\ma)^2}{\sigma_d(\ma)^2}$ and $\frac{1}{q} = \frac{p}{1 + \eta p}$,
\[
    \sqrt{\frac{\sigAvg_{k_t}(\ma)}{\sigma_d(\ma)^2}}
\leq 
    \left(\frac{2d}{k_t}\right)^{\frac{1}{q}} \bar{\kappa}_{k,q}(\ma)
= 
O\left(
    \left(\frac{d}{k_{t-1}}\right)^{\frac{1}{q}} 
    \exp\left(\frac{\alpha}{q} (2t - 1) \right)
    \bar{\kappa}_{k,q}(\ma)
\right)
\]
where we used that $k_{t-1} = O(k_t \exp(\alpha(t^2 - (t - 1)^2)) = O(k_t \exp(\alpha(2 t -  1))$ and $\frac{1}{q} = O(1)$. Since $k_{t-1} \geq d \cdot \exp(-\alpha (t-1)^2)$ and $p \geq q^{-1}$, the result follows from the following calculation,
\begin{align*}
k_{t-1}^p\sqrt{\frac{\sigAvg_{k_t}(\A)}{\sigma_d^2(\A)}}\exp(\alpha t)
&= O\left( 
    d^p
    \left(\frac{k_{t-1}}{d}\right)^{p - \frac{1}{q}}
    \exp\left(\alpha\left(t + \frac{1}{q} (2t - 1)\right)\right)
    \bar{\kappa}_{k,q}(\ma)
\right)\\
&= O\left( 
    d^p
    \exp\left(\alpha\left(
    - \left(p - \frac{1}{q}\right) (t - 1)^2 + t + \frac{1}{q} (2t - 1)\right)\right)
    \bar{\kappa}_{k,q}(\ma)
\right)\\
&= O\left( 
    d^p
    \bar{\kappa}_{k,q}(\ma)
    \cdot \exp\left( V_t \right)\right)
\,.
\end{align*}
where (using that $-a t^2 + bt = -a (t - b/(2a))^2 + b^2/(4a) \leq b^2 / (4a)$ and $p - \frac{1}{q} = \frac{\eta p^2}{1 + \eta p}$)
\begin{align*}
V_t &\defeq - \left(p - \frac{1}{q}\right)(t - 1)^2 + 
\left(1 + \frac{2}{q} \right) t - \frac{1}{q}\\
&=- \frac{\eta p^2}{1 + \eta p} t^2 + 
\left(1 + 2 p \right) t - p
\leq \frac{(1 + \eta p)(1 + 2 p)^2}{4 \eta p^2}
- p
= (\eta^{-1} + p) \cdot c_p^2 - p  
\,.
\end{align*}
\end{proof}

Combining \Cref{thm:regression:optimized} with \Cref{l:running-time-general} (along with a similar approach as that in the proof of \Cref{t:ls}) proves our main theorem of this subsection regarding improved running times for solving regression under fine-grained assumptions on the condition number of $\A$.

\begin{theorem}\label{t:ls-optimized-warmup}
    There is an algorithm that given $\A\in\R^{n\times d}$, $\b\in\R^d$, $\epsilon\in(0,1)$, $k \in [d]$, and $\tilde{\sigma} \in \R^d$ where $\tilde{\sigma}_i^2 \approx_2 \sigma_i(\ma)^2$ for all $i \in [d]$, outputs $\xout\in\R^d$ that whp.\ satisfies $\|\A\xout-\A\xopt\|\leq \epsilon\|\A\xopt\|$, where $\x^* = \argmin_{\x\in\R^d}\|\A\x-\b\|^2$, in time
    \begin{align*}
        \tilde O\bigg(\Big(\nnz(\A) + \min_{\eta \in (0,1)}
    d^2\bar\kappa_{k,1+\eta}(\A) 
    \ln^{\frac{9}{4}(1+\eta^{-1})}(d \kappa(\ma))
    \Big)\log(1/\epsilon)\bigg) + O(k^\omega).
    \end{align*}        
\end{theorem}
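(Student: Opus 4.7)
The plan is to combine Theorem~\ref{thm:regression:optimized} with Lemma~\ref{l:running-time-general} at $p=1$, after a preliminary size reduction and a final outer preconditioning. First, apply Corollary~\ref{c:fast-embedding} with $k=d$ to form $\tilde\A=\S\A\in\R^{\tilde O(d)\times d}$ with $\tilde\A^\top\tilde\A\approx_2\A^\top\A$, in $\otilde(\nnz(\A))$ time; by Weyl's inequality $\tilde\sigma$ remains a $\approx_4$-approximation of the squared singular values of $\tilde\A$. Choose $\alpha=\Theta(\log\log(d\kappa(\A)))$, $k_t=\max\{\lceil d\exp(-\alpha t^2)\rceil,2k\}$, $T=\lceil\sqrt{\ln(d/(2k))/\alpha}\rceil$, and construct a monotone sequence $\nu_0\leq\cdots\leq\nu_T$ with $\nu_0\approx_4\sigma_d(\tilde\A)^2$ and $\nu_t\approx_4\sigAvg_{k_t}(\tilde\A)$, derived from $\tilde\sigma$ (e.g., set $\nu_t=\max\{u_t,\nu_{t-1}\}$ for $u_t\defeq\tfrac{1}{k_t}\sum_{i>k_t}\tilde\sigma_i^2$; since $\sigAvg_{k_t}(\tilde\A)$ is non-decreasing in $t$, this $\nu_t$ lies in $[u_t,4\sigAvg_{k_t}(\tilde\A)]$). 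Since $\nu_T=O(\|\A\|^2)$ and $\nu_0=\Omega(\sigma_d(\A)^2)$, the ratio $\nu_T/\nu_0=O(d\kappa(\A)^2)$ is polynomial in $d\kappa(\A)$, so the hypothesis $\log(20\log(320\nu_t/\nu_0))\leq\alpha$ of Theorem~\ref{thm:regression:optimized} is satisfied.

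Invoking Theorem~\ref{thm:regression:optimized} yields, in $\otilde(d^2)+O(k^\omega)$ preprocessing time, a solver for $\tilde\A^\top\tilde\A+\nu_0\I$ whose per-call cost is bounded by~\eqref{eq:regression_warmup_runtime_unsimplified}. For every $\eta\in(0,1)$, applying Lemma~\ref{l:running-time-general} with $p=1$ (so $c_1^2=9/4$) gives
\[
k_{t-1}\sqrt{\sigAvg_{k_t}(\A)/\sigma_d(\A)^2}\exp(\alpha t)=O\bigl(d\,\bar\kappa_{k,1+\eta}(\A)\exp(\alpha(5/4+9/(4\eta)))\bigr).
\]
Multiplying by the extra $d$ in~\eqref{eq:regression_warmup_runtime_unsimplified} and using $\exp(\alpha c)=O(\log(d\kappa(\A))^c)$ bounds the per-call cost by $\otilde(d^2\bar\kappa_{k,1+\eta}(\A)\log^{(9/4)(1+\eta^{-1})}(d\kappa(\A))\log(\epsilon^{-1}))$, absorbing $5/4+9/(4\eta)\leq(9/4)(1+\eta^{-1})$ into the exponent. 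Since the algorithm depends only on $\alpha$ (not on $\eta$), the bound holds for the minimum over $\eta\in(0,1)$ as well.

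Finally, convert this into an $\epsilon$-solver for $\A^\top\A$ (and hence the regression objective by solving $\A^\top\A\x=\A^\top\b$). Since $\tilde\A^\top\tilde\A\approx_2\A^\top\A$ and $\nu_0=O(\sigma_d(\A)^2)$, we have $\A^\top\A\approx_{O(1)}(\tilde\A^\top\tilde\A+\nu_0\I)$, so Lemma~\ref{lem:precon} converts a constant-accuracy inner solver into an $\epsilon$-solver for $\A^\top\A$ in $O(\log(1/\epsilon))$ outer iterations, each applying $\A$ once (contributing the $\nnz(\A)\log(1/\epsilon)$ term). Unknown scalars such as $\sigma_d(\A)^2$ and $\kappa(\A)$ are handled via log-scale grid search as in the proof of Theorem~\ref{t:ls}, at polylogarithmic overhead. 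The main obstacle is the choice of $\alpha$: it must be large enough that the chain condition $\log(20\log(320\nu_T/\nu_0))\leq\alpha$ is satisfied (which forces $\alpha=\Omega(\log\log(d\kappa(\A)))$) and small enough that $\exp(\alpha(5/4+9/(4\eta)))$ remains polylogarithmic in $d\kappa(\A)$ (which forces $\alpha=O(\log\log(d\kappa(\A)))$). The choice $\alpha=\Theta(\log\log(d\kappa(\A)))$ is what balances these two competing constraints and turns the $\exp(\alpha\cdot\text{const}/\eta)$ overhead into the stated $\log^{O(1/\eta)}(d\kappa(\A))$ factor.
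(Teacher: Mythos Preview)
Your proposal is correct and follows essentially the same approach as the paper: reduce to a near-square $\tilde\A$, set $\alpha=\Theta(\log\log(d\kappa(\A)))$ and $\nu_t$ from the given $\tilde\sigma$, invoke Theorem~\ref{thm:regression:optimized}, bound the per-level cost via Lemma~\ref{l:running-time-general} with $p=1$ (yielding the $9/4\cdot(1+\eta^{-1})$ exponent), and wrap with Lemma~\ref{lem:precon}. Two minor points: the grid search you mention is unnecessary here because $\tilde\sigma$ is given (so $\sigma_d(\A)^2$ and $\kappa(\A)$ are known up to constants), and you should explicitly dispatch the edge case $2k>d$ (where Theorem~\ref{thm:regression:optimized}'s hypothesis fails) by direct inversion, as the paper does.
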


\begin{proof}[Proof of Theorem \ref{t:ls-optimized-warmup}]
    First, note that following the same argument as in the proof of Theorem \ref{t:ls}, we can reduce the task of solving for $\A^\top\A$ to $\epsilon$-accuracy to $\tilde O(\sqrt{\tilde{\sigma}_d/\sigma_{d}^2(\A)}\log1/\epsilon) = O(\log(1/\epsilon))$ times solving for $\tilde\A^\top\tilde\A+\tilde{\sigma}_d^2 \I$ to constant accuracy, where $\tilde\A\in\R^{(O(d)+\tilde O(1))\times d}$ and $\tilde\A^\top\tilde\A\approx_2\A^\top\A$.
    
    If $d = O(1)$ or $k> d/2$ then we can obtain the result by  directly computing the inverse of $\tilde\A^\top\tilde\A+\tilde{\sigma}_d^2 \I$ in $\otilde(d)+ O(d^\omega)$ time and then solving $\tilde\A^\top\tilde\A+\tilde{\sigma}_d^2 \I$ when required in $O(d^2)$ time. Otherwise, we invoke \Cref{thm:regression:optimized} to solve $\tilde\A^\top\tilde\A+\tilde{\sigma}_d^2 \I$ to constant accuracy. 
    In this invocation of \Cref{thm:regression:optimized}, we pick $\alpha = \log(20\log(320 \cdot 256 d\cdot \tilde{\sigma}_1^2/\tilde{\sigma}_d^2))$, $\nu_0 = \sigma_d^2$, and $\nu_t = \max\{\nu_{t-1}, \frac{1}{k_t} \sum_{i > k_t} \tilde{\sigma}_i^2\}$ for all $t \in [T]$ for $k_t$ and $T$ as defined in \Cref{thm:regression:optimized}. Since $\tilde\A^\top\tilde\A\approx_2\A^\top\A$, each $\sigma_i(\ma)^2 \approx_2 \sigma_i(\tilde\A)^2$ and $\tilde{\sigma}_i^2 \approx_4 \sigma_i(\tilde\A)^2$. 
    Therefore, $\frac{1}{k_t} \sum_{i > k_t} \tilde{\sigma}_i^2 \approx_4 \sigAvg_{k_t}(\tilde\A)$ for all $t \in [T]$. Additionally, note that $\sigAvg_{k_1}(\tilde\A) \geq \frac{d - k_1}{k_1} \sigma_d(\ma)^2 \geq \sigma_d(\ma)^2$ since $k_1 \leq \frac{d}{2}$ (as $d = O(1)$ and $k > d/2$ do not hold). Consequently, the $\nu_t$ increase monotonically, with $\nu_0 \approx_4 \sigma_d(\tilde\A)^2$ and $\nu_t \approx_4 \sigAvg_{k_t}(\tilde\A)$ for each $t \in [T]$ (since if $a_1 \approx_4 b_1$ and $a_2 \approx_4 b_2$ for positive $a_1,a_2,b_1,b_2$ with $b_1 \leq b_2$ then $\max\{a_1, a_2\} \approx_4 b_2$). 
    Finally, note that $2k \leq d$ and $\log(20 \log(320 \nu_t / \nu_0)) \leq \alpha$ as\footnote{The factor of $d$ in $O(d \kappa(\ma))$ below arises since $\sigAvg_k(\tilde\ma) = O(d \sigma_1(\tilde\ma))$. However, when $v_t \geq \sigma_1(\tilde\ma)$ it is possible to solve $\tilde{\ma}^\top \tilde{\ma} + \nu_t \mI$ to accuracy $\epsilon$ in $O(\nnz(\ma)\log(1/\epsilon))$ directly just by applying simpler iterative methods, e.g., Richardson iteration \cite{golub1961chebyshev}.
    Consequently, more careful termination of the recursion could potentially remove this $d$ factor in the logs in this algorithm and others in the paper. However, we write bounds with this $d$ factor in logarithmic terms for the sake of simplicity.}
    \[
    \frac{1}{16} \nu_t / \nu_0 \leq \nu_T / \nu_0 \leq \sigAvg_{2k}(\tilde\A)/\sigma_d(\tilde\A)^2 = \frac{1}{2k} \sum_{i > 2k} \frac{\sigma_i(\tilde\A)^2}{\sigma_d(\tilde\A)^2}
    \leq \frac{d - k}{2k} \kappa(\tilde\A) \leq 16 d \frac{\tilde{\sigma}_1(\ma)^2}{\tilde{\sigma}_d(\ma)^2}
    = O(d \kappa(\ma)^2)
    \,.
    \]
    Consequently, \Cref{thm:regression:optimized} when combined with \Cref{l:running-time-general} implies that after spending $\otilde(d^2) + O(k^\omega)$ time, $\tilde\A^\top\tilde\A+\tilde{\sigma}_d^2 \I$ can be solved to constant accuracy in time
    \begin{align*}
    \runtime
    &=  
    O\left(
    \min_{\eta \in (0,1)}
    d^2\bar\kappa_{k,1+\eta}(\A) \exp\left(\alpha\left(((3/2)^2 - 1) + \frac{(3/2)^2}{\eta}\right)\right)
    \ln(\kappa(\tilde{\ma}))
    \right)\\
    &=
    O\left(
    \min_{\eta \in (0,1)}
    d^2\bar\kappa_{k,1+\eta}(\A) 
    \ln^{\frac{9}{4}(1+\eta^{-1})}(d \kappa(\ma))
    \right)
    \end{align*}
    where we used that $\exp(\alpha) = O(\log(d \kappa(\tilde{\ma}))) = O(\log(d \kappa(\ma)))$.
\end{proof}

    It is straightforward to lessen the assumption of needing to know $\sigma_i(\ma)^2$ approximately in \Cref{t:ls-optimized-warmup} at the cost of an extra sub-polynomial factor in the running time. To see how this is possible, first note that careful inspection of the proof of \Cref{t:ls-optimized-warmup} shows that the  algorithm in \Cref{t:ls-optimized-warmup} only requires $\tilde{\sigma}_1^2 \approx_2 \sigma_1(\ma)^2$, $\tilde{\sigma}_d^2 \approx_2 \sigma_d(\ma)^2$, and $\nu_t \approx_4 \sigAvg_{k_t}(\tilde\A)^2$ for all $t \in [T]$. Furthermore, $\|\tilde\A\|_F^2$ is easily computable in $O(\nnz(\ma))$ time and satisfies $\frac{1}{d}\|\A\|_F^2 \leq \sigma_1(\ma)^2 \leq \|\A\|_F^2$. Consequently, given any $\ell \in (0, \tilde{\sigma_d})$ one could consider the  $O(\log(\norm{\ma}_F/\ell))$ different possible values of $\tilde{\sigma}_1^2$ and $\tilde{\sigma}_d^2$ in the set $S = \{M 2^{-t} | t \in [\lceil \log_2(M/\ell)\rceil]\}$ and then for the induced $T$ consider the different possible values of $\sigma_d(\ma)^2$ in $S$. For each of these possible values, we could then run the algorithm in \Cref{t:ls-optimized-warmup} in parallel and output the point that has the lowest value within the desired time budget. This would involve considering at most $\log(\norm{\ma}_F^2 / \ell)^{O(\sqrt{\log(\norm{\ma}_F^2 / \ell)})} = \exp(O(\log(d \norm{\ma}^2 / \ell) \log \log (d \norm{\ma}^2 / \ell)))$ different executions of the algorithm.

    In the next section we show how to improve upon this algorithm both in terms of how it handles unknown parameters and its dependence on the average conditioning of $\ma$.

\section{Alternating Primal-Dual Regression Solver}
\label{s:primal-dual}
In this section, we present our primal-dual preconditioning scheme for regression, which allows us to further improve the conditioning dependence in our regression algorithms. We also describe an inductive procedure for finding the hyper-parameters in this algorithm without introducing sub-polynomial factors. With these tools, we prove \Cref{t:ls-optimized}. 

\subsection{Reduction Tools}
We first provide a lemma, which encapsulates the idea of dual preconditioning into a black-box reduction. The result assumes an approximation guarantee between matrices $\A\A^\top+\nu\I$ and $\B\B^\top+\nu\I$ (dual case), as opposed to between matrices $\A^\top\A+\nu\I$ and $\B^\top\B+\nu\I$ (primal case, considered in \Cref{sec:primal:regression}), while still using and constructing solvers for the latter. This is done to allow the second dimension $\A$ and $\B$ to differ.
\begin{lemma}\label{l:primal-dual-recursion}
    Consider $\A\in\R^{n\times d}$, $\B\in\R^{n\times s}$, and $\nu>0$. Let $\kappa=1+\|\A\|^2/\nu$ and suppose that
    \begin{align*}
        \A\A^\top + \nu\I \approx_4 \B\B^\top+\nu\I.
    \end{align*}
    Then, given a $\frac1{160\kappa^2}$-solver $f$ for $\A^\top\A+\nu\I$, we can construct an $\epsilon$-solver for $\B^\top\B+\nu\I$ that applies $f$  at most $16\log(8\kappa/\epsilon)$ times and spends additional 
    $O((n+ \runtime_{\A} + \runtime_{\B})\log(\kappa/\epsilon))$ time.
    \end{lemma}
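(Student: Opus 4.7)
The plan is to chain two applications of the Woodbury solver (Lemma \ref{l:woodbury}) around a single round of preconditioned AGD (Lemma \ref{lem:precon}). The target solver is built by traversing
\[
\A^\top\A+\nu\I \;\longrightarrow\; \A\A^\top+\nu\I \;\longrightarrow\; \B\B^\top+\nu\I \;\longrightarrow\; \B^\top\B+\nu\I,
\]
where the first and third arrows are exact Woodbury identities that handle the dimension mismatch between the $d\times d$, $n\times n$, and $s\times s$ worlds, and the middle arrow is the only approximate step, which invokes the hypothesized dual spectral approximation.

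The first step will instantiate Lemma \ref{l:woodbury} with $\C=\A$, $\W=\I$, and target $\M=\A\A^\top+\nu\I$. Because $\|\M\|\le\|\A\|^2+\nu=\kappa\nu$, the accuracy Lemma \ref{l:woodbury} demands of the given solver is $\Theta(1/\kappa^2)$; in particular, the hypothesized $\tfrac{1}{160\kappa^2}$-accuracy of $f$ suffices to produce a $\tfrac{1}{160}$-accurate solver $g$ for $\A\A^\top+\nu\I$, at a cost of one call to $f$ plus $O(n+\runtime_{\A})$ per invocation. The second step will use $g$ as a preconditioner for $\B\B^\top+\nu\I$ via Lemma \ref{lem:precon}. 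Scaling the preconditioner to $\mn=4(\A\A^\top+\nu\I)$ will be needed to enforce $\mm\preceq\mn$ with $\mm=\B\B^\top+\nu\I$; combined with $\A\A^\top+\nu\I\preceq 4(\B\B^\top+\nu\I)$ this gives $\mm\preceq\mn\preceq 16\mm$, so that $\kappa_{\mathrm{pre}}=16$ and $g$'s $\tfrac{1}{160}$-accuracy exactly meets PAGD's $\tfrac{1}{10\kappa_{\mathrm{pre}}}$ requirement. PAGD then runs for $O(\sqrt{\kappa_{\mathrm{pre}}}\log(1/\epsilon_h))=O(\log(1/\epsilon_h))$ iterations and outputs a solver $h$ for $\B\B^\top+\nu\I$. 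The third step will re-apply Lemma \ref{l:woodbury}, now with $\C=\B^\top$, $\W=\I$, and target $\B^\top\B+\nu\I$; since $\|\B^\top\B+\nu\I\|=\|\B\B^\top+\nu\I\|\le 4\kappa\nu$ by the dual approximation, the accuracy asked of $h$ is $\epsilon_h=\Theta(\epsilon/\kappa^2)$, so PAGD runs for $O(\log(\kappa/\epsilon))$ iterations, which after careful constant collection yields the stated $16\log(8\kappa/\epsilon)$ bound on calls to $f$.

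The runtime analysis will then read off directly: each PAGD iteration performs one call to $g$ (hence one call to $f$), one matrix-vector product with $\mm=\B\B^\top+\nu\I$, and $O(n)$ vector arithmetic, contributing $O(n+\runtime_{\A}+\runtime_{\B})$ per iteration; summing over $O(\log(\kappa/\epsilon))$ iterations yields the claimed $O((n+\runtime_{\A}+\runtime_{\B})\log(\kappa/\epsilon))$ overhead. The main care point will be accuracy bookkeeping: both Woodbury steps introduce a $(\|\cdot\|/\nu)^2\le O(\kappa^2)$ penalty, which composes multiplicatively with PAGD's $\sqrt{\kappa_{\mathrm{pre}}}$ iteration blow-up, and must be absorbed into the hypothesized $\tfrac{1}{160\kappa^2}$-accuracy of $f$ and the promised $16\log(8\kappa/\epsilon)$ iteration bound. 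There is no conceptual obstacle beyond these constant-tracking details.
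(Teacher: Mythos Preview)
Your proposal is correct and follows essentially the same route as the paper: Woodbury (Lemma~\ref{l:woodbury}) from $\A^\top\A+\nu\I$ to $\A\A^\top+\nu\I$, then preconditioned AGD (Lemma~\ref{lem:precon}) with $\kappa_{\mathrm{pre}}=16$ to reach $\B\B^\top+\nu\I$, then Woodbury again to reach $\B^\top\B+\nu\I$. Your accuracy bookkeeping (the two $\kappa^2$ penalties from Woodbury and the $\tfrac{1}{160}$ PAGD requirement) and iteration count $\lceil 16\log(2/\epsilon_h)\rceil$ with $\epsilon_h=\Theta(\epsilon/\kappa^2)$ match the paper's exactly.
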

    \begin{proof}
        We start by applying the Woodbury formula to $\A^\top\A+\nu\I$ and to $\tilde\A\tilde\A^\top+\nu\I$, obtaining that:
        \begin{align}
            (\B^\top\B+\nu\I)^{-1} 
            &= \frac1\lambda\big(\I - \B^\top(\B\B^\top+\nu\I)^{-1}\B\big),\label{eq:woodbury1}
            \\
            (\A\A^\top+\nu\I)^{-1}
            &= \frac1\lambda\big(\I - \A(\A^\top\A+\nu\I)^{-1}\A^\top\big).\label{eq:woodbury2}
        \end{align}
        To solve for $\B^\top\B+\nu\I$ using \eqref{eq:woodbury1}, we construct a solver for $\B\B^\top+\nu\I$ using $\A\A^\top+\nu\I$ as a preconditioner. To do this, we use \Cref{l:woodbury} to produce an $\epsilon_g$-solver $g$ for $\A\A^\top+\nu\I$ from the $\frac{\epsilon_g\nu^2}{\|\A\|^4}$-solver $f$ (which applies $f$, $\A$, and $\A^\top$ once 
        each). Then, using Lemma \ref{lem:precon} and that $\A^\top\A^\top+\nu\I\approx_4\B\B^\top+\nu\I$, we use a $\frac1{160}$-solver $g$, to obtain an $\epsilon_h$-solver $h$ for $\B\B^\top+\nu\I$ that applies $g$, $\B$, and $\B^\top$ at most $\lceil16\log(2/\epsilon_h)\rceil$ times each. Finally, applying a $\frac{\epsilon\nu^2}{16\|\A\|^4}$-solver $h$ along with $\B$ and $\B^\top$ once each, via Lemma \ref{l:woodbury} and \eqref{eq:woodbury1} we obtain a solver for $\B^\top\B+\nu\I$.
        Putting this together, with $\epsilon_g=\frac1{160}$ and $\epsilon_h=\frac{\epsilon\nu^2}{16\|\A\|^4}$, we see that the number of times that $f$ is applied is: 
        $\lceil16\log(2/\epsilon_h)\rceil \leq \lceil16\log(32\kappa^2/\epsilon)\rceil\leq 32\log(2\kappa/\epsilon)$.
    \end{proof}
Next, we show an auxiliary result, which ensures that the spectral sums $\sigAvg_k$ associated with the matrix $\A$ do not get significantly distorted after a right-hand-side application of a subspace embedding matrix.
\begin{lemma}\label{l:sigavg}
    Suppose that random matrix $\mPi\in\R^{s\times d}$ has $(\epsilon,\delta,1,\ell)$-OSE moments. Then, for any $n\geq d\geq k\geq 1$ and matrix $\A\in\R^{n\times d}$, with probability $1-\delta$, we have $\sigAvg_k(\A\mPi^\top)\leq(1+\epsilon)\sigAvg_k(\A)$.
\end{lemma}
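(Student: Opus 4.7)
The plan is to bound $\sum_{i>k}\sigma_i(\A\mPi^\top)^2$ directly by exhibiting a specific rank-$k$ approximation to $\A\mPi^\top$. Let $\A_k$ denote the best rank-$k$ approximation of $\A$ (from its SVD), so that $\A-\A_k$ has Frobenius norm squared $\sum_{i>k}\sigma_i(\A)^2$. Since $\A_k \mPi^\top$ has rank at most $k$, the Eckart--Young characterization of the tail singular values gives
\[
\sum_{i>k}\sigma_i(\A\mPi^\top)^2 \;\leq\; \|(\A-\A_k)\mPi^\top\|_F^2.
\]
So it suffices to show that with probability $1-\delta$, $\|(\A-\A_k)\mPi^\top\|_F^2 \leq (1+\epsilon)\sum_{i>k}\sigma_i(\A)^2$; dividing by $k$ then yields the claim.

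Next I would rewrite $\|(\A-\A_k)\mPi^\top\|_F^2 = \tr(\mPi B \mPi^\top)$ where $B = (\A-\A_k)^\top(\A-\A_k) \succeq 0$, noting that $\tr(B) = \sum_{i>k}\sigma_i(\A)^2$. Taking the eigendecomposition $B = \sum_i \lambda_i u_i u_i^\top$ with unit eigenvectors $u_i \in \R^d$, we get
\[
\tr(\mPi B\mPi^\top) - \tr(B) \;=\; \sum_i \lambda_i\bigl(\|\mPi u_i\|^2 - 1\bigr).
\]
The key step is then to bound the $L^\ell$-norm of this deviation. By Minkowski's inequality in $L^\ell$ (valid since $\ell \geq 2$), the norm of the sum is at most the sum of norms, and by the $(\epsilon,\delta,1,\ell)$-OSE moment assumption applied to each unit vector $u_i$ we have $\bigl(\E|\|\mPi u_i\|^2 - 1|^\ell\bigr)^{1/\ell} < \epsilon\delta^{1/\ell}$. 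This yields
\[
\bigl(\E|\tr(\mPi B\mPi^\top) - \tr(B)|^\ell\bigr)^{1/\ell} \;<\; \epsilon\delta^{1/\ell} \sum_i \lambda_i \;=\; \epsilon\delta^{1/\ell}\tr(B).
\]

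Finally, I would apply Markov's inequality to the $\ell$-th moment: the probability that the deviation exceeds $\epsilon\tr(B)$ is at most $\delta$. Thus with probability $1-\delta$, $\tr(\mPi B\mPi^\top) \leq (1+\epsilon)\tr(B)$, which chained with the Eckart--Young bound above completes the proof. The main subtlety is recognizing that $(\epsilon,\delta,1,\ell)$-OSE moments give a genuine $L^\ell$-norm bound on scalar quadratic-form deviations (since for a $1\times 1$ matrix, the spectral norm is the absolute value), so Minkowski can transfer a pointwise $L^\ell$ bound into an aggregate one weighted by the eigenvalues of $B$ — no union bound over eigenvectors is needed, which is what avoids losing factors of the dimension.
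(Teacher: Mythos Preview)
Your proof is correct and follows essentially the same approach as the paper: bound the tail by $\|(\A-\A_k)\mPi^\top\|_F^2$ via Eckart--Young, then control the Frobenius norm deviation using the $(\epsilon,\delta,1,\ell)$-OSE moment bound together with Minkowski's inequality and Markov's inequality. The only cosmetic difference is that the paper decomposes $\|\B\mPi^\top\|_F^2$ as $\sum_i \|\b_i\|^2\,\|\mPi(\b_i/\|\b_i\|)\|^2$ over the \emph{rows} $\b_i$ of $\B=\A-\A_k$, whereas you decompose it as $\sum_i \lambda_i\,\|\mPi u_i\|^2$ over the \emph{eigenvectors} of $\B^\top\B$; both are weighted sums of $\|\mPi v\|^2$ for unit vectors $v$ and the rest of the argument is identical.
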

\begin{proof}
    For a matrix $\B$, let $[\B]_k$ denote the best rank $k$ approximation of $\B$ (i.e., the top-$k$ part of its singular value decomposition). Then,
    \begin{align*}
        \sigAvg_k(\A\mPi^\top) 
        &= \frac1k\|\A\mPi^\top - [\A\mPi^\top]_k\|_F^2
        \leq \frac1k\|\A\mPi^\top-[\A]_k\mPi^\top\|_F^2 = \frac1k\|\B\mPi^\top\|_F^2,
    \end{align*}
    where $\B = \A-[\A]_k$. Let $X_i = \frac{\|\mPi\b_i\|^2}{\|\b_i\|^2} - 1$, where $\b_i$ is the $i$th rows of $\B$. Then, using 
    $\|X\|_\ell\defeq(\E|X|^\ell)^{1/\ell}$,
    \begin{align*}
        \Big\|\|\B\mPi^\top\|_F^2 - \|\B\|_F^2\Big\|_{\ell} 
        &= \bigg\|\sum_{i \in [n]}\Big(\|\mPi\b_i\|^2 - \|\b_i\|^2\Big)\bigg\|_{\ell}
        = \bigg\|\sum_{i \in [n]}\|\b_i\|^2 X_i\bigg\|_{\ell}
        \\
        &\leq \sum_{i \in [n]}\|\b_i\|^2\|X_i\|_{\ell}
        \leq \epsilon\delta^{1/\ell}\sum_{i \in [n]}\|\b_i\|^2 = \epsilon\delta^{1/\ell}\|\B\|_F^2.
    \end{align*}
    Markov's inequality now implies that $\|\B\mPi^\top\|_F^2-\|\B\|_F^2 \leq \epsilon\|\B\|_F^2$ with probability $1-\delta$. 
    This concludes the proof since $\sigAvg_k(\A\mPi^\top)=\frac1k\|\B\mPi^\top\|_F^2\leq (1+\epsilon)\frac1k\|\B\|_F^2=(1+\epsilon)\sigAvg_k(\A)$.
\end{proof}

\subsection{Primal-Dual Recursive Preconditioning}

We now define the primal-dual regularized preconditioning chain (an analog of \Cref{def:primalchain}). The chain includes two sequences of matrices, $\A_t$ and $\B_t$, so that the recursion can alternate between them, reducing the first dimension going from $\A_{t-1}$ to $\B_t$, and then reducing the second dimension going from $\B_t$ to $\A_t$.
\begin{definition}[Primal-Dual Regularized Preconditioning Chain]
\label{def:primaldualchain}
We call $D = ((\A_0,\nu_0),(\A_t,\B_t,\nu_t)_{t \in [T]})$ a \emph{length $T$ primal-dual regularized preconditioning chain} (denoted $D \in \primaldualchain_T$) if $0 < \nu_0 \leq \cdots \leq \nu_T$, and the matrices $\A_t$, $\B_t$ (along with their dimensions) satisfy:
     \begin{align*}
         \B_{t}^\top\B_{t}+\nu_t\I \ \approx_4\ \A_{t-1}^\top\A_{t-1}+\nu_t\I \quad\text{and}\quad
         \A_t\A_t^\top+\nu_t\I \ \approx_4\ \B_t\B_t^\top+\nu_t\I\quad\text{for all } t\in[T].
     \end{align*}
\end{definition}
The following lemma is an analog of \Cref{thm:apx-gen} that provides a general recursive preconditioning framework for working with primal-dual chains.
\begin{lemma}\label{l:primal-dual-general}
     Consider $D = ((\A_0,\nu_0),(\A_t,\B_t,\nu_t)_{t \in [T]}) \in \primaldualchain_T$, and
     let $\code{bsolve}$ be a solver for $\A_T^\top\A_T+\nu_T\I$. Given $\epsilon\in(0,1)$, there is an $\epsilon$-solver $\code{solve}$ for $\A_0^\top\A_0+\nu_0\I$ with
     \begin{align*}
         \runtime_{\code{solve}} &= O\bigg(K_T\runtime_{\code{bsolve}} + \sum_{t\in[T]}K_t(\runtime_{\A_t}+\runtime_{\B_t}+\runtime_{\A_{t-1}})\bigg),
         \\
             K_t& = 10^{3t}\sqrt{\nu_t/\nu_0}\log^{2t}(20\kappa)\log(2/\epsilon),\quad\text{ and }\quad\kappa =\max_{t\in[T]}\,\big(\|\B_t\|^2+\nu_{t-1}\big)/\nu_t.
     \end{align*}
\end{lemma}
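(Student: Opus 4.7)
The plan is to prove the lemma by downward induction on $t$, constructing at each level a sufficiently accurate solver for $\A_t^\top\A_t+\nu_t\I$ from the one produced at level $t+1$. The base case $t=T$ is simply $\code{bsolve}$. For the inductive step, going from level $t$ to level $t-1$, we decompose the construction into three sub-steps, each implementable via tools already established.

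First, I would apply Lemma \ref{l:primal-dual-recursion} with $\A\to\A_t$, $\B\to\B_t$, $\nu\to\nu_t$ to convert the solver for $\A_t^\top\A_t+\nu_t\I$ into a constant-accuracy solver for $\B_t^\top\B_t+\nu_t\I$; this uses the dual approximation $\A_t\A_t^\top+\nu_t\I\approx_4\B_t\B_t^\top+\nu_t\I$. The lemma's parameter $\kappa_a=1+\|\A_t\|^2/\nu_t$ is $O(\kappa)$ via the $\approx_4$ relation, since $\|\A_t\|^2\leq 4(\|\B_t\|^2+\nu_t)$. Second, I would use preconditioned AGD (Lemma \ref{lem:precon}) with preconditioner $\B_t^\top\B_t+\nu_t\I$ and target $\A_{t-1}^\top\A_{t-1}+\nu_t\I$; by the primal approximation in the chain definition, these are within $\approx_4$, so only $O(1)$ applications of the sub-step~(1) solver are needed (plus $O(\log(1/\epsilon'))$ applications of $\A_{t-1}$ and $\A_{t-1}^\top$ to apply the target matrix). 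Third, I would apply Lemma \ref{lem:precon} once more, with preconditioner $\A_{t-1}^\top\A_{t-1}+\nu_t\I$ and target $\A_{t-1}^\top\A_{t-1}+\nu_{t-1}\I$; since $\nu_{t-1}\leq\nu_t$, the relative condition number is at most $\nu_t/\nu_{t-1}$, giving $O(\sqrt{\nu_t/\nu_{t-1}}\log(1/\epsilon_{t-1}))$ calls and the same number of applications of $\A_{t-1}$ and $\A_{t-1}^\top$.

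With the recursion fixed, the remaining work is to track accuracy and runtime. Each sub-step at level $t$ contributes at most an $O(\log(\kappa/\epsilon_t))$ overhead (from Lemma \ref{l:primal-dual-recursion}'s $\log(8\kappa/\epsilon)$ factor and the $\log(1/\epsilon)$ factors in Lemma \ref{lem:precon}), two of which combine multiplicatively per level. Propagating accuracies downward, the accuracy demanded at level $t$ is polynomial in $\epsilon/\kappa$, so $\log(1/\epsilon_t)=O(t\log(\kappa/\epsilon))$, which is absorbed into the $\log^{2t}(20\kappa)\log(2/\epsilon)$ factor. The $\sqrt{\nu_t/\nu_{t-1}}$ factors telescope across levels to yield $\sqrt{\nu_t/\nu_0}$, and the $O(1)$ constants per sub-step accumulate into the $10^{3t}$ factor. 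The total cost is thus $K_T$ calls to $\code{bsolve}$ at the base plus, at each level $t$, $K_t$ applications of $\A_t$, $\B_t$, and $\A_{t-1}$ (and their transposes), matching the stated runtime.

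The main obstacle is the careful bookkeeping of accuracy requirements across the three preconditioning layers at each of the $T$ levels. In particular, one must verify that the $\tfrac{1}{160\kappa_a^2}$ accuracy required by each invocation of Lemma \ref{l:primal-dual-recursion}, when composed with the accuracy required by deeper levels of the recursion, does not compound to a super-polynomial factor in $\kappa$ and $1/\epsilon$; this is what forces the $\log^{2t}(20\kappa)$ factor (rather than $\log^{t}$). One must also check that $\kappa=\max_{t\in[T]}(\|\B_t\|^2+\nu_{t-1})/\nu_t$ indeed suffices to upper-bound $1+\|\A_t\|^2/\nu_t$ at each level (up to the constants absorbed into $10^{3t}$), which follows from the dual approximation $\A_t\A_t^\top+\nu_t\I\approx_4\B_t\B_t^\top+\nu_t\I$ and the monotonicity $\nu_{t-1}\leq\nu_t$.
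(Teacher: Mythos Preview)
Your approach is essentially correct and closely mirrors the paper's, with one small structural difference that affects the logarithmic exponent. The paper collapses your Steps~2 and~3 into a single invocation of \Cref{lem:precon}: rather than first preconditioning $\A_{t-1}^\top\A_{t-1}+\nu_t\I$ with $\B_t^\top\B_t+\nu_t\I$ (ratio $\approx_4$) and then preconditioning $\A_{t-1}^\top\A_{t-1}+\nu_{t-1}\I$ with $\A_{t-1}^\top\A_{t-1}+\nu_t\I$ (ratio $\nu_t/\nu_{t-1}$), it directly preconditions $\A_{t-1}^\top\A_{t-1}+\nu_{t-1}\I$ with $4(\B_t^\top\B_t+\nu_t\I)$, using the combined relation
\[
\A_{t-1}^\top\A_{t-1}+\nu_{t-1}\I \;\preceq\; 4(\B_t^\top\B_t+\nu_t\I) \;\preceq\; \frac{16\nu_t}{\nu_{t-1}}\,(\A_{t-1}^\top\A_{t-1}+\nu_{t-1}\I).
\]
This yields two multiplicative factors per level: $O(\sqrt{\nu_t/\nu_{t-1}}\log(1/\epsilon_{f_{t-1}}))$ from this AGD call, and $O(\log(\kappa/\epsilon_{g_t}))$ from \Cref{l:primal-dual-recursion}, giving the stated $\log^{2t}(20\kappa)$.

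With your three-step decomposition, each of the three sub-steps contributes a $\log$ factor: Step~3 gives $O(\sqrt{\nu_t/\nu_{t-1}}\log(1/\epsilon_{t-1}))$; Step~2 requires output accuracy $\tfrac{\nu_{t-1}}{10\nu_t}$ for Step~3's preconditioner and so costs $O(\log(\nu_t/\nu_{t-1}))$ iterations; Step~1 contributes $O(\log\kappa)$. Since $\nu_t/\nu_{t-1}$ can be as large as $\kappa$, the per-level product is $O(\sqrt{\nu_t/\nu_{t-1}}\log^3\kappa)$, yielding $\log^{3t}$ rather than $\log^{2t}$. Your claim that ``two of which combine multiplicatively'' is thus not quite right as stated. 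The fix is exactly the merge the paper performs, which your decomposition makes transparent; once you merge Steps~2 and~3 you recover the stated bound with no other changes.
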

\begin{proof}
    First, observe that for $t\in[T]$, since $\nu_{t-1}\leq\nu_t$, we have:
    \begin{align*}
        \A_{t-1}^\top\A_{t-1}+\nu_{t-1}\I\preceq 4(\B_t^\top\B_t+\nu_t\I)\preceq \frac{16\nu_t}{\nu_{t-1}}(\A_{t-1}^\top\A_{t-1}+\nu_{t-1}\I),
    \end{align*}
    which means that we can use a $\frac{\nu_{t-1}}{160\nu_t}$-solver $g_t$ for $\B_t^\top\B_t+\nu_t\I$ to obtain an $\epsilon_{f_{t-1}}$-solver $f_{t-1}$ for $\A_{t-1}^\top\A_{t-1}+\nu_{t-1}\I$, which applies $g_t$ at most $\lceil16\sqrt{\nu_t/\nu_{t-1}}\log(2/\epsilon_{f_{t-1}})\rceil$ times (\Cref{lem:precon}). Next, we apply Lemma \ref{l:primal-dual-recursion} to show that we can use a $\frac1{160\kappa^2}$-solver $f_t$ for $\A_t^\top\A_t+\nu_t$ to construct an $\epsilon_{g_t}$-solver for $g_t$ for $\B_t^\top\B_t+\nu_t\I$ that applies $f_t$ at most $32\log(2\kappa/\epsilon_{g_t})$ times and spends additional $O((\runtime_{\A_t}+\runtime_{\B_t})\log(\kappa/\epsilon_{g_t}))$ time. Putting this together, we can obtain a $\frac1{160\kappa^2}$-solver $f_{t-1}$ for $\A_{t-1}^\top\A_{t-1}+\nu_{t-1}\I$, with 
    \begin{align*}
        \runtime_{f_{t-1}} 
        &\leq \lceil16\sqrt{\nu_t/\nu_{t-1}}\log(320\kappa^2)\rceil\Big(\runtime_{g_t} +O(\runtime_{\B_t}+\runtime_{\A_{t-1}})\Big)
        \\
        &\leq \lceil16\sqrt{\nu_t/\nu_{t-1}}\log(320\kappa^2)\rceil32\log(320\kappa^2)\Big(\runtime_{f_t}+O(\runtime_{\A_t}+\runtime_{\B_t}+\runtime_{\A_{t-1}})\Big)
        \\
        &\leq 20^3\sqrt{\nu_t/\nu_{t-1}}\log^2(20\kappa)\Big(\runtime_{f_t}+ 
        O(\runtime_{\A_t}+\runtime_{\B_t}+\runtime_{\A_{t-1}})\Big),
    \end{align*}
    In the base case, we set $\epsilon_{f_0} = \epsilon$, which leads to an additional factor of $\log(2/\epsilon)$. By induction, 
    \begin{align*}
        \runtime_{f_0} = O\Big( K_T\runtime_{f_T} + \sum_{i \in [T]} K_t(\runtime_{\A_t}+\runtime_{\B_t}+\runtime_{\A_{t-1}})\Big).
    \end{align*}
    Letting $f_0=\code{solve}$ and $f_T=\code{bsolve}$, we conclude the proof.
\end{proof}

\subsection{Regression Algorithm}

We are ready to present the main result of this section, which gives a regression solver that runs in nearly linear time for any $(k,p)$-well-conditioned system with $p>1/2$, thereby recovering Theorem \ref{t:ls-optimized} as a corollary.
\begin{theorem}\label{t:primal-dual-optimized}
    There is an algorithm that given $\A\in\R^{n\times d}$, $\tilde\kappa\approx_2\kappa(\A)$, $k\in[d]$, $\tilde\lambda\approx_2\sigma_d^2(\A)$, $\tilde\Lambda\approx_2 \sigAvg_{2k}(\A)$, in $\tilde O(\nnz(\A)+ d^2)+O(k^\omega)$ time whp.\ produces a $\runtime$-time $\epsilon$-solver for $\A^\top\A$ with
        \begin{align*}
        \runtime =
         \tilde O\bigg(\Big(\nnz(\A) + \min_{\eta\in(0,1/2)}d^{2}\bar\kappa_{k,1/2+\eta}(\A)\log^{3+8/\eta}(d\kappa(\A))\Big)\log(1/\epsilon)\bigg).
    \end{align*}
\end{theorem}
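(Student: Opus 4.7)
The plan is to follow the blueprint of Theorem \ref{thm:regression:optimized} (our warm-up analysis) but replace the primal regularized chain with the primal-dual chain of Definition \ref{def:primaldualchain} and invoke Lemma \ref{l:primal-dual-general} in place of Theorem \ref{thm:apx-gen}. The essential gain is that each level of the new recursion reduces \emph{both} matrix dimensions (not just $n$), which is exactly what lets Lemma \ref{l:running-time-general} be applied with $p=2$ instead of $p=1$ and hence substitutes $\bar\kappa_{k,1/2+\eta}$ for $\bar\kappa_{k,1+\eta}$ in the final bound. As in the proof of Theorem \ref{t:ls}, I first use a single sparse-subspace-embedding reduction to the case $n = \tilde O(d)$ at cost $\tilde O(\nnz(\A))$, and then operate on the reduced matrix which I continue to call $\A$.

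Pick $\alpha = \Theta(\log\log(d\tilde\kappa))$ large enough to absorb the $10^{3t}\log^{2t}(20\kappa)$ factor of Lemma \ref{l:primal-dual-general} into $\exp(\alpha t)$, and set $k_t \defeq \max\{\lceil d\exp(-\alpha t^2)\rceil, 2k\}$ and $T \defeq \lceil\sqrt{\log(d/(2k))/\alpha}\rceil$, exactly as in Theorem \ref{thm:regression:optimized}. Inductively, with $\A_0 \defeq \A$, let $\S_t$ and $\mPi_t$ be sparse subspace embeddings of sketch size $\tilde O(k_t)$ drawn via Corollary \ref{c:fast-embedding} (applied to $\A_{t-1}$ and to $\B_t^\top$ respectively), and set $\B_t \defeq \S_t \A_{t-1}$, $\A_t \defeq \B_t \mPi_t^\top$. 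Corollary \ref{c:fast-embedding} then gives, for any $\nu \geq \sigAvg_{k_t}(\A_{t-1})$ and symmetrically on the dual side, the $\approx_2$ approximations required by Definition \ref{def:primaldualchain}; Lemma \ref{l:sigavg} propagates $\sigAvg_{k_t}(\A_t) = O(\sigAvg_{k_t}(\A))$ through the right-hand sketches so the chain condition holds whenever $\nu_t$ is at least a constant multiple of $\sigAvg_{k_t}(\A)$.

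Since only $\tilde\lambda$ and $\tilde\Lambda$ are provided, the intermediate $\nu_t$'s must be discovered at runtime. I would find $\nu_T$ by a log-scale grid search on a polynomially bounded interval seeded by $\tilde\lambda$ and $\tilde\Lambda$, and then proceed inductively from $t=T$ down to $t=1$, at each step identifying the smallest feasible $\nu_{t-1}$ by a doubling search. At the $t$-th step the preceding sub-chain together with Lemma \ref{l:primal-dual-general} already yields a constant-accuracy solver for $\A_t^\top\A_t + \nu_t\I$, and Lemma \ref{l:primal-dual-recursion} then delivers solvers for $\B_t^\top\B_t + \nu\I$ and a tentative solver for $\A_{t-1}^\top\A_{t-1} + \nu\I$ for any $\nu \in [\nu_{t-1}, \nu_t]$. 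By the identity in the overview (the equivalence between primal-dual approximation and bounded spectral norms of the cross-matrices), verifying the primal condition reduces to two inexact power-method spectral-norm estimations, whose required robustness analysis is precisely what Section \ref{s:primal-dual-testing} provides; only $\tilde O(T) = \tilde O(1)$ such inductive steps occur.

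Finally I invoke Lemma \ref{l:primal-dual-general} on the completed chain. The per-level cost is $K_t(\runtime_{\A_t} + \runtime_{\B_t} + \runtime_{\A_{t-1}})$, dominated by $\tilde O(k_{t-1}^2)$ for $t \geq 2$ and by $\tilde O(\nnz(\A))$ for $t=1$. The choice of $\alpha$ gives $K_t \leq \sqrt{\nu_t/\nu_0}\exp(\alpha t)\log(1/\epsilon)$, and with $\nu_t \approx \sigAvg_{k_t}(\A)$ Lemma \ref{l:running-time-general} applied with $p = 2$ produces
\[
k_{t-1}^2\sqrt{\sigAvg_{k_t}(\A)/\sigma_d^2(\A)}\,\exp(\alpha t) = O\!\left(d^2\,\bar\kappa_{k,1/2+\eta}(\A)\,\log^{O(1/\eta)}(d\kappa(\A))\right),
\]
matching the claimed bound after the $\log(1/\epsilon)$ factor; summing the $T = \tilde O(1)$ levels and adding $\tilde O(k^\omega)$ for the base solver from Lemma \ref{l:msp} completes the runtime accounting. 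The principal obstacle beyond the warm-up is the bootstrapping step: the inductive verification of each $\nu_t$ must work with solvers that are themselves only available through the partially constructed chain, and it must succeed with enough slack between the $\approx_2$ true approximation and the $\approx_4$ chain condition to tolerate the inexact power method while keeping the total cost of certification subordinate to the main preconditioning cost.
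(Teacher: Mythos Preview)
Your high-level plan matches the paper's proof: construct a primal-dual chain with $k_t = \max\{\lceil d\exp(-\alpha t^2)\rceil, 2k\}$, discover the $\nu_t$ inductively via the inexact power-method tester of Section~\ref{s:primal-dual-testing} (the paper packages this as Lemma~\ref{l:primaldual-condition-tester}), apply Lemma~\ref{l:primal-dual-general}, and finish with Lemma~\ref{l:running-time-general} at $p=2$. The one substantive divergence is in how the chain is built. You sketch \emph{recursively}, setting $\B_t = \S_t\A_{t-1}$ and $\A_t = \B_t\mPi_t^\top$, so that $\A_{t-1}$ carries $t-1$ compounded left sketches and $t-1$ compounded right sketches of $\A$. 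The paper instead sketches the original $\A$ freshly at every level, taking $\A_t = \S_t\A\mPi_t^\top$ and $\B_t = \S_t\A\mPi_{t-1}^\top$ (with $\mPi_0=\I$); this way each $\A_{t-1}$ and $\B_t$ differs from $\A$ by exactly one left sketch and one right sketch.

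This difference matters at the step where you need $\nu_t = O(\sigAvg_{k_t}(\A))$ to feed into Lemma~\ref{l:running-time-general}. Your certified chain condition only yields $\nu_t = O(\sigAvg_{k_t}(\A_{t-1}))$, and you appeal to Lemma~\ref{l:sigavg} to pass back to $\A$. But Lemma~\ref{l:sigavg} controls only \emph{right} multiplication by an OSE; it says nothing about the accumulated left sketches $\S_{t-1}\cdots\S_1$ in your $\A_{t-1}$. The paper's direct construction sidesteps this: writing $\tilde\A_{t-1}=\A\mPi_{t-1}^\top$, both $\B_t=\S_t\tilde\A_{t-1}$ and $\A_{t-1}=\S_{t-1}\tilde\A_{t-1}$ are one left sketch away from $\tilde\A_{t-1}$, so a single application of Lemma~\ref{l:sigavg} gives $\sigAvg_{k_t}(\tilde\A_{t-1})\leq 2\sigAvg_{k_t}(\A)$ and the primal condition holds for any $\nu\geq 2\sigAvg_{k_t}(\A)$. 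Your recursive construction is probably repairable---a left-sketch analogue of Lemma~\ref{l:sigavg} (or the rank bound $\mathrm{rank}(\A_{t-1})\leq\tilde O(k_{t-1})$ combined with the eigenvalue comparison from the embedding guarantee) would yield $\sigAvg_{k_t}(\A_{t-1})\leq C^t\sigAvg_{k_t}(\A)$, and the extra $C^{t/2}$ could absorb into $\exp(\alpha t)$---but as written this step is a gap.
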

\begin{remark}\label{r:almost-linear}
    If we set $\eta = \frac1{\sqrt{\log\bar\kappa_{k,\infty}(\A)}}$, then since  $\bar\kappa_{k,p+\eta}(\A) \leq \bar\kappa_{k,p}(\A)\bar\kappa_{k,\infty}(\A)^{\frac{\eta}{p+\eta}}$, we can also obtain an almost linear (up to sub-polynomial factors) running time that scales with $\bar\kappa_{k,1/2}(\A)$:
    \begin{align*}
        \runtime = \tilde O\bigg(\bigg(\nnz(\A) + d^2\bar\kappa_{k,1/2}(\A)\exp\Big(O\Big(\sqrt{\log\bar\kappa_{k,\infty}(\A)}\log \log (d\kappa(\A))\Big)\Big)\bigg)\log(1/\epsilon)\bigg).
    \end{align*}
\end{remark}
\begin{proof}[Proof of \Cref{t:primal-dual-optimized}]    
    We start by computing $k_t = \max\{\lceil d\exp(-\alpha t^2)\rceil,2k\}$ for all $t\in[T]$ where $\alpha=\log(10^3\log^2(80\cdot36 d\tilde\kappa^2))$ and $T=\lceil\sqrt{\log(d/(2k))/\alpha}\rceil$.    
    Let $\A_t = \S_t\A\mPi_{t}^\top$ and $\B_t = \S_{t}\A\mPi_{t-1}^\top$ for $t\in[T]$, where each $\S_t$ is an $s_t\times n$ sparse embedding with $(0.1,\delta,2k_t,O(\log(k_t/\delta)))$-OSE moments, $s_t=O(k_t)+\tilde O(1)$, and each $\mPi_t$ is an $(O(s_t)+\tilde O(1))\times d$ sparse embedding with $(1/4,\delta,s_t,O(\log(s_t/\delta)))$-OSE moments for $t\in[T]$, and we let $\mPi_0=\I_d$. Also, let $\A_0=\S_0\A$, where $\S_0$ is a $s_0\times n$ sparse embedding with $(1,\delta,d,O(\log(d/\delta)))$-OSE moments where $s_0=O(d)+\tilde O(1)$. In particular, this implies that whp.\ $\A_0^\top\A_0\approx_2 \A^\top\A$. Moreover, letting $\tilde\A_t = \A\mPi_t^\top$ and $\tilde\nu_t=\sigAvg_{k_t}(\tilde\A_{t-1})\geq \sigAvg_{k_{t-1}}(\tilde\A_{t-1})$, we have whp.:
    \begin{align*}
        \B_{t}^\top\B_{t} + \tilde\nu_t\I = \tilde\A_{t-1}^\top\S_{t}^\top\S_{t}\tilde\A_{t-1} + \tilde\nu_t\I \approx_{\sqrt {2}} \tilde\A_{t-1}^\top\tilde\A_{t-1}+\tilde\nu_t\I \approx_{\sqrt {2}} \tilde\A_{t-1}\S_{t-1}^\top\S_{t-1}\tilde\A_{t-1}+\tilde\nu_t\I=\A_{t-1}^\top\A_{t-1}+\tilde\nu_t\I.
    \end{align*}
    Using Lemma \ref{l:sigavg}, we have $\tilde\nu_t =\sigAvg_{k_t}(\tilde\A_{t-1}) \leq 2 \sigAvg_{k_t}(\A)$. So, the constructed matrices satisfy
    \begin{align}
        \B_t^\top\B_t+\nu\I \approx_c \A_{t-1}^\top\A_{t-1}+\nu\I\quad\text{for any $\nu\geq 2\sigAvg_{k_t}(\A)$ and $c\geq 2$.}\label{eq:primaldual-condition}
\end{align}
    Also, since $\mPi_t$ and $\mPi_{t-1}$ are both subspace embeddings for dimension $s_t$, we have whp.
    \begin{align*}
        \A_t\A_t^\top = (\S_t\A)\mPi_t^\top\mPi_t(\S_t\A)^\top \approx_{\sqrt 2} (\S_t\A)(\S_t\A)^\top \approx_{\sqrt 2} (\S_t\A)\mPi_{t-1}^\top\mPi_{t-1}(\S_t\A)^\top = \B_t\B_t^\top,
    \end{align*}
    which in particular implies that $\A_t\A_t^\top+\nu\I \approx_2 \B_t\B_t^\top+\nu\I$ for any $\nu\geq 0$. 

    Putting this together, if we could set $\nu_t = 2\sigAvg_{k_t}(\A)$ for each $t$, then we would obtain a primal-dual chain. However, as we cannot simply compute $\sigAvg_{k_t}(\A)$, we will instead show that it is possible to efficiently find a sequence of $\nu_0\leq ...\leq \nu_T$ such that $\nu_t\leq 8\sigAvg_{k_t}(\A)$ that still satisfies the conditions of a primal-dual chain.

    We will show this by induction. Suppose that we have $\nu_{t+1},...,\nu_T$ such that $\nu_0\leq\nu_i\leq 8\sigAvg_{k_i}(\A)$ for $i>t$ and $((\A_t,\nu_t),(\A_i,\B_i,\nu_i)_{i\in\{t+1,...,T\}})\in\primaldualchain_{T-t}$ for any $\nu_t\in(0,\nu_{t+1})$. First, observe that if $t=T-1$, then we obtain this using the provided estimate $\tilde\Lambda$ by setting $\nu_T=4\tilde\Lambda$. Next, we show: 1) how to construct an efficient solver for $\A_{t}^\top\A_{t}+\nu\I$ for any $\nu\in[\nu_0,\nu_{t+1}]$; and 2) how to find the next regularization parameter $\nu_t$.     
    
    \paragraph{Efficient solver for $\A_t^\top\A_t+\nu\I$.} First, observe that we can construct a direct solver $\code{bsolve}$ for $\A_T^\top\A_T+\nu_T\I$ in time $ O(k^\omega)+\tilde O(1)$. We now use that to construct a solver $f_t$ for $\A_t^\top\A_t+\nu\I$ for some $\nu\in[\nu_0,\nu_{t+1}]$. From Lemma \ref{l:primal-dual-general}, using that $\max_{i>t}(\|\B_i\|^2+\nu_{i})/\nu_{i-1} \leq 4(\|\A\|^2+8d\|\A\|^2)/\sigma_d(\A)^2\leq 36d\kappa(\A)^2$, and both $\runtime_{\A_t}$ and $\runtime_{\B_t}$ are $\tilde O(k_t^2)$, we have:
    \begin{align*}
        \runtime_{\code{f_0}} 
        &=  \tilde O\bigg(\sum_{i=t+1}^T
        10^{3i}k_{i-1}^2\sqrt{\nu_i/\nu}\log^{2i}(20\cdot36d\kappa(\A)^2)\log(1/\epsilon)\bigg)
        \\
        &=\tilde O\bigg(\sum_{i=t+1}^T k_{i-1}^2\sqrt{\frac{\sigAvg_{k_i}(\A)}{\sigma_d(\A)^2}}\exp(\alpha i)\log(1/\epsilon)\bigg)
        \\
        &=\tilde O\bigg(d^2\bar\kappa_{k,1/2+\eta}(\A)\exp\Big(\alpha\Big(\frac12(2^2-1)+\frac{2^2}{\eta}\Big)\Big)\log(1/\epsilon)\bigg)\\
        &=\tilde O\bigg(d^2\bar\kappa_{k,1/2+\eta}(\A)\log^{3+8/\eta}(d\kappa(\A))\log(1/\epsilon)\bigg),
    \end{align*}
    where we applied Lemma \ref{l:running-time-general} with $p=2$, and then, used that $\exp(\alpha)=O(\log^2(d\kappa(\A)))$.
    
    \paragraph{Finding the regularization parameter $\nu_t$.} We now discuss how to find the next regularization parameter $\nu_t$ that satisfies condition \eqref{eq:primaldual-condition} and is also bounded by $8\sigAvg_{k_t}(\A)$, given the already obtained sequence $\nu_{t+1},...,\nu_T$. To check \eqref{eq:primaldual-condition}, we use the following Lemma \ref{l:primaldual-condition-tester}, which applies the power method to estimate the extreme eigenvalues that determine the spectral approximation (proof in \Cref{s:primal-dual-testing}).
\begin{lemma}\label{l:primaldual-condition-tester}
    Given matrices $\A\in\R^{n_1\times d},\B\in\R^{n_2\times d}$ and $\nu>0$ such that $\A^\top\A+\nu\I\approx_8 \B^\top\B+\nu\I$, as well as a $d^{-O(1)}$-solver $f$ for $\A^\top\A+\nu\I$, there is an algorithm that in $O((\runtime_f+\runtime_{\A}+\runtime_{\B})\log(d) \log(1/\delta))$ time with probability $1-\delta$ returns $X\in\{0,1\}$ such that:
    \begin{enumerate}
        \item If $X=1$, then $\A^\top\A+\nu\I\approx_{4}\B^\top\B+\nu\I$,
        \item If $X=0$, then $\A^\top\A+\nu\I\not\approx_{2}\B^\top\B+\nu\I$.
    \end{enumerate}
\end{lemma}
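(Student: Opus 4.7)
The plan is to reduce the test to estimating two generalized eigenvalues and then to run an inexact power iteration for each. Writing $\M_\A=\A^\top\A+\nu\I$ and $\M_\B=\B^\top\B+\nu\I$, the relation $\M_\A\approx_c\M_\B$ holds if and only if $\lambda_1:=\lambda_{\max}(\M_\A^{-1}\M_\B)\leq c$ and $\lambda_2:=\lambda_{\max}(\M_\B^{-1}\M_\A)\leq c$, where we use that $\M_\A^{-1}\M_\B$ is similar to $\M_\A^{-1/2}\M_\B\M_\A^{-1/2}$ and thus has real nonnegative eigenvalues. By hypothesis $\lambda_1,\lambda_2\leq 8$, and to obtain the claimed dichotomy it suffices to produce $\hat\lambda_i$ satisfying $\hat\lambda_i\in[\lambda_i/c,c\lambda_i]$ for some fixed $c<\sqrt 2$, e.g.\ $c=5/4$: return $X=1$ if and only if $\max_i\hat\lambda_i\leq 2c$. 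Then $X=1$ forces $\lambda_i\leq 2c^2<4$ for both $i$, giving $\M_\A\approx_4\M_\B$, while $X=0$ forces some $\lambda_i>2$, ruling out $\M_\A\approx_2\M_\B$.

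To run the power method on $T_1:=\M_\A^{-1}\M_\B$, an approximate application of $T_1$ takes one multiplication by $\B$ and $\B^\top$ (to form $\M_\B v$) together with a single call to $f$ (to approximate the $\M_\A^{-1}$ step). For $T_2:=\M_\B^{-1}\M_\A$ we first construct a $d^{-O(1)}$-solver $g$ for $\M_\B$ by running preconditioned AGD (\Cref{lem:precon}) with $f$ as the preconditioner: since $\M_\A\approx_8\M_\B$ by hypothesis, $O(\log d)$ iterations suffice, at cost $O((\runtime_f+\runtime_\B)\log d)$. Starting each power iteration from a Gaussian vector and tracking the generalized Rayleigh quotient $v^\top\M_\B v/v^\top\M_\A v$ (resp.\ with $\A,\B$ swapped), $O(\log d\log(1/\delta))$ steps suffice to return a $(1\pm 1/4)$-factor estimate of each $\lambda_i$ with probability $1-\delta$, via the standard Kuczynski--Wozniakowski analysis for random-start power iteration on PSD-similar operators, which needs no spectral gap to certify the largest eigenvalue.

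The main technical hurdle is controlling the error incurred by replacing exact applications of $\M_\A^{-1}$ and $\M_\B^{-1}$ with calls to $f$ and $g$ throughout the power iteration. Each step introduces a $d^{-\Omega(1)}$ multiplicative error in the matvec, which can in principle compound over $O(\log d\log(1/\delta))$ iterations. However, because $\lambda_1,\lambda_2=O(1)$ and the intermediate iterates have polynomially bounded norms relative to the initial vector, choosing the inner accuracy of $f$ (and hence of $g$) to be a sufficiently large inverse polynomial in $d$ keeps the aggregate additive perturbation in each Rayleigh quotient far below $\lambda_i/4$, which is enough for the required $(1\pm 1/4)$-estimate. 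Summing the costs of the two power iterations and the constant number of solves per step, the overall runtime is $O((\runtime_f+\runtime_\A+\runtime_\B)\log d\log(1/\delta))$, as claimed.
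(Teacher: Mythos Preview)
Your proposal is essentially correct and follows the same strategy as the paper: reduce the test to two extremal generalized eigenvalues, estimate each by inexact power iteration, and threshold. The paper phrases the two quantities as $\|\A_\nu\N^{-1}\A_\nu^\top\|$ and $\|\B_\nu\M^{-1}\B_\nu^\top\|$ (with $\A_\nu=[\A^\top\mid\sqrt\nu\I]^\top$, etc.), but since $\|\B_\nu\M^{-1}\B_\nu^\top\|=\lambda_{\max}(\M^{-1}\N)$ these are the same numbers as your $\lambda_1,\lambda_2$. You are in fact more explicit than the paper on one point: you spell out that a $d^{-O(1)}$-solver for $\M_\B$ must be built from $f$ via preconditioned AGD using $\M_\A\approx_8\M_\B$, whereas the paper's proof silently uses a solver for $\N$ when bounding the matvec error for $\A_\nu\N^{-1}\A_\nu^\top$.

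Two minor technical remarks. First, running the power iteration on the nonsymmetric operator $T_1=\M_\A^{-1}\M_\B$ from a standard Gaussian $v$ is equivalent, after the change of variables $w=\M_\A^{1/2}v$, to running it on the symmetric $S=\M_\A^{-1/2}\M_\B\M_\A^{-1/2}$ from a \emph{non-isotropic} Gaussian. The Kuczy\'nski--Wo\'zniakowski bound then picks up a $\log\kappa(\M_\A)$ term rather than just $\log d$; the paper avoids this by iterating directly on the symmetric PSD matrix $\B_\nu\M^{-1}\B_\nu^\top$ with an isotropic start, whose nonzero spectrum lies in $[1/8,8]$. Second, constructing $g$ via PAGD costs $O((\runtime_f+\runtime_\B)\log d)$ per call, so your accounting for the $T_2$ iteration actually yields $O((\runtime_f+\runtime_\B)\log^2 d\,\log(1/\delta))$ rather than a single $\log d$; the paper's proof has the same hidden factor. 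Neither issue affects how the lemma is used downstream, where all such polylog factors are absorbed into $\tilde O(\cdot)$.
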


We consider a sequence of candidates $\hat\nu_{t,i} := \nu_{T}/2^i$, and proceed by checking condition \eqref{eq:primaldual-condition} using Lemma~\ref{l:primaldual-condition-tester} with $\A=\A_{t-1}$ and $\B=\B_t$ for $i=0,1,2,...$, until the condition fails or we reach $\hat\nu_{t,i}<\nu_0$. First, observe that it would seem natural to only test candidates smaller than $\nu_{t+1}$. However, it is technically possible to reach a corner case where  $\nu=\nu_{t+1}$ is too small to satisfy \eqref{eq:primaldual-condition} with $\A_{t-1}$ and $\B_t$ even though it satisfies the condition with $\A_t$ and $\B_{t+1}$. Moreover, to be able to apply Lemma \ref{l:primaldual-condition-tester}, we need the condition to at least hold with $c=8$, which is why we must start from a candidate that is guaranteed to satisfy this. Since we assume that $\nu_T\approx_2 4\sigAvg_{k_T}(\A)$, this is a safe starting point. 

In order to apply the lemma, we must be able to efficiently solve linear systems with $\A_{t-1}^\top\A_{t-1}+\hat\nu_{t,i}\I$. Since $\hat\nu_{t,i-1}$ satisfies \eqref{eq:primaldual-condition} with $c=4$, we can temporarily consider the following extended primal-dual chain $((\A_{t-1},\hat\nu_{t,i}),(\A_t,,\B_t,\hat\nu_{t,i-1}),(\A_j,\B_j,\max\{\nu_j,\hat\nu_{t,i-1}\})_{j\in\{t+1,...,T\}})\in\primaldualchain_{T-t+1}$ for $\nu_t = \hat\nu_{t,i-1}$. Note that since $\hat\nu_{t,i}\geq\sigma_d^2(\A)/2$ and $\frac{\hat\nu_{t,i-1}}{\hat\nu_{t,i}}=2$, the running time bound obtained via Lemma \ref{l:primal-dual-general} for solving for $\A_{t-1}^\top\A_{t-1}+\hat\nu_{t,i}\I$ is no larger than the one we showed for $\A_t^\top\A_t+\nu\I$, which means the tester from Lemma~\ref{l:primaldual-condition-tester} can be implemented in time $\tilde O(d^2\bar\kappa_{1/2+\eta}(\A)\log^{3+8/\eta}(d\kappa(\A)))$.

Once the algorithm from Lemma \ref{l:primaldual-condition-tester} returns $X=0$ for some $i$, this means that the condition \eqref{eq:primaldual-condition} failed with $c=2$, implying that $\hat\nu_{t,i}<2\sigAvg_{k_t}(\A)$. Thus, we can extend our chain by setting $\nu_t := \hat\nu_{t,i-1}\leq 4\sigAvg_{k_t}(\A)$. Notice that if $\hat\nu_{t,i-1}> \nu_j$ for any $j>t$, then we can simply replace those $\nu_j$ with this larger value, preserving the chain properties.

\paragraph{Completing the proof.} Finding all of the regularization parameters requires  $O(T^2\log(d)\log(1/\delta))=O(\log^2(d)\log(1/\delta))$ linear solves, and leads to a $\tilde O(d^2\bar\kappa_{1/2+\eta}(\A)\log^{3+8/\eta}(d\kappa(\A))\log(1/\epsilon))$ time solver $f_0$ for $\A_0^\top\A_0+\nu_0\I$. To complete the proof, observe that since $\nu_0\approx_2\sigma_d^2(\A)$ and $\A_0^\top\A_0\approx_2\A^\top\A$, we have $\A_0^\top\A_0+\nu_0\I\approx_4 \A^\top\A$. Thus, Applying \Cref{lem:precon}, we can construct a solver for $\A^\top\A$ by applying $\A$ and a $\frac1{40}$-solver $f_0$ at most $O(\log(1/\epsilon))$ times, which recovers the claimed running time.
\end{proof}

We now describe the selection of the remaining hyper-parameters required for Theorem \ref{t:primal-dual-optimized}, leading to the proof of Theorem \ref{t:ls-optimized} (this is very similar to the grid search we did for \Cref{t:ls}).
\begin{proof}[Proof of Theorem \ref{t:ls-optimized}]
    To find the correct values of $\tilde\kappa$, $\tilde\lambda$ and $\tilde\Lambda$, we first compute $ M=\|\A\|_F^2$. Then, we choose $\tilde \kappa= 2,4,8, ...$ and for each of those values, perform a grid search over all pairs of $\tilde\lambda\leq\tilde\Lambda$ from the set $\{M 2^{-t} \mid t\in[\lceil\log_2(d\tilde\kappa)\rceil]\}$. Note that one of the configurations will satisfy the assumptions of Theorem \ref{t:primal-dual-optimized}. This requires evaluating the algorithm $O(\log^3(d\kappa(\A)))$ times, and we return the solution with the lowest least squares error obtained within the time budget.
\end{proof}

\subsection{Testing Spectral Approximation}
\label{s:primal-dual-testing}
In this section, we show how to efficiently test the spectral approximation condition required by our preconditioning chain, leading to the proof of \Cref{l:primaldual-condition-tester}.

Our main idea is to convert the spectral approximation condition into a spectral norm bound, so that it can be tested by estimating the spectral norm of a certain matrix. To do this, we apply a somewhat standard approach of using the power method that we analyze in detail for completeness. Below is a classical guarantee for estimating the spectral norm of a PSD matrix (i.e., its largest eigenvalue).

\begin{lemma}[\cite{kuczynski1992estimating}]\label{l:power}
    Given $\M\in\PD{d}$, let $\lambda_1 = \lambda_{\max}(\M)$. Also, for a given $q\geq 2$, let $\xi_q = \frac{\x^\top\M^{2q+1}\x}{\x^\top\M^{2q}\x}$, where $\x$ is a random vector with standard Gaussian entries. Then, $\xi_q\leq \lambda_1$ and:
    \begin{align*}
        \E\Big[\frac{\lambda_1 - \xi_q}{\lambda_1}\Big] \leq \frac{\log(d)}{q-1}.
    \end{align*}
\end{lemma}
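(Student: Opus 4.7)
The plan is to reduce to a diagonal matrix via spectral decomposition and Gaussian rotational invariance, then control the sequence of moments $b_k := \sum_i \mu_i^k z_i^2$ (with $\mu_i = \lambda_i/\lambda_1$) through a Cauchy--Schwarz chain, and finally take expectation with the help of Jensen's inequality. First, writing $\M = U \Lambda U^\top$ and using that $U^\top \x$ is again standard Gaussian, I may assume WLOG $\M = \mathrm{diag}(\lambda_1,\dots,\lambda_d)$ and $\x = (z_1,\dots,z_d)^\top$ with iid $N(0,1)$ components. The first claim $\xi_q \leq \lambda_1$ is then immediate: $\xi_q = \sum_i \lambda_i w_i$ where $w_i := \lambda_i^{2q} z_i^2/\sum_j \lambda_j^{2q} z_j^2$ form a probability vector (nonnegative, summing to one), so $\xi_q$ is a convex combination of the $\lambda_i$'s and is bounded by the largest.

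For the expectation bound, define $b_k := \sum_i \mu_i^k z_i^2$ with $\mu_i := \lambda_i/\lambda_1 \in (0,1]$, so that $\xi_q/\lambda_1 = b_{2q+1}/b_{2q}$. Cauchy--Schwarz applied with weights $z_i^2$ to the vectors $(\mu_i^{(k-1)/2})_i$ and $(\mu_i^{(k+1)/2})_i$ gives $b_k^2 \leq b_{k-1} b_{k+1}$, so the ratios $r_k := b_k/b_{k-1}$ are nondecreasing in $k$ and at most $1$. Telescoping yields $\prod_{k=1}^{2q+1} r_k = b_{2q+1}/b_0$, and since $r_{2q+1}$ is the largest of $2q+1$ nondecreasing factors, its logarithm is at least the average, giving
\[
\xi_q/\lambda_1 \;=\; r_{2q+1} \;\geq\; \bigl(b_{2q+1}/b_0\bigr)^{1/(2q+1)} \;\geq\; \bigl(z_1^2 / \textstyle\sum_i z_i^2\bigr)^{1/(2q+1)},
\]
where the last step uses $\mu_1 = 1$ so that $b_{2q+1} \geq z_1^2$, and $b_0 = \sum_i z_i^2$.

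To finish, I apply the elementary inequality $1 - s \leq -\log s$ for $s \in (0,1]$ to obtain $1 - \xi_q/\lambda_1 \leq \tfrac{1}{2q+1}\log\!\bigl(\sum_i z_i^2 / z_1^2\bigr)$. Taking expectation splits into $\E[\log \sum_i z_i^2] - \E[\log z_1^2]$; the first term is at most $\log d$ by Jensen's inequality (concavity of $\log$ and $\E\sum z_i^2 = d$), and the second is an explicit absolute constant (equal to $-\gamma - \log 2$ for $z_1^2 \sim \chi^2_1$). This gives $\E[(\lambda_1 - \xi_q)/\lambda_1] \leq (\log d + c)/(2q+1)$ for a universal constant $c$, and a short numerical check shows $(\log d + c)/(2q+1) \leq \log(d)/(q-1)$ for all $q \geq 2$ and $d \geq 2$ (the case $d = 1$ is trivial since then $\xi_q = \lambda_1$). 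The step I expect to be the main obstacle is isolating the Cauchy--Schwarz monotonicity chain: a naive termwise bound of numerator against denominator fails because $\E[1/z_1^2] = \infty$, and the whole point of first invoking the geometric-mean-type inequality on the ratios is to convert the dangerous $1/z_1^2$ factor into the well-behaved $\log(1/z_1^2)$ with finite expectation before averaging.
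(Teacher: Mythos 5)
The paper offers no proof of this lemma; it is cited from \cite{kuczynski1992estimating}, whose original argument is a more delicate integral computation over the sphere. Your self-contained derivation via the log-convexity of the moment sequence $b_k$ (Cauchy--Schwarz giving $b_k^2\le b_{k-1}b_{k+1}$, hence monotone ratios, hence $r_{2q+1}\ge (b_{2q+1}/b_0)^{1/(2q+1)}$) is a legitimate alternative route, and everything through the pointwise bound $1-\xi_q/\lambda_1 \le \frac{1}{2q+1}\log\big(\sum_i z_i^2/z_1^2\big)$ and the resulting $\E\big[1-\xi_q/\lambda_1\big]\le (\log d + \gamma + \log 2)/(2q+1)$ is correct.

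However, your final ``short numerical check'' is false. With $c=\gamma+\log 2\approx 1.2704$, the inequality $(\log d + c)/(2q+1)\le \log(d)/(q-1)$ is equivalent to $c(q-1)\le \log(d)\,(q+2)$, which as $q\to\infty$ requires $c\le\log d$ and therefore fails for $d\in\{2,3\}$: e.g., $d=2$, $q=10$ gives $(0.6931+1.2704)/21\approx 0.0935$, which exceeds $\log(2)/9\approx 0.0770$. The mismatch comes from pairing the crude Jensen bound $\E[\log\sum_i z_i^2]\le\log d$ with the exact (and not small) constant $-\E[\log z_1^2]=\gamma+\log 2$. The gap is repairable: using $\E[\log\chi_d^2]=\psi(d/2)+\log 2$ exactly gives $\E\big[\log\big(\sum_i z_i^2/z_1^2\big)\big]=\psi(d/2)+\gamma+2\log 2\le 2\log d$ for all $d\ge 2$ (with equality at $d=2$), and then $2\log d\,(q-1)\le \log d\,(2q+1)$ closes the argument for every $q\ge 2$. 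Alternatively, note that your constants already suffice whenever $d\ge 4$, so only $d\in\{2,3\}$ need the sharper computation.
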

The power method uses matrix-vector products with $\M$. However, we show that it is sufficient to have black-box access to inexact matrix-vector product (matvec) algorithm for $\M$, as defined below. 
\begin{definition}\label{l:matvec}
    We call $f:\R^d\rightarrow\R^d$ an \emph{$\epsilon$-matvec} for $\M\in\R^{d\times d}$ if for all $\b\in\R^d$,  $\|f(\b)-\M\b\|^2\leq\epsilon\|\M\b\|^2$.
\end{definition}
As an auxiliary result, we bound how the matvec error propagates when we apply the matrix repeatedly.
\begin{lemma}\label{l:power-matvec}
    Given $\M\in\R^{d\times d}$ and $q\geq 1$, if $f$ is a $\frac{\epsilon}{(3\kappa(\M))^{2q}}$-matvec for $\M$, then $f^{(q)}$ is a $\epsilon$-matvec for $\M^q$.
\end{lemma}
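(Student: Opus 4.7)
The plan is to prove the bound by induction on $q$, tracking how the relative matvec error compounds through the $q$-fold composition. Throughout, let $\delta \defeq \epsilon/(3\kappa(\M))^{2q}$ and $\kappa \defeq \kappa(\M)$, and without loss of generality assume $\epsilon \leq 1$ (the statement is only a stronger assertion for smaller $\epsilon$).

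For the base case $q=0$ we have $f^{(0)}(\b)=\b=\M^0\b$ trivially. For the inductive step, set $\v \defeq f^{(q-1)}(\b)$ and let $\epsilon_{q-1}$ denote the relative matvec error achieved by $f^{(q-1)}$ for $\M^{q-1}$, so by induction $\|\v - \M^{q-1}\b\| \leq \sqrt{\epsilon_{q-1}}\,\|\M^{q-1}\b\|$. I will decompose
\[
f^{(q)}(\b) - \M^q \b \;=\; \bigl(f(\v) - \M\v\bigr) \;+\; \M\bigl(\v - \M^{q-1}\b\bigr),
\]
apply the triangle inequality, and bound each piece. The first term satisfies $\|f(\v) - \M\v\| \leq \sqrt{\delta}\,\|\M\v\| \leq \sqrt{\delta}\,\|\M\|\bigl(1+\sqrt{\epsilon_{q-1}}\bigr)\|\M^{q-1}\b\|$, using the inductive bound on $\|\v\|$. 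The second term is at most $\|\M\|\sqrt{\epsilon_{q-1}}\,\|\M^{q-1}\b\|$.

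To convert to relative error against $\|\M^q\b\|$, I will use the key inequality $\|\M^{q-1}\b\|\leq \sigma_{\min}(\M)^{-1}\|\M^q\b\|$, which gives $\|\M\|\cdot\|\M^{q-1}\b\|\leq \kappa\,\|\M^q\b\|$. Combining yields the recursion
\[
\sqrt{\epsilon_q} \;\leq\; \kappa\bigl(\sqrt{\delta}(1+\sqrt{\epsilon_{q-1}}) + \sqrt{\epsilon_{q-1}}\bigr) \;\leq\; \kappa\bigl(2\sqrt{\delta} + \sqrt{\epsilon_{q-1}}\bigr),
\]
where I will maintain as a secondary inductive invariant that $\epsilon_{q-1}\leq 1$. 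Unrolling this linear recurrence from $\epsilon_0 = 0$ gives $\sqrt{\epsilon_q} \leq 2\sqrt{\delta}\sum_{i=1}^{q}\kappa^i \leq 2q\kappa^q\sqrt{\delta}$.

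Squaring and plugging in the assumed $\delta = \epsilon/(9^q\kappa^{2q})$ gives $\epsilon_q \leq 4q^2\kappa^{2q}\delta = \frac{4q^2}{9^q}\epsilon \leq \epsilon$, where the final inequality is the elementary numerical fact $9^q \geq 4q^2$ for all $q\geq 1$ (verify $q=1,2$ by hand and then $9^q/(4q^2)$ grows). This simultaneously confirms the invariant $\epsilon_q \leq 1$, closing the induction. The main thing to get right is the compounding-error bookkeeping: the $(3\kappa)^{2q}=9^q\kappa^{2q}$ hypothesis leaves precisely the $9^q$ headroom needed to absorb the $q^2$ polynomial blowup from iteration and the constant factor $4$ from triangle-inequality slack.
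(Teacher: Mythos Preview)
Your proof is correct and follows essentially the same approach as the paper: the identical decomposition $f^{(q)}(\b)-\M^q\b=(f(\v)-\M\v)+\M(\v-\M^{q-1}\b)$, the same key inequality $\|\M\|\cdot\|\M^{q-1}\b\|\leq\kappa\|\M^q\b\|$, and induction on $q$. The only cosmetic difference is bookkeeping: the paper absorbs the additive $\sqrt{\epsilon_f}$ term to get a purely multiplicative recursion $\sqrt{\epsilon_{k+1}}\leq 3\kappa\sqrt{\epsilon_k}$, whereas you unroll the affine recursion and invoke $4q^2\leq 9^q$; both arrive at the same bound.
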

\begin{proof}
    Let $\hat \x_k = f^{(k)}(\b)$ be an $\epsilon_k$-matvec for $\M^k$ where $\epsilon_k = \epsilon_f(3\kappa(\M))^{2k}$ and $\epsilon_f=\frac{\epsilon}{(3\kappa(\M))^{2q}}$. Observe that:
    \begin{align*}
        \|f(\hat\x_k) - \M^{k+1}\b\| 
        &\leq \|f(\hat\x_k) - \M\hat\x_k\| + \|\M(\hat\x_k - \M^k\b)\|
        \\
        &\leq \sqrt{\epsilon_f}\|\M\hat\x_k\| + \|\M(\hat\x_k - \M^k\b)\|
        \\
        &\leq \sqrt{\epsilon_f}\|\M^{k+1}\b\| + (1+\sqrt{\epsilon_1})\|\M(\hat\x_k-\M^k\b)\|
        \\
        &\leq \sqrt{\epsilon_f}\|\M^{k+1}\b\| + (1+\sqrt{\epsilon_f})\sqrt{\epsilon_k}\|\M\|\|\M^k\b\|
        \\
        &\leq \sqrt{\epsilon_f} + (1+\sqrt{\epsilon_f})\sqrt{\epsilon_k}\kappa(\M)\|\M^{k+1}\b\|
        \leq 3\sqrt{\epsilon_k}\kappa(\M)\|\M^{k+1}\b\|.
    \end{align*}
    In particular, this implies that $\epsilon_q\leq \epsilon$, which concludes the claim.
\end{proof}
Combining the inexact matvecs with \Cref{l:power}, we obtain a guarantee for the inexact power method.
\begin{lemma}\label{l:norm-estimate}
    Given a $\frac{\epsilon/8}{(3\kappa(\M))^{5q+3}}$-matvec $f$ for $\M\in\PD{d}$ and $q\geq 32\log(d)/\epsilon$, we can compute $X\in \R$ such that $X\approx_{1+\epsilon} \|\M\|$ with probability $1-\delta$ in time $O((\runtime_f + d)q\log 1/\delta)$.
\end{lemma}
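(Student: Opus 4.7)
The plan is to evaluate an inexact version of the Gaussian power-method estimator $\xi_q \defeq \x^\top \M^{2q+1}\x / \x^\top\M^{2q}\x$ from Lemma \ref{l:power}, and then boost the success probability by independent repetition. Since $\xi_q \leq \|\M\|$ deterministically and $\E[(\|\M\|-\xi_q)/\|\M\|] \leq \log(d)/(q-1) \leq \epsilon/16$ whenever $q \geq 32\log(d)/\epsilon$, Markov's inequality yields $\xi_q \geq (1-\epsilon/8)\|\M\|$ with probability at least $1/2$ over the draw of a standard Gaussian $\x \in \R^d$.

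\textbf{Replacing exact matvecs by $f$.} To approximate $\xi_q$ using only $f$, I would compute $\hat\y \defeq f^{(q)}(\x)$ and $\hat\z \defeq f(\hat\y)$, and form the estimator $\hat\xi \defeq \hat\y^\top\hat\z / \|\hat\y\|^2$. By Lemma \ref{l:power-matvec}, the hypothesis that $f$ is a $(\epsilon/8)/(3\kappa(\M))^{5q+3}$-matvec implies that $\hat\y$ is an $\epsilon_1$-matvec for $\M^q$ with $\epsilon_1 = O(\epsilon^2)$, and applying the same lemma with $q+1$ in place of $q$ (the exponent $5q+3$ leaves more than the required $2(q+1)$ of slack) yields that the composition producing $\hat\z$ is an $O(\epsilon^2)$-matvec for $\M^{q+1}$. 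Writing $\y \defeq \M^q\x$, the decompositions $\hat\y^\top\hat\z - \y^\top\M\y = \hat\y^\top(\hat\z-\M\hat\y) + (\hat\y-\y)^\top\M(\hat\y+\y)$ and $\|\hat\y\|^2 - \|\y\|^2 = (\hat\y-\y)^\top(\hat\y+\y)$, combined with $\|\hat\y-\y\| \leq \sqrt{\epsilon_1}\|\y\| = O(\epsilon)\|\y\|$, $\|\hat\z-\M\hat\y\| = O(\epsilon)\|\M\|\|\hat\y\|$, and $\|\hat\y\| \leq 2\|\y\|$, give $|\hat\y^\top\hat\z - \y^\top\M\y| = O(\epsilon)\|\M\|\|\y\|^2$ and $|\|\hat\y\|^2 - \|\y\|^2| = O(\epsilon)\|\y\|^2$. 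Conditional on the good event $\y^\top\M\y = \xi_q\|\y\|^2 \geq (1-\epsilon/8)\|\M\|\|\y\|^2$, dividing numerator and denominator shows $\hat\xi = (1 \pm O(\epsilon))\xi_q$; after rescaling the hidden constants, this places $\hat\xi$ in the interval $[(1-\epsilon)\|\M\|, (1+\epsilon)\|\M\|]$.

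\textbf{Amplification and runtime.} To boost the constant success probability to $1-\delta$, I would run $L = O(\log(1/\delta))$ independent trials with fresh Gaussian vectors and output $X \defeq \max_j \hat\xi^{(j)}$. The bound $\hat\xi^{(j)} \leq (1+\epsilon)\|\M\|$ holds deterministically for every trial (it uses only the worst-case accuracy guarantee of $f$, not the randomness of $\x$), while with probability $1-\delta$ at least one trial satisfies $\hat\xi^{(j)} \geq (1-\epsilon)\|\M\|$; consequently $X \approx_{1+\epsilon} \|\M\|$. Each trial requires $q+1$ applications of $f$ and two length-$d$ inner products, for a total runtime of $O((\runtime_f + d)q\log(1/\delta))$, matching the claim. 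The main obstacle is a careful accounting of the "square-root" relative-error loss built into the $\epsilon$-matvec definition: because the matvec error is measured relative to the \emph{output} norm, achieving relative error $\epsilon$ in $\hat\xi$ requires matvec error $O(\epsilon^2)$ for $\M^q$, which through Lemma \ref{l:power-matvec} produces a $(3\kappa(\M))^{\Theta(q)}$ overhead in the needed accuracy of $f$; the stated exponent $5q+3$ is chosen to comfortably absorb this factor together with all accumulated constants.
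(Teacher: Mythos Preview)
Your proposal is correct and follows essentially the same approach as the paper: apply Lemma~\ref{l:power-matvec} to propagate the matvec error through $\Theta(q)$ compositions, use Lemma~\ref{l:power} plus Markov to get a constant-probability $(1\pm O(\epsilon))$-estimate of $\|\M\|$, and amplify over $O(\log(1/\delta))$ independent Gaussian trials.

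Two implementation choices differ from the paper and are worth noting. First, you form the estimator in the ``symmetric'' way $\hat\y^\top f(\hat\y)/\|\hat\y\|^2$ with $\hat\y=f^{(q)}(\x)$, whereas the paper computes $\x^\top f^{(2q+1)}(\x)/\x^\top f^{(2q)}(\x)$; both approximate the same Rayleigh quotient $\x^\top\M^{2q+1}\x/\x^\top\M^{2q}\x$, but the paper's asymmetric form forces it to spend the residual $\kappa(\M)^{q+1}$ slack to bound $\|\x\|\|\M^{2q+1}\x\|$ by $\kappa(\M)^{q+1}\x^\top\M^{2q+1}\x$, while your symmetric form avoids that step entirely. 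Second, you aggregate by taking the maximum over trials, justified by the deterministic bound $\hat\xi\leq(1+\sqrt{\epsilon_f})\|\M\|$ that follows from a single approximate matvec; the paper instead takes the median, which does not need the one-sided bound but requires the per-trial success probability to exceed $1/2$. Both aggregation rules give the stated runtime and success probability.
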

\begin{proof}
    Using Lemma \ref{l:power-matvec}, $g=f^{(2q+1)}$ is an $\epsilon_g$-matvec for $\M^{2q+1}$, where $\epsilon_g=\frac{\epsilon/8}{(\kappa(\M))^{q+1}}$. Now, let $\x_i$ be i.i.d. vectors with standard Gaussian entries, and define $X_{i,j}=\x_i^\top f^{(j)}(\x_i)$. Then:
    \begin{align*}
        |X_{i,2q+1} - \x_i^\top\M^{2q+1}\x_i| 
        \leq \epsilon_g\|\x_i\|\|\M^{2q+1}\x_i\|\leq \epsilon_g(\kappa(\M))^{q+1}\x_i^\top\M^{2q+1}\x_i \leq (\epsilon/8)\x_i^\top\M^{2q+1}\x_i,
    \end{align*}
    It follows that $X_{i,2q+1} \approx_{1+\epsilon/4} \x_i^\top\M^{2q+1}\x_i$ and a similar guarantee follows for $X_{i,2q}$. We use this to construct $X_i = X_{i,2q+1}/X_{i,2q}\approx_{(1+\epsilon/4)^2}\frac{\x_i^\top\M^{2q+1}\x_i}{\x_i^\top\M^2q\x_i}=:Y_i$. Combining Lemma \ref{l:power} with Markov's inequality, it follows that for $q\geq 32\log(d)/\epsilon$ with probability 0.75 we have $ Y_i\approx_{1+\epsilon/4} \|\M\|$. Thus, for each $i$, with probability $0.75$ we have $X_i\approx_{(1+\epsilon/4)^3}\|\M\|$. Taking a median over $O(\log1/\delta)$ such estimates, we obtain the desired estimate with probability $1-\delta$.
\end{proof}
We are now ready to use the inexact power method for efficiently testing the spectral approximation guarantee in our preconditioning chain, thereby proving \Cref{l:primaldual-condition-tester}.
\begin{proof}[Proof of Lemma \ref{l:primaldual-condition-tester}]
    Let $\M = \A^\top\A+\nu\I$, $\N = \B^\top\B+\nu\I$, $\A_\nu = [\A^\top\mid \sqrt\nu\I]^\top$, and $\B_{\nu}=[\B^\top\mid\sqrt\nu\I]^\top$, so that $\M=\A_\nu^\top\A_\nu$ and $\N = \B_\nu^\top\B_\nu$. Then, we have:
    \begin{align*}
        \A^\top\A+\nu\I\approx_c \B^\top\B+\nu\I\quad\Leftrightarrow\quad
        \|\A_{\nu}\N^{-1}\A_{\nu}^\top\|\leq c\text{ and }\|\B_{\nu}\M^{-1}\B_{\nu}^\top\|\leq c.
    \end{align*}
    We are going to construct two estimates, $\hat a \approx_{4/3} \|\A_{\nu}\N^{-1}\A_{\nu}^\top\|$ and $\hat b\approx_{4/3} \|\B_{\nu}\M^{-1}\B_{\nu}^\top\|$, and then define:
    \begin{align*}
        X := \mathbb{1}_{\hat a\leq 3}\cdot \mathbb{1}_{\hat b\leq 3}.
    \end{align*}
    Note that if $X=1$, then both spectral norms are bounded by $(4/3)\cdot 3=4$, so the spectral approximation holds with $c=4$. On the other hand if $X=0$, then one of the norms is larger than $(3/4)\cdot 3>2$, so the spectral approximation fails with $c=2$.

    It remains to produce the estimates $\hat a$ and $\hat b$. We do this by applying Lemma \ref{l:power-matvec} to $\A_{\nu}\N^{-1}\A_{\nu}^\top$ and $\B_{\nu}\M^{-1}\B_{\nu}^\top$. 
    Note that, letting $\b=\A_\nu^\top\x$, 
    \begin{align*}
    \|\A_\nu f(\A_\nu^\top\x) - \A_\nu\N^{-1}\A_\nu^\top\x\|^2= \|f(\b)-\N^{-1}\b\|_{\M}^2\leq 8\epsilon_f\|\N^{-1}\b\|_{\N}^2 \leq 64\epsilon_f\|\A_\nu\N^{-1}\A_\nu^\top\x\|^2.
    \end{align*}
    So $g(\x) = \A_\nu f(\A_\nu^\top\x)$ is a $64\epsilon_f$-matvec for $\A_\nu\N^{-1}\A_\nu^\top$. Since $\kappa(\A_\nu\M^{-1}\A_\nu^\top)\leq 64$, applying Lemma~\ref{l:power-matvec} with $\epsilon_f = \frac1{64}\frac{(4/3)/8}{(3\cdot 9)^{O(\log d)}}=d^{-O(1)}$ we can construct the desired estimate $\hat a$, and $\hat b$ follows analogously.    
\end{proof}

\section{Implicit Dual Solver for Positive Definite Systems}
\label{s:psd}

In this section, we describe the implicit dual variant of our recursive preconditioning framework, which is specialized for PD linear systems. In particular the framework applies to the problem of solving $\M\x=\b$ for input PD matrix $\M\in\PD{d}$ and vector $\b\in\R^d$. We use this framework to Theorem \ref{t:psd-optimized}.

\subsection{Reduction Tools}

First, we provide an auxiliary lemma which shows how to use a solver for $\A$ in order to solve for $\A^2$.
\begin{lemma}\label{l:square}
Let $f$ be a $\frac{\epsilon}{9 \kappa^2(\A)}$-solver for $\A\in\PD{n}$. Then $f^{(2)}(\b):=f(f(\b))$, is a $\epsilon$-solver for $\A^2$.
\end{lemma}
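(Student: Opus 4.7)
The plan is to unfold the two applications of $f$ directly against the definition of a solver. Let $\y^\ast := \A^{-1}\b$ and $\x^\ast := \A^{-1}\y^\ast = \A^{-2}\b$, and set $\hat\y := f(\b)$, $\hat\x := f(\hat\y)$, $\tilde\x := \A^{-1}\hat\y$, and $\delta := \epsilon/(9\kappa^2(\A))$. The target is
\[
\|\hat\x - \x^\ast\|_{\A^2}^2 \ \leq \ \epsilon\,\|\b\|_{\A^{-2}}^2 \ = \ \epsilon\,\|\y^\ast\|^2,
\]
and applying the $\delta$-solver guarantee to $\b$ and then to $\hat\y$ yields the two multiplicative $\A$-norm bounds $\|\hat\y - \y^\ast\|_\A^2 \leq \delta\|\y^\ast\|_\A^2$ and $\|\hat\x - \tilde\x\|_\A^2 \leq \delta\|\tilde\x\|_\A^2$ (using that $\|\b\|_{\A^{-1}}^2 = \|\y^\ast\|_\A^2$ and $\|\hat\y\|_{\A^{-1}}^2 = \|\tilde\x\|_\A^2$).

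Next I would bound the target error via the triangle inequality $\|\hat\x - \x^\ast\|_{\A^2} \leq \|\hat\x - \tilde\x\|_{\A^2} + \|\tilde\x - \x^\ast\|_{\A^2}$, and convert each $\A^2$-norm to an $\A$-norm using $\A^2 \preceq \sigma_{\max}(\A)\,\A$, i.e.\ $\|\z\|_{\A^2}^2 \leq \sigma_{\max}(\A)\,\|\z\|_\A^2$. For the second summand, the key observation is that $\tilde\x - \x^\ast = \A^{-1}(\hat\y - \y^\ast)$, so
\[
\|\tilde\x - \x^\ast\|_\A^2 \ = \ \|\hat\y - \y^\ast\|_{\A^{-1}}^2 \ \leq \ \sigma_{\min}(\A)^{-2}\,\|\hat\y - \y^\ast\|_\A^2 \ \leq \ \delta\,\sigma_{\min}(\A)^{-2}\,\|\y^\ast\|_\A^2,
\]
and then $\|\y^\ast\|_\A^2 \leq \sigma_{\max}(\A)\|\y^\ast\|^2$ gives $\|\tilde\x - \x^\ast\|_{\A^2}^2 \leq \delta\,\kappa^2(\A)\,\|\y^\ast\|^2$. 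For the first summand I would bound $\|\tilde\x\|_\A \leq \|\tilde\x - \x^\ast\|_\A + \|\x^\ast\|_\A$, noting $\|\x^\ast\|_\A^2 = \y^{\ast\top}\A^{-1}\y^\ast \leq \sigma_{\min}(\A)^{-1}\|\y^\ast\|^2$, and use the bound on $\|\tilde\x-\x^\ast\|_\A$ just derived to get $\|\tilde\x\|_\A^2 \leq \sigma_{\min}(\A)^{-1}(\sqrt{\delta\kappa}+1)^2\|\y^\ast\|^2$. Combining with the solver bound on $\|\hat\x-\tilde\x\|_\A$ and the $\sigma_{\max}$ conversion yields $\|\hat\x-\tilde\x\|_{\A^2}^2 \leq \delta\,\kappa(\A)(\sqrt{\delta\kappa}+1)^2\|\y^\ast\|^2$.

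Finally, since $\delta\kappa^2 \leq \epsilon/9 \leq 1/9$ (assuming WLOG $\epsilon\leq 1$) we have $\sqrt{\delta\kappa}\leq 1/3$, so $(\sqrt{\delta\kappa}+1)^2 \leq 16/9$, and the total is at most $(\tfrac{4}{3}\sqrt{\delta\kappa} + \sqrt{\delta}\,\kappa)^2\|\y^\ast\|^2 \leq (\tfrac{7}{3})^2 \delta\kappa^2\|\y^\ast\|^2 \leq \epsilon\|\y^\ast\|^2$, giving the claim. The only real obstacle is the bookkeeping: keeping the mixed $\sqrt{\delta\kappa}$ versus $\sqrt{\delta}\,\kappa$ terms tight enough that the $1/9$ constant in the hypothesis just suffices, which forces the quadratic-in-$\kappa$ scaling in $\delta$ rather than a linear one. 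Once those inequalities are verified, the lemma follows.
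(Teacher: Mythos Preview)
Your proof is correct and follows essentially the same route as the paper's: unfold the two applications of $f$, split $\hat\x - \x^\ast$ via the intermediate point $\tilde\x = \A^{-1}\hat\y$ by the triangle inequality, and bound each piece using the solver guarantee together with operator-norm conversions involving $\sigma_{\max}(\A)$ and $\sigma_{\min}(\A)$. If anything, your version is slightly more careful than the paper's, since you track the $\A^2$-norm required by the definition of an $\epsilon$-solver for $\A^2$ explicitly, whereas the paper carries out the entire computation in the $\A$-norm.
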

\begin{proof}
    Let $\hat\x_1=f(\b)$ and $\hat\x_2 = f(\hat\x_1)$, and define $\epsilon'=\frac{\sqrt{\epsilon}}{40 \kappa(\A)}$. Also let $\x_1^*=\A^{-1}\b$ and $\x_2^*=\A^{-1}\x_1^*$. Recall that by definition of $f$ we have $\|\hat\x_1-\x_1^*\|_{\A}^2\leq\epsilon'\|\x_1^*\|_{\A}^2$ and also $\|\hat\x_2-\A^{-1}\hat\x_1\|_{\A}^2\leq\epsilon'\|\A^{-1}\hat\x_1\|_{\A}^2$. Thus, 
    \begin{align*}
        \|\hat\x_2-\x_2^*\|_{\A}
        &= \|\hat\x_2-\A^{-1}\hat\x_1 + \A^{-1}(\hat\x_1-\x_1^*)\|_{\A}
        \\
        &\leq \|\hat\x_2-\A^{-1}\hat\x_1\|_{\A}+\|\A^{-1}(\hat\x_1-\x_1^*)\|_{\A}
        \\
        &\leq \sqrt{\epsilon'}\|\A^{-1}\hat\x_1\|_{\A} 
        + \|\A^{-1}\|\cdot\|\hat\x_1-\x_1^*\|_{\A}
        \\
        &=\sqrt{\epsilon'}\|\x_2^* + \A^{-1}(\hat\x_1-\x_1^*)\|_{\A} 
        + \|\A^{-1}\|\cdot\|\hat\x_1-\x_1^*\|_{\A}
        \\
        &\leq \sqrt{\epsilon'}\|\x_2^*\|_{\A}+(\sqrt{\epsilon'} + 1)\|\A^{-1}\|\|\hat\x_1-\x_1^*\|_{\A}
        \\
        &\leq  \sqrt{\epsilon'}\|\x_2^*\|_{\A} + (\sqrt{\epsilon'}+1)\|\A^{-1}\|\sqrt{\epsilon'}\|\x_1^*\|_{\A}
        \\
        &\leq \big(\sqrt{\epsilon'}+\sqrt{\epsilon'}(\sqrt{\epsilon'}+1)\kappa(\A)\big)\|\x_2^*\|_{\A}.
    \end{align*}
    It remains to show that the resulting error quantity is bounded by $\epsilon$. This follows from the definition of $\epsilon'$, since:
    $\big(\sqrt{\epsilon'}+\sqrt{\epsilon'}(\sqrt{\epsilon'}+1)\kappa(\A)\big)^2
        \leq \big(\sqrt{\epsilon/9}+2\sqrt{\epsilon/9}\big)^2=\epsilon$.
\end{proof}

Now, we provide our main reduction tool, which makes the implicit dual recursive preconditioning possible.   Namely, we show that given two matrices $\A$ and $\B$ such that $\B^\top\B$ is a low-rank approximation of $\A^\top\A$ , we can use a solver for  $\B\B^\top+\nu\I$ in order to solve $\A\A^\top+\nu\I$. This lemma might appear similar in statement to Lemma \ref{l:primal-dual-recursion}, but crucially, it avoids direct access to  matrices $\A$ and $\B$ in favor of matrix-vector operations with the ``dual" matrices $\A\A^\top$ and $\A\B^\top$, for which we provide a different solver construction.

\begin{lemma}\label{l:dual-reduction}
    Consider matrices $\A\in\R^{{n_1}\times d}$, $\B\in\R^{{n_2}\times d}$, and $\nu >0$. Let $\kappa=1+\|\A\|^2/\nu$ and suppose that 
    \begin{align*}
        \B^\top\B+\nu\I\approx_4 \A^\top\A+\nu\I.
    \end{align*}
    Given a $\frac1{20}$-solver $f$ for $\B\B^\top+\nu\I_{n_2}$, we can construct an $\epsilon$-solver for 
    $\A\A^\top+\nu\I_{n_1}$ that applies $f$ and $\B\B^\top$ at most $10^5\log^2(2\kappa)\log(2/\epsilon)$ times and spends an additional $O((\runtime_{\A\B^\top} + \runtime_{\A\A^\top} + n_1+n_2)\log(1/\epsilon))$ time. 
\end{lemma}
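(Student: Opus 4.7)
The plan is to follow the implicit dual preconditioning template sketched in Section~\ref{s:psd}: construct an implicit Nystr\"om-type preconditioner for $\A\A^\top + \nu\I$, invert it via the Woodbury identity, and reduce the resulting inner system to one that can be handled by the given $\tfrac{1}{20}$-solver $f$. Define $\C \defeq \A\B^\top$, $\W \defeq \B\B^\top + \nu\I_{n_2}$, and $\tilde\M \defeq \C\W^{-1}\C^\top$. An SVD computation gives the identity $\B^\top(\B\B^\top + \nu\I)^{-1}\B = \I - \nu(\B^\top\B + \nu\I)^{-1}$, so $\tilde\M = \A\A^\top - \nu\A(\B^\top\B + \nu\I)^{-1}\A^\top$. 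Using this, I will first show $\tilde\M + \nu\I \approx_5 \A\A^\top + \nu\I$: the direction $\tilde\M + \nu\I \preceq \A\A^\top + \nu\I$ is immediate, and for the reverse I use the hypothesis to bound $\A(\B^\top\B+\nu\I)^{-1}\A^\top \preceq 4\A(\A^\top\A+\nu\I)^{-1}\A^\top = 4(\I - \nu(\A\A^\top+\nu\I)^{-1}) \preceq 4\I$, which yields $\A\A^\top + \nu\I \preceq \tilde\M + \nu\I + 4\nu\I \preceq 5(\tilde\M + \nu\I)$.

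Next, Lemma~\ref{l:woodbury} reduces applying $(\tilde\M + \nu\I)^{-1}$ to approximately solving the $n_2 \times n_2$ system $\C^\top\C + \nu\W = \B\A^\top\A\B^\top + \nu\B\B^\top + \nu^2\I$. I will verify that this inner matrix is well-approximated by $\W^2 = (\B\B^\top)^2 + 2\nu\B\B^\top + \nu^2\I$: conjugating the hypothesis $\A^\top\A + \nu\I \approx_4 \B^\top\B + \nu\I$ by $\B$ on both sides and then adding $\nu^2\I$ yields $\C^\top\C + \nu\W \approx_4 (\B\B^\top)^2 + \nu\B\B^\top + \nu^2\I$, and the right-hand side differs from $\W^2$ only by the term $\nu\B\B^\top$, which is dominated by either expression, costing an additional factor of $2$ and giving $\C^\top\C + \nu\W \approx_8 \W^2$.

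With these two spectral approximations in hand, the solver becomes a three-level nesting of Lemma~\ref{lem:precon}. The outer level runs preconditioned AGD on $\A\A^\top + \nu\I$ with preconditioner $\tilde\M + \nu\I$, taking $O(\log(1/\epsilon))$ iterations and matvecs with $\A\A^\top$; each outer iteration invokes Lemma~\ref{l:woodbury} and so needs an $O(1/\kappa^2)$-accuracy solver for $\C^\top\C + \nu\W$. That inner system is handled by a middle AGD with $\W^2$ as preconditioner, which by the $\approx_8$ bound needs $O(\log\kappa)$ iterations, each using matvecs with $\C$, $\C^\top$ (i.e., $\A\B^\top$) and with $\W$ (hence $\B\B^\top$). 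The $\W^2$ solver is built by applying a high-accuracy $\W$-solver twice via Lemma~\ref{l:square}, and the required $\W$-solver of accuracy $\epsilon'$ is obtained by $O(\log(\kappa(\W)/\epsilon'))$ iterative refinement steps on top of $f$. From the hypothesis, $\|\W\| \leq 4(\|\A\|^2 + \nu) \leq 4\kappa\nu$, so $\kappa(\W) = O(\kappa)$, and multiplying the nested iteration counts yields the claimed $\log^2(\kappa)\log(1/\epsilon)$ bound on calls to $f$ and $\B\B^\top$.

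The main obstacle I expect is the accuracy bookkeeping across the nested solvers: each outer AGD iteration demands an inner solve whose accuracy requirement degrades by $\|\tilde\M+\nu\I\|^2/\nu^2 = O(\kappa^2)$ through Lemma~\ref{l:woodbury} and by $O(\kappa^2(\W))$ through Lemma~\ref{l:square}, so I must propagate accuracy specifications top-down and absorb these losses into logarithmic rather than polynomial factors. A secondary subtlety is that $\C^\top\C$ and $\C\C^\top$ are evaluated as two matvecs with $\A\B^\top$, so the recipe never needs separate access to $\A$ or $\B$ and fits within the $\runtime_{\A\A^\top} + \runtime_{\A\B^\top}$ budget of the statement.
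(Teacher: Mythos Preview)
Your proposal is correct and follows essentially the same architecture as the paper: Nystr\"om-type preconditioner $\tilde\M+\nu\I$ for the outer system, Woodbury to reduce to $\C^\top\C+\nu\W$, and then a constant-factor PSD approximation of this inner matrix by a square that can be solved via Lemma~\ref{l:square} and the given solver $f$.

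The one genuine (minor) difference is the choice of that square. The paper approximates $\C^\top\C+\nu\W$ by $(\W_{\nu/2})^2=(\B\B^\top+\tfrac{\nu}{2}\I)^2$, which matches the cross term $\nu\B\B^\top$ exactly and then pays a factor for the mismatched constant term; it then needs an extra AGD level to pass from $\W_{\nu/2}$ back to $\W_\nu$. You instead use $\W^2=(\B\B^\top+\nu\I)^2$ directly and boost $f$ to the required accuracy by iterative refinement. Both routes yield the same $O(\log^2\kappa\,\log(1/\epsilon))$ count of $f$ and $\B\B^\top$ applications; yours avoids introducing $\W_{\nu/2}$ at the cost of relying on Richardson-style refinement, which is fine since a $\tfrac{1}{20}$-solver contracts the $\W$-norm error by $1/\sqrt{20}$ per step. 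One small count to be careful about: the middle AGD on $\C^\top\C+\nu\W$ applies $\C=\A\B^\top$ in each of its $O(\log\kappa)$ iterations, so the number of $\A\B^\top$ matvecs is $O(\log\kappa\,\log(1/\epsilon))$, not $O(\log(1/\epsilon))$; this is consistent with how the lemma is actually used downstream in Lemma~\ref{l:dual-general}.
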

\begin{proof}
First, we define the following preconditioner for the matrix $\M=\A\A^\top+\nu\I$, which is an extension of the classical Nystr\"om approximation \cite{Williams01Nystrom}, except in our case both inner and outer matrices are regularized:
\begin{align*}
\N = \C\W_{\nu}^{-1}\C^\top+\nu\I,\quad \text{for}\quad \C=\A\B^\top \text{ and }\W_{\nu}=\B\B^\top+\nu\I,
\end{align*}
Using $\B^\top\B+\nu\I_d\approx_4\A^\top\A+\nu\I_d$, it follows that:
 \begin{align*}
     \M - \N
     &= \A(\I - \B^\top(\B\B^\top+\nu\I)^{-1}\B)\A^\top
     \\
     &= \A(\I - \B^\top\B(\B^\top\B+\nu\I)^{-1})\A^\top
     \\
     &= \A\Big((\B^\top\B+\nu\I)(\B^\top\B+\nu\I)^{-1} - \B^\top\B(\B^\top\B+\nu\I)^{-1}\Big)\A^\top
     \\
     &\preceq \nu\cdot \A(\B^\top\B+\nu\I)^{-1}\A^\top\\
     &\preceq 4\nu \cdot \A(\A^\top\A+\nu\I)^{-1}\A^\top\preceq 4\nu \I.
 \end{align*}
 The first line of the above calculation also shows that $\N\preceq \M$, so altogether:
 \begin{align*}
     \M\succeq\N\succeq \frac1{5}\big(\C\W_{\nu}^{-1}\C^\top+5\nu\I\big)
     \succeq \frac15(\N + \M - \N)= \frac1{5}\M.
 \end{align*}
 Thus, using Lemma \ref{lem:precon}, given a $\frac1{50}$-solver $g$ for $\N$, we can construct an $\epsilon$-solver for $\M$ that applies $g$ at most $\lceil4\sqrt5\log(2/\epsilon)\rceil$ times.

 Next, we show how to construct a solver for $\N$, given the solver $f$ for $\W_{\nu}$. First, we use the Woodbury formula to write:
 \begin{align*}
     \N^{-1} = \frac1\nu\big(\I - \C(\C^\top\C+\nu\W_{\nu})^{-1}\C^\top\big),
 \end{align*}
 which means that it suffices to construct a solver for $\C^\top\C+\nu\W_{\nu}$. Here, observe that:
 \begin{align*}
     \C^\top\C+\nu\W_{\nu} 
     &= \B\A^\top\A\B^\top + \nu\B\B^\top+\nu^2\I
= \B(\A^\top\A + \nu\I)\B^\top + \nu^2\I
     \\
     &\approx_4 \B(\B^\top\B+\nu\I)\B^\top+\nu^2\I
= (\B\B^\top)^2 + \nu\B\B^\top + \nu^2\I
     \\
     &\approx_4 (\B\B^\top)^2 + \nu\B\B^\top + \frac{\nu^2}4\I
     = \Big(\B\B^\top+ \frac\nu 2\I\Big)^2 = (\W_{\nu/2})^2.
 \end{align*}
Since $\frac12\W_{\nu}\preceq \W_{\nu/2}\preceq \W_{\nu}$, we can construct a $\epsilon_h$-solver $h$ for $\W_{\nu/2}$ that applies the $\frac{1}{20}$-solver $f$ at most $\lceil4\sqrt{2}\log(2/\epsilon_h)\rceil$, and using Lemma~\ref{l:square}, $h^{(2)}$ is a $9\epsilon_h\hat\kappa^2$-solver for $(\W_{\nu/2})^2$, where $\hat\kappa = \kappa(\W_{\nu/2})$. Next, from the above calculation we have $\frac14(\W_{\nu/2})^2\preceq\C^\top\C+\nu\W\preceq 16(\W_{\nu/2})^2$, so we can use $\frac1{640}$-solver $h^{(2)}$ to obtain a $\epsilon_{g_0}$-solver $g_0$ for $\C^\top\C+\nu\W_\nu$, which requires $\lceil32\log(2/\epsilon_{g_0})\rceil$ applications of $h^{(2)}$.

Since $\B^\top\B+\nu\I\approx_{4}\A^\top\A+\nu\I$, we have that
$\hat\kappa = \kappa(\W_{\nu/2}) \leq 4(\|\A\|^2+\nu)/(\nu/2) = 8\kappa$.
Finally, we construct the $\epsilon_g$-solver $g$ for $\N$ from a $\frac{\epsilon_g\nu^2}{\|\N\|^2}$-solver $g_0$, by relying on Lemma \ref{l:woodbury}, which shows that this is possible with one application of $g_0$, $\C$, and $\C^\top$ each. Note that $\|\N\|/\nu\leq \|\M\|/\nu=\kappa$.

Now, setting $\epsilon_g = \frac1{50}$ and propagating it back, we need $\epsilon_{g_0}=\frac1{50\kappa^2}$, and $\epsilon_h = \frac1{4\cdot 10^5\kappa^2}$, which gives the total number of applications of $f$ and of matrix $\B\B^\top$ to obtain an $\epsilon$-solver for $\M$ as:
\begin{align*}
    \lceil4\sqrt{2}&\log(2/\epsilon_h)\rceil\cdot \lceil32\log(2/\epsilon_{g_0})\rceil\cdot \lceil4\sqrt{5}\log(2/\epsilon)\rceil\\
    &=
    \lceil4\sqrt{2}\log(8\cdot10^5\kappa^2)\rceil\cdot \lceil32\log(100\kappa^2)\rceil\cdot \lceil4\sqrt{5}\log(2/\epsilon)\rceil\\
    &\leq  10^5 \log^2(2\kappa)\log(2/\epsilon).
\end{align*}
Furthermore, the overall cost of the $\epsilon$-solver for $\M$ also involves applying $\A\A^\top$ and $\A\B^\top$ at most $O(\log1/\epsilon)$ many times, plus additional $O((n_1+n_2)\log1/\epsilon)$ runtime.
 \end{proof}

\subsection{Implicit Dual Recursive Preconditioning}

Next, we present our implicit dual recursive preconditioning framework. We start with an auxiliary result which shows that chaining together a number of subspace embedding matrices preserves the OSE property.

\begin{lemma}\label{l:ose-chaining}
     Suppose that random matrices $\S_t\in\R^{s_t\times s_{t-1}}$ have $(\epsilon,\delta,d,\ell)$-OSE moments for all $t\in[T]$, $\epsilon,\delta\in(0,1)$ and $\epsilon\delta^{1/\ell}\leq \frac1{2\ell}$. Then, the matrix $\bar\S=\S_T\S_{T-1}... \S_1$ has $(\epsilon,3\delta,d,\ell)$-OSE moments.
 \end{lemma}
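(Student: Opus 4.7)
The plan is to prove the claim by induction on $T$. For any fixed orthonormal $\U \in \R^{s_0 \times d}$, set $\bar\S_t := \S_t \S_{t-1} \cdots \S_1$, $\M_t := \bar\S_t \U$, and $Y_t := \M_t^\top \M_t - \I$. I would aim for the stronger inductive bound $(\E \|Y_t\|^\ell)^{1/\ell} \le (1+a)^t - 1$, where $a := \epsilon \delta^{1/\ell}$; the base case $t = 0$ is immediate since $Y_0 = 0$.

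For the inductive step from $t-1$ to $t$, I would expose the orthonormal part of the intermediate matrix via a thin SVD $\M_{t-1} = \U_M \mSigma_M Q^\top$, with $\U_M \in \R^{s_{t-1} \times d}$ having orthonormal columns, $\mSigma_M$ diagonal, and $Q \in \R^{d \times d}$ orthogonal. The point is that $\U_M$ and $Q$ are measurable with respect to $\sigma(\S_1, \ldots, \S_{t-1})$, while $\S_t$ is independent of that $\sigma$-algebra. Conditioning on the earlier sketches, the $(\epsilon, \delta, d, \ell)$-OSE moment hypothesis for $\S_t$ applies to the now-deterministic $\U_M$, yielding $\E[\|F_t\|^\ell \mid \S_1, \ldots, \S_{t-1}] < \epsilon^\ell \delta$ for $F_t := \U_M^\top \S_t^\top \S_t \U_M - \I$.

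The identity $\M_t^\top \M_t = Q \mSigma_M \U_M^\top \S_t^\top \S_t \U_M \mSigma_M Q^\top = Q \mSigma_M (\I + F_t) \mSigma_M Q^\top$ then gives $Y_t = Q \mSigma_M F_t \mSigma_M Q^\top + Y_{t-1}$, whence $\|Y_t\| \le \|\mSigma_M\|^2 \|F_t\| + \|Y_{t-1}\| \le (1 + \|Y_{t-1}\|) \|F_t\| + \|Y_{t-1}\|$, using $\|\mSigma_M\|^2 = \|\M_{t-1}^\top \M_{t-1}\| \le 1 + \|Y_{t-1}\|$. Applying Minkowski's inequality in $L^\ell$ to separate the two summands, and using the tower property together with the conditional OSE bound to decouple $\|F_t\|$ from the $\bar\S_{t-1}$-measurable factor $1 + \|Y_{t-1}\|$, one obtains the recursion $(\E \|Y_t\|^\ell)^{1/\ell} \le (1+a)(\E \|Y_{t-1}\|^\ell)^{1/\ell} + a$, which closes the induction and unwinds from $Y_0 = 0$ to $(\E \|Y_T\|^\ell)^{1/\ell} \le (1+a)^T - 1$.

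Finally, I would leverage the hypothesis $a \le 1/(2\ell)$ via the estimate $(1+a)^T \le e^{Ta}$ combined with a Taylor-type bound to conclude $((1+a)^T - 1)^\ell \le 3 \delta \epsilon^\ell$, establishing that $\bar\S$ has $(\epsilon, 3\delta, d, \ell)$-OSE moments. I expect the main obstacle to be this last algebraic step: the coupling between $T$, $a$, and $\ell$ must be calibrated so that the slight losses from Minkowski's inequality accumulated across the induction fit within the factor-of-three slack in $3\delta$. A subtle point that must be handled carefully throughout is verifying that the OSE moment condition transfers cleanly to the conditional expectation, which works because the $\S_i$ are independent and $\U_M$ is a deterministic function of $\S_1, \ldots, \S_{t-1}$ once the SVD tie-breaking is fixed measurably.
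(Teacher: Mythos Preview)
Your inductive recursion is carefully argued and correct: from Minkowski and conditioning you obtain $(\E\|Y_t\|^\ell)^{1/\ell} \le (1+a)(\E\|Y_{t-1}\|^\ell)^{1/\ell} + a$, hence $(\E\|Y_T\|^\ell)^{1/\ell} \le (1+a)^T - 1$. The genuine gap is the final algebraic step, which you rightly flag as the main obstacle. The needed inequality $(1+a)^T - 1 \le 3^{1/\ell} a$ simply does not follow from $a \le 1/(2\ell)$ alone, because the lemma places no bound on $T$. Already at $T=2$ one has $(1+a)^2 - 1 = 2a + a^2 > 2a$, while $3^{1/\ell} \le \sqrt{3} < 2$ for every $\ell \ge 2$, so the bound fails for all small $a$. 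More broadly, your estimate grows with $T$ whereas the conclusion is $T$-uniform, and no Taylor-type manipulation can close that gap.

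The paper's proof avoids this by invoking an external result (Lemma~2 and Remark~4 of \cite{cohen2016optimal}) to assert the much tighter one-step estimate
\[
\big(\E\|(\S_{t+1}\M_t)^\top\S_{t+1}\M_t - \I\|^\ell\big)^{1/\ell} \;\le\; a\,\big(\E\|\M_t^\top\M_t\|^\ell\big)^{1/\ell}\,,
\]
which yields the recursion $x_{t+1} \le a(1+x_t)$ rather than your $x_{t+1} \le (1+a)x_t + a$. Under the invariant $x_t \le 1/\ell$ (which holds at $t=1$ since $a \le 1/(2\ell)$) this gives $x_{t+1} \le a(1+1/\ell) \le 3^{1/\ell}a$ uniformly in $t$, and the invariant is preserved because $a(1+1/\ell) \le \tfrac{1}{2\ell}\cdot\tfrac{\ell+1}{\ell} \le 1/\ell$. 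The difference between the two recursions is exactly the additive $x_{t-1}$ term your triangle-inequality split pays at each step; the paper's cited lemma absorbs it, and without that input your argument cannot reach a $T$-independent conclusion.
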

\begin{proof}
    We apply the argument from Remark 4 of \cite{cohen2016optimal} recursively. Namely, suppose that $\bar\S_t = \S_t\S_{t-1}...\S_1$ satisfies $(\epsilon,\delta_t,d,\ell)$-OSE moments with $\epsilon\delta_t^{1/\ell}\leq 1/\ell$. Then, using Lemma 2 from \cite{cohen2016optimal}, for any $\U\in\R^{s_0\times d}$ such that $\U^\top\U=\I$, we have:
    \begin{align*}
        \Big(\E\|(\S_{t+1}\bar\S_t\U)^\top\S_{t+1}\bar\S_t\U-\I\|^\ell\Big)^{1/\ell}
        &\leq \epsilon\delta^{1/\ell} \Big(\E\|(\bar\S_t\U)^\top\bar\S_t\U\|^\ell\Big)^{1/\ell}
        \\
        &\leq \epsilon\delta^{1/\ell} \big(1 + \epsilon\delta_t^{1/\ell}\big)
        \\
        &\leq \epsilon\delta^{1/\ell}\big(1+1/\ell\big)\leq \epsilon(3\delta)^{1/\ell}.
    \end{align*}
    Thus, $\bar\S_{t+1}$ has $(\epsilon,\delta_{t+1},d,\ell)$-OSE moments for some $\delta_{t+1}\leq 3\delta$ such that $\epsilon\delta_{t+1}^{1/\ell}\leq \epsilon\delta^{1/\ell}(1+1/\ell)\leq 1/\ell$. Thus, applying the recursion we obtain the claim.
\end{proof}
Next, we describe the construction of the regularized preconditioning chain for the dual setting. Interestingly, we rely here on the same preconditioning chain as the one we used in \Cref{s:ls} (see \Cref{def:primalchain}). However, unlike in that section, we must construct the $\A_t$ matrices recursively, i.e., so that $\A_{t+1} = \S_{t+1}\A_t$, rather than by directly sketching the original matrix.
\begin{lemma}\label{l:dual-chain}
    Consider a matrix $\A_0\in\R^{n\times d}$, and random matrices $\S_t\in\R^{s_t\times s_{t-1}}$ with $s_0=n$ such that each $S_t$ has $(\epsilon,\delta,2k_t,\ell)$-OSE moments for $d\geq k_1\geq k_2\geq ...\geq k_T> 0$, $t\in[T]$ and $\epsilon\leq1/6$. If $\epsilon\delta^{1/\ell}\leq \frac1{2\ell}$ and $\ell\geq 2$, then the matrices $\A_t = \S_t\S_{t-1}...\S_1\A_0$ with probability $1-3T\delta$ form a (primal) regularized preconditioning chain $((\A_t,\bar\Sigma_{k_t}(\A_t))_{t\in\{0,1,...,T\}})\in\primalchain_T^{d}$.
\end{lemma}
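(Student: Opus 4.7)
The plan is to verify the two defining properties of a primal regularized preconditioning chain (\Cref{def:primalchain}) for the choice $\nu_t=\bar\Sigma_{k_t}(\A_t)$: the spectral approximation $\A_t^\top\A_t+\nu_t\I \approx_4 \A_{t-1}^\top\A_{t-1}+\nu_t\I$ for every $t\in[T]$, and the monotonicity $\nu_{t-1}\leq\nu_t$. I would argue per level and then union-bound, obtaining the claimed probability $1-3T\delta$; the quantitative hypothesis $\epsilon\delta^{1/\ell}\leq 1/(2\ell)$, $\ell\geq 2$ is exactly the one appearing in \Cref{l:ose-chaining} and is what allows all the OSE moment bounds below to be cleanly invoked (including in the transposed form required by \Cref{l:sigavg}).

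For the spectral approximation at level $t$, since $\S_t$ has $(\epsilon,\delta,2k_t,\ell)$-OSE moments with $\epsilon\leq 1/6$, \Cref{l:reg-approx} applied to $\A_{t-1}$ gives with probability $1-\delta$ that $\S_t$ is a $(6\epsilon,\bar\Sigma_{k_t}(\A_{t-1}))$-embedding for $\A_{t-1}$. Unpacking the definition and using $\A_t=\S_t\A_{t-1}$ this is
\[
\A_t^\top\A_t+\bar\Sigma_{k_t}(\A_{t-1})\I \;\approx_2\; \A_{t-1}^\top\A_{t-1}+\bar\Sigma_{k_t}(\A_{t-1})\I.
\]
A short monotonicity-in-$\nu$ check shows that adding the same nonnegative multiple of $\I$ to both sides preserves any $\approx_c$ relation with $c\geq 1$, so this extends to every $\nu'\geq\bar\Sigma_{k_t}(\A_{t-1})$. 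To transfer the approximation from $\bar\Sigma_{k_t}(\A_{t-1})$ to $\nu_t=\bar\Sigma_{k_t}(\A_t)$ I would establish a two-sided comparison between these two thresholds. One direction is \Cref{l:sigavg} applied to the transpose identity $\A_t^\top=\A_{t-1}^\top\S_t^\top$ (using that $(\epsilon,\delta,2k_t,\ell)$-OSE moments imply $(\epsilon,\delta,1,\ell)$-OSE moments), yielding $\bar\Sigma_{k_t}(\A_t)\leq(1+\epsilon)\bar\Sigma_{k_t}(\A_{t-1})$. The factor $4$ in \Cref{def:primalchain} is sized precisely to absorb the constant-factor slack incurred in this threshold change, on top of the $\approx_2$ coming from \Cref{l:reg-approx}.

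With a two-sided comparison in hand, monotonicity reduces to the chain
\[
\nu_{t-1}=\bar\Sigma_{k_{t-1}}(\A_{t-1}) \;\leq\; \bar\Sigma_{k_t}(\A_{t-1}) \;\approx\; \bar\Sigma_{k_t}(\A_t)=\nu_t,
\]
where the first inequality uses only the elementary fact that $k\mapsto\bar\Sigma_k(\cdot)$ is decreasing in $k$ together with $k_t\leq k_{t-1}$. The main obstacle I anticipate is supplying the matching reverse direction $\bar\Sigma_{k_t}(\A_t)\gtrsim\bar\Sigma_{k_t}(\A_{t-1})$ of the threshold comparison, since \Cref{l:sigavg} as stated only gives one side. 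My plan here is to upgrade the Frobenius-norm concentration in the proof of \Cref{l:sigavg} to a two-sided statement (which is immediate from the symmetric moment control used in that proof) and to combine it with an upper bound on $\|[\A_t]_{k_t}\|_F^2$ obtained by exploiting that the $(\epsilon,\delta,2k_t,\ell)$-OSE hypothesis makes $\S_t$ an approximate isometry on the $k_t$-dimensional column space of $[\A_{t-1}]_{k_t}$; together these yield $\bar\Sigma_{k_t}(\A_t)\geq(1-O(\epsilon))\bar\Sigma_{k_t}(\A_{t-1})$ on an additional high-probability event, absorbed into the $3T\delta$ union bound.
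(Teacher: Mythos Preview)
Your route diverges from the paper's at the outset. The paper does not work level by level with $\S_t$ applied to $\A_{t-1}$; instead it first argues (via a principal-submatrix observation) that each $\S_j$ inherits $(\epsilon,\delta,2k_t,\ell)$-OSE moments for every $t\geq j$, and then invokes \Cref{l:ose-chaining} to conclude that the \emph{composed} sketch $\bar\S_t=\S_t\cdots\S_1$ has $(\epsilon,3\delta,2k_t,\ell)$-OSE moments. A single application of \Cref{l:reg-approx} to $\A_0$ with $\bar\S_t$ (and with $\bar\S_{t-1}$) then gives
\[
\A_t^\top\A_t+\nu_t\I\ \approx_{1+6\epsilon}\ \A_0^\top\A_0+\nu_t\I\ \approx_{1+6\epsilon}\ \A_{t-1}^\top\A_{t-1}+\nu_t\I
\]
with the \emph{fixed} threshold $\nu_t=\sigAvg_{k_t}(\A_0)$, and $(1+6\epsilon)^2\leq 4$ finishes. (The $\bar\Sigma_{k_t}(\A_t)$ in the statement is evidently a slip for $\bar\Sigma_{k_t}(\A_0)$; the proof writes $\sigAvg_{k_t}(\A)$ and the only downstream use, in \Cref{t:dual-main-optimized}, reads the conclusion as $\nu\geq\sigAvg_{k_t}(\A)$ with $\A=\A_0$.) The payoff of chaining back to $\A_0$ is that all thresholds live on one matrix, so monotonicity $\nu_{t-1}\leq\nu_t$ is immediate from $k_t\leq k_{t-1}$ and no comparison of $\sigAvg_{k_t}(\cdot)$ across different $\A_t$'s is ever needed.

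Your level-by-level plan forces exactly that cross-matrix comparison, and this is where a genuine gap appears. Even granting the two-sided bound $\bar\Sigma_{k_t}(\A_t)\in(1\pm O(\epsilon))\,\bar\Sigma_{k_t}(\A_{t-1})$ you sketch, the chain
\[
\nu_{t-1}=\bar\Sigma_{k_{t-1}}(\A_{t-1})\ \leq\ \bar\Sigma_{k_t}(\A_{t-1})\ \leq\ (1+O(\epsilon))\,\bar\Sigma_{k_t}(\A_t)=(1+O(\epsilon))\,\nu_t
\]
only yields $\nu_{t-1}\leq(1+O(\epsilon))\nu_t$, not the $\nu_{t-1}\leq\nu_t$ that \Cref{def:primalchain} demands. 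Repairing this would require redefining the $\nu_t$ (say, inflating by accumulated $(1+O(\epsilon))^t$ factors) and then re-verifying both the spectral approximation and the constant $4$ with the shifted thresholds; the chaining argument via \Cref{l:ose-chaining} avoids all of this.
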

\begin{proof}
    It is easy to verify that each matrix $\S_t$ has $(\epsilon,\delta,2k_i,\ell)$-OSE moments for all $i\in\{t,...,T\}$. Indeed, for any $\U\in\R^{s_{t-1}\times k_t}$ with orthonormal columns, let $\U_{[k_i]}$ be the matrix consisting of the first $k_i$ columns of $\U$. Then $(\S_t\U_{[k_i]})^\top\S_t\U_{[k_i]}$ is a principal submatrix of $(\S_t\U)^\top\S_t\U$, so the extreme eigenvalues of the former are upper/lower-bounded by the extreme eigenvalues of the latter, and so $\|(\S_t\U_{[k_i]})^\top\S_t\U_{[k_i]}-\I_{k_i}\|\leq \|(\S_t\U)^\top\S_t\U-\I_{k_t}\|$, which implies the corresponding inequality for the moments.  
    
    Thus, using Lemma \ref{l:ose-chaining}, we conclude that matrices $\bar\S_t=\S_t\S_{t-1}...\S_1$ have $(\epsilon,3\delta,2k_t,\ell)$-OSE moments for each $t\in[T]$. Using Lemma \ref{l:reg-approx} this implies that, with probability $1-3T\delta$, for each $t$ and $\eta_t=\sigAvg_{k_t}(\A)$:
    \begin{align*}
        \A_t^\top\A_t+\nu_t\I \ \approx_{1+6\epsilon} \A_0^\top\A_0+\nu_t\I\ \approx_{1+6\epsilon} \A_{t-1}^\top\A_{t-1}+\nu_t\I
    \end{align*}
    which concludes the claim since $\epsilon\leq 1/6$.
\end{proof}

We now present our implicit dual recursive preconditioning. Note that the result is stated here in terms of the $\A_t$ matrices, even though during runtime we do not have access to these matrices but rather only to $\A_t\A_t^\top$ and to $\A_t\A_{t-1}^\top$. We make this explicit once we apply the framework to construct our PD solver.
\begin{lemma}\label{l:dual-general}
    Let $P = ((\A_t, \nu_t)_{t \in \{0,1,\ldots,T\}})\in\primalchain_T^d$ be a regularized preconditioning chain.  Suppose $\code{bsolve}$ is a $\frac1{20}$-solver  for $\A_T\A_T^\top+\nu_T\I$. Given $\epsilon\in(0,1)$, there is an $\epsilon$-solver $\code{solve}$ for $\A_0\A_0^\top+\nu_0\I$ with 
    \begin{align*}
        \runtime_{\code{solve}} &= O\bigg(K_T \runtime_{\code{bsolve}} + \sum_{t\in[T]} K_t\Big(\runtime_{\A_{t-1}\A_{t-1}^\top} + \runtime_{\A_t\A_{t-1}^\top} + \runtime_{\A_{t}\A_{t}^\top}\Big)\bigg),
        \\
        \text{where}\qquad K_t &:= (8\cdot 10^5)^t\sqrt{\nu_t/\nu_0}\log^{3t}(4\kappa)\log(2/\epsilon),\quad\text{ and }\quad
        \kappa = \max_{t\in[T]}\,(\|\A_{t-1}\|^2+\nu_t)/\nu_{t-1}.
    \end{align*}
\end{lemma}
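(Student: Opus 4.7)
I would prove the lemma by downward induction on $t$, building at each level $t$ a $\tfrac{1}{20}$-solver $f_t$ for $\A_t\A_t^\top + \nu_t\I$ from the $\tfrac{1}{20}$-solver $f_{t+1}$ already constructed at the previous level; at the very top, the target accuracy is $\epsilon$ instead of $\tfrac{1}{20}$, yielding \code{solve}. The base case is $f_T := \code{bsolve}$, which matches the hypothesis on \code{bsolve}. Each inductive step combines one invocation of \Cref{l:dual-reduction} (to swap $\A_t$ for $\A_{t-1}$ while keeping the regularizer at $\nu_t$) with one invocation of \Cref{lem:precon} (to decrease the regularizer from $\nu_t$ to $\nu_{t-1}$).

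\paragraph{Inductive step.} To build $f_{t-1}$ from $f_t$, I would first apply \Cref{l:dual-reduction} with $\A := \A_{t-1}$, $\B := \A_t$, $\nu := \nu_t$. The hypothesis $\A_t^\top\A_t + \nu_t\I \approx_4 \A_{t-1}^\top\A_{t-1} + \nu_t\I$ comes directly from $P \in \primalchain_T^d$, and $f_t$ supplies the required $\tfrac{1}{20}$-solver for $\B\B^\top + \nu\I$. Choosing target accuracy $\epsilon_{g_t} := \tfrac{\nu_{t-1}}{10\nu_t}$, the lemma produces a solver $g_t$ for $\A_{t-1}\A_{t-1}^\top + \nu_t\I$ using at most $10^5\log^2(2\kappa)\log(20\nu_t/\nu_{t-1})$ applications each of $f_t$ and $\A_t\A_t^\top$, plus $O(\log(\nu_t/\nu_{t-1}))$ applications of $\A_{t-1}\A_t^\top$ and $\A_{t-1}\A_{t-1}^\top$. (The $\kappa$ in \Cref{l:dual-reduction}, namely $1 + \|\A_{t-1}\|^2/\nu_t$, is bounded by the $\kappa$ of the lemma statement since $\nu_t/\nu_{t-1} \geq 1$.) Next, from $\A_{t-1}\A_{t-1}^\top + \nu_{t-1}\I \preceq \A_{t-1}\A_{t-1}^\top + \nu_t\I \preceq \tfrac{\nu_t}{\nu_{t-1}}(\A_{t-1}\A_{t-1}^\top + \nu_{t-1}\I)$, I would apply \Cref{lem:precon} with preconditioner solver $g_t$ and relative condition number $\kappa_t := \nu_t/\nu_{t-1}$, producing $f_{t-1}$ with $O(\sqrt{\nu_t/\nu_{t-1}}\log(1/\epsilon_{f_{t-1}}))$ applications of $g_t$ and of $\A_{t-1}\A_{t-1}^\top$. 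For $t \geq 2$ we set $\epsilon_{f_{t-1}} = \tfrac{1}{20}$; only at the top level $t = 1$ do we set $\epsilon_{f_0} = \epsilon$, inserting a single factor of $\log(2/\epsilon)$.

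\paragraph{Runtime accounting.} Composing the two steps, one application of $f_{t-1}$ invokes $f_t$ and $\A_t\A_t^\top$ at most $O(\sqrt{\nu_t/\nu_{t-1}}\log^3(4\kappa))$ times and invokes $\A_{t-1}\A_{t-1}^\top, \A_{t-1}\A_t^\top$ at most $O(\sqrt{\nu_t/\nu_{t-1}}\log(4\kappa))$ times (bounding $\log(\nu_t/\nu_{t-1}) \leq \log(\kappa)$ and absorbing absolute constants into the $8 \cdot 10^5$ base). Unrolling the recursion from level $1$ down to level $T$, the number of applications of $\code{bsolve}$ (resp.\ of $\A_t\A_t^\top$, $\A_t\A_{t-1}^\top$, $\A_{t-1}\A_{t-1}^\top$ at level $t$) in one application of \code{solve} multiplies to
\[
(8 \cdot 10^5)^t\,\sqrt{\nu_t/\nu_0}\,\log^{3t}(4\kappa)\,\log(2/\epsilon) \ = \ K_t,
\]
with the single $\log(2/\epsilon)$ factor attached at the top-level AGD call. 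Summing the per-level matrix-vector contributions weighted by $K_t$ and adding $K_T \runtime_\code{bsolve}$ yields the stated bound on $\runtime_\code{solve}$.

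\paragraph{Main obstacle.} The main technical obstacle is the careful tracking of accuracies so that at each level the constructed solver meets the exact input requirement of the next reduction: $f_t$ must be a $\tfrac{1}{20}$-solver to feed into \Cref{l:dual-reduction}, while $g_t$ must be a $\tfrac{1}{10\kappa_t}$-solver to serve as an AGD preconditioner, and these two accuracy targets interact with the logarithmic factors in the application counts. A secondary obstacle is showing that the $\kappa$ parameter inside each invocation of \Cref{l:dual-reduction} is uniformly controlled by the single $\kappa := \max_{t \in [T]}(\|\A_{t-1}\|^2 + \nu_t)/\nu_{t-1}$ stated in the lemma, so that a single $\log^3(4\kappa)$ factor suffices across all levels. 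Everything else is direct substitution and telescoping of the $\sqrt{\nu_t/\nu_{t-1}}$ factors.
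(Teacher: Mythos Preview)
Your proposal is correct and follows essentially the same approach as the paper's own proof: downward induction building a $\tfrac{1}{20}$-solver $f_t$ for $\A_t\A_t^\top+\nu_t\I$ at each level, where the inductive step first invokes \Cref{l:dual-reduction} (with $\A=\A_{t-1}$, $\B=\A_t$, target accuracy $\tfrac{1}{10\kappa_t}$) and then \Cref{lem:precon} to lower the regularizer from $\nu_t$ to $\nu_{t-1}$, with the single $\log(2/\epsilon)$ factor inserted only at $t=1$. The accuracy chaining and the bound $1+\|\A_{t-1}\|^2/\nu_t\le\kappa$ you identify are exactly the points the paper handles, and your telescoping of the $\sqrt{\nu_t/\nu_{t-1}}$ and $\log^3(4\kappa)$ factors into $K_t$ matches the paper's recursion $\runtime_{f_{t-1}}\le 8\cdot10^5\sqrt{\nu_t/\nu_{t-1}}\log^3(4\kappa)\bigl(\runtime_{f_t}+O(\runtime_{\A_{t-1},\A_t})\bigr)$.
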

\begin{proof}
    Since $\A_t^\top\A_t+\nu_t\I\approx_{4}\A_{t-1}^\top\A_{t-1}+\nu_t\I$, we can recursively apply Lemma \ref{l:dual-reduction} to each pair of $\A_{t-1}$ and $\A_t$. Let $\kappa_t = \frac{\nu_t}{\nu_{t-1}}\leq \kappa$. The lemma implies that given a $\frac1{20}$-solver $f_t$ for $\A_t\A_t^\top+\nu_t\I$, we can construct a $\frac1{10\kappa_{t}}$-solver $\tilde f_{t-1}$ for $\A_{t-1}\A_{t-1}^\top+\nu_t\I$ which applies $f_t$, $\A_t\A_t^\top$, $\A_t\A_{t-1}^\top$ and $\A_{t-1}\A_{t-1}^\top$ at most $10^5\log^2(4\kappa)\log(2\kappa_t)\leq 10^5\log^3(4\kappa)$ times each.
    
    Now, since $\A_{t-1}\A_{t-1}^\top+\nu_{t-1}\I\preceq \A_{t-1}\A_{t-1}^\top+\nu_t\I\preceq \frac{\nu_t}{\nu_{t-1}}(\A_{t-1}\A_{t-1}^\top+\nu_{t-1}\I)$, using Lemma \ref{lem:precon} we can construct a $\epsilon_{t-1}$-solver $f_{t-1}$ for $\A_{t-1}\A_{t-1}^\top+\nu_{t-1}\I$ which applies $\tilde f_{t-1}$ and $\A_{t-1}\A_{t-1}^\top$ at most $\lceil4\sqrt{\nu_t/\nu_{t-1}}\log(2/\epsilon_{t-1})\rceil$ times. Here, to put this together recursively we can use $\epsilon_{t-1}=\frac1{20}$ until we reach the target solver with $\epsilon_0=\epsilon$. Thus, using the shorthand $\runtime_{\A_{t-1},\A_t}=\runtime_{\A_{t-1}\A_{t-1}^\top} + \runtime_{\A_t\A_{t-1}^\top}+\runtime_{\A_t\A_t^\top}$, for any $t\geq 2$ we have:
    \begin{align*}
    \runtime_{f_{t-1}} 
    &= \lceil4\sqrt{\nu_t/\nu_{t-1}}\log(40)\rceil(\runtime_{\tilde f_{t-1}} + O(\runtime_{\A_{t-1}\A_{t-1}^\top}))
    \\
    &\leq
    \lceil4\sqrt{\nu_t/\nu_{t-1}}\log(40)\rceil
    \Big(10^5\log^3(4\kappa)\big(\runtime_{f_t} + O(\runtime_{\A_{t-1},\A_t})\big)+O(\runtime_{\A_{t-1}\A_{t-1}^\top})\Big)
    \\
    &\leq 8\cdot 10^5\sqrt{\nu_t/\nu_{t-1}}\log^3(4\kappa)\big(\runtime_{f_t} + O(\runtime_{\A_{t-1},\A_t})\big),
    \end{align*}
whereas for $t=1$, the right-hand side needs to additionally be multiplied by $\log(2/\epsilon)$. To unroll this recursion, we perform induction over decreasing $t$. To that end, let $\alpha_t = 8\cdot 10^5\sqrt{\nu_t/\nu_{t-1}}\log^3(4\kappa)$ for any $t\geq 2$, with $\alpha_1=8\cdot 10^5\sqrt{\nu_1/\nu_{0}}\log^3(4\kappa)\log(2/\epsilon)$. Also, let $K_{t_1,t_2} = \prod_{i=t_1}^{t_2}\alpha_i$. Then, the above recursion states that $\runtime_{f_{t-1}}\leq \alpha_t(\runtime_{f_t} + \runtime_{\A_{t-1},\A_t})$.

By induction, the recursion implies that 
\begin{align}
    \runtime_{f_t}\leq K_{t+1,T}\runtime_{\code{bsolve}} + \sum_{i \in [T-t]}K_{t+1,t+i}\runtime_{\A_{t+i-1},\A_{t+i}},\label{eq:dual-induction}
\end{align}
where $\runtime_{\code{bsolve}}=\runtime_{f_T}$. Indeed, for $t=T-1$ this is simply the recursion. Suppose that this holds for some $t$. Then, 
\begin{align*}
    \runtime_{f_{t-1}}
    &\leq \alpha_t\Big(K_{t+1,T}\runtime_{\code{bsolve}} + \sum_{i=1}^{T-t}K_{t+1,t+i}\runtime_{\A_{t+i-1},\A_{t+i}}\Big) + \alpha_t\runtime_{\A_{t-1},\A_t}
    \\
    &= K_{t,T}\runtime_{\code{bsolve}} + K_{t,t}\runtime_{\A_{t-1},\A_t} + \sum_{i=2}^{T-t+1}K_{t-1+1,t-1+i}\runtime_{\A_{t-1+i-1},\A_{t-1+i}}.
\end{align*}
Finally, applying \eqref{eq:dual-induction} to  $f_0=\code{solve}$ and observing that $K_t = K_{1,t}$ we obtain the claim.
\end{proof}
\subsection{Positive Definite Solver}
We next describe how our implicit dual recursive preconditioning can be applied to construct a fast solver for PD linear systems that attains an improved conditioning dependence. Below is the main result of this section, from which \Cref{t:psd-optimized} follows as a corollary.

\begin{theorem}\label{t:dual-main-optimized}
 There is an algorithm that given $\M\in\PD{d}$, $k\in[d]$, $\tilde\kappa\approx_2\kappa(\M)$, $\tilde\lambda\approx_2 \lambda_{d}(\M)$, $\tilde\Lambda\approx_2\lambAvg_{2k}(\M)$, in $\tilde O(d^2)+O(k^\omega)$ time whp.\ produces $\runtime$-time $\epsilon$-solver for $\M$ with
        \begin{align*}
        \runtime = \tilde O\bigg(
        \min_{\eta\in(0,1/4)}d^{2} \sqrt{\bar\kappa_{k,1/4+\eta}(\M)}\log^{9/2+6/\eta}(d\kappa(\M))\log(1/\epsilon)\bigg).
    \end{align*}
\end{theorem}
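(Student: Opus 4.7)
My plan is to instantiate the implicit dual framework of \Cref{l:dual-general} with an aggressive recursion schedule $k_t = \max\{\lceil d\exp(-\alpha t^2)\rceil, 2k\}$, mirroring the primal--dual strategy behind \Cref{t:primal-dual-optimized} but adapted to the PD setting. Assume $2k \leq d$ (else invert $\M$ directly in $O(d^\omega)=O(k^\omega)$ time). I would set $\alpha := \log(8\cdot10^5\log^3(36 d\tilde\kappa^2))$ so that the $(8\cdot 10^5)^t\log^{3t}(4\kappa)$ factor in the $K_t$ of \Cref{l:dual-general} is absorbed into $\exp(\alpha t)$, and take $T := \lceil\sqrt{\log(d/(2k))/\alpha}\rceil$. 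Conceptually, the chain is $\A_t := \S_t\cdots\S_1\A_0$ for $\A_0\A_0^\top = \M$ and sparse embeddings $\S_t\in\R^{s_t\times s_{t-1}}$ with $s_0=d$, $s_t=O(k_t)+\tilde O(1)$, and $(0.1,\delta,2k_t,O(\log(k_t/\delta)))$-OSE moments. Crucially, I would never materialize $\A_0$: instead I precompute the dual Gram matrices $\M_t := \S_t\M_{t-1}\S_t^\top$ from $\M_0:=\M$ in $\tilde O(d^2)$ total time. Then $\A_t\A_t^\top = \M_t$ and $\A_t\A_s^\top = \S_t\cdots\S_{s+1}\M_s$ can be applied in $\tilde O(\max\{s_s,s_t\}^2)$ time. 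By \Cref{l:ose-chaining} and \Cref{l:dual-chain} applied to $\bar\S_t = \S_t\cdots\S_1$, whp the sequence $((\A_t,\nu_t))_{t\in\{0\}\cup[T]}$ is a valid regularized preconditioning chain whenever $\nu_t\geq 2\sigAvg_{k_t}(\A_0)=2\lambAvg_{k_t}(\M)$.

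The first obstacle is that $\lambAvg_{k_t}(\M)$ is not directly accessible; as in the proof of \Cref{t:primal-dual-optimized}, I would select the $\nu_t$ inductively backward from $\nu_T := 8\tilde\Lambda$ by a log-scale scan. The key technical difficulty, which I expect to be the hardest step, is that the primal--dual tester of \Cref{l:primaldual-condition-tester} relies on matvecs with $\A$ and $\B$ separately, whereas in the implicit dual setting we can only access Gram products $\M_t$ and $\S_t\M_{t-1}$. I would resolve this by a dual condition tester (to be developed in \Cref{s:dual-condition-tester}) built around the generalized Nystr\"om preconditioner $\N_{\hat\nu} := \M_{t-1}\S_t^\top(\M_t+\hat\nu\I)^{-1}\S_t\M_{t-1}+\hat\nu\I$, the very object arising inside the proof of \Cref{l:dual-reduction}. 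Solving against $\N_{\hat\nu}$ via the Woodbury identity reduces to a solver for $\M_t+O(\hat\nu)\I$, which is available through the already-constructed upper portion of the chain; the inexact power iteration of \Cref{l:norm-estimate} applied to $(\M_{t-1}+\hat\nu\I)^{1/2}\N_{\hat\nu}^{-1}(\M_{t-1}+\hat\nu\I)^{1/2}$ then certifies the Nystr\"om approximation up to constants, which by the same spectral calculation used in \Cref{l:dual-reduction} is equivalent to the chain condition $\A_t^\top\A_t+\hat\nu\I \approx_4 \A_{t-1}^\top\A_{t-1}+\hat\nu\I$. Accepting the smallest $\hat\nu$ that passes and updating $\nu_j := \max\{\nu_j,\hat\nu\}$ for $j>t$ preserves monotonicity and guarantees $\nu_t = O(\lambAvg_{k_t}(\M))$.

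At the base, $\M_T+\nu_T\I$ is $\tilde O(k)\times\tilde O(k)$, so a $\tfrac{1}{20}$-solver is produced in $O(k^\omega)$ time by explicit inversion. Plugging the chain into \Cref{l:dual-general} with $\runtime_{\A_{t-1}\A_{t-1}^\top}=O(k_{t-1}^2)$ as the dominant per-level cost yields per-solve complexity $\tilde O(\sum_{t\in[T]} K_t\cdot k_{t-1}^2)$ with $K_t = (8\cdot10^5)^t\sqrt{\nu_t/\nu_0}\log^{3t}(4\kappa)\log(1/\epsilon)$, and the choice of $\alpha$ forces $\exp(\alpha)=O(\log^3(d\kappa(\M)))$ so $K_t = O(\sqrt{\nu_t/\nu_0}\exp(\alpha t)\log(1/\epsilon))$. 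Using $\nu_0 \approx \lambda_d(\M)=\sigma_d(\A_0)^2$ and $\nu_t = O(\sigAvg_{k_t}(\A_0))$, \Cref{l:running-time-general} applied with $p=2$ bounds each summand by $\tilde O(d^2\bar\kappa_{k,1/2+\eta'}(\A_0)\exp(\alpha(3/2 + 4/\eta')))$ for any $\eta'\in(0,1)$, using the same loose form of the exponent as in the proof of \Cref{t:primal-dual-optimized}. The identity $\bar\kappa_{k,p}(\A_0)=\sqrt{\bar\kappa_{k,p/2}(\M)}$, which follows from $\sigma_i(\A_0)^2=\lambda_i(\M)$, together with the substitution $\eta := \eta'/2$, converts this into the claimed $d^2\sqrt{\bar\kappa_{k,1/4+\eta}(\M)}\log^{9/2+6/\eta}(d\kappa(\M))\log(1/\epsilon)$ bound, and taking the minimum over $\eta\in(0,1/4)$ gives the stated runtime. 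A single final round of preconditioned AGD (\Cref{lem:precon}) against $\M+\nu_0\I\approx_2\M$ converts the $\epsilon$-solver for $\M+\nu_0\I$ into one for $\M$ with constant overhead, and $\tilde\kappa$, $\tilde\lambda$, $\tilde\Lambda$ are obtained by the same log-scale grid search used in \Cref{t:primal-dual-optimized}.
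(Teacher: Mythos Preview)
Your overall architecture matches the paper almost exactly: the schedule $k_t=\max\{\lceil d\exp(-\alpha t^2)\rceil,2k\}$, the implicit construction $\M_t=\S_t\M_{t-1}\S_t^\top$, the base solve by explicit inversion of $\M_T+\nu_T\I$, the invocation of \Cref{l:dual-general}, the application of \Cref{l:running-time-general} with $p=2$, and the final conversion $\bar\kappa_{k,1/2+\eta'}(\A_0)=\sqrt{\bar\kappa_{k,1/4+\eta'/2}(\M)}$ are all the same as in the paper's proof. (One minor slip: the theorem \emph{assumes} $\tilde\kappa,\tilde\lambda,\tilde\Lambda$ are given; the grid search belongs to \Cref{t:psd-optimized}, not here.)

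The real gap is your proposed condition tester. You claim that estimating $\|(\M_{t-1}+\hat\nu\I)^{1/2}\N_{\hat\nu}^{-1}(\M_{t-1}+\hat\nu\I)^{1/2}\|$, i.e.\ the Nystr\"om approximation quality, is ``by the same spectral calculation used in \Cref{l:dual-reduction} equivalent to the chain condition.'' That calculation only gives one implication: the chain condition $\A_t^\top\A_t+\hat\nu\I\approx_4\A_{t-1}^\top\A_{t-1}+\hat\nu\I$ implies good Nystr\"om quality. The converse is false. Take $\A_{t-1}=\I_d$, $\A_t=10\,e_1^\top$, $\hat\nu=1$: then $\N_{\hat\nu}\approx_2(\M_{t-1}+\hat\nu\I)$, yet the chain condition needs $c\geq 50$. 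The Nystr\"om approximation is insensitive to the \emph{scale} of $\A_t$ in a way the chain condition is not. This matters because \Cref{l:dual-reduction} uses the full two-sided chain condition a second time, to show $\C^\top\C+\hat\nu\W_{\hat\nu}\approx_{16}(\W_{\hat\nu/2})^2$; without it the inner solve (and hence your very solver for $\N_{\hat\nu}$, which you need to run the tester) breaks down, so your test is also circular. A secondary issue is that power iteration on $(\M_{t-1}+\hat\nu\I)^{1/2}\N_{\hat\nu}^{-1}(\M_{t-1}+\hat\nu\I)^{1/2}$ requires a square root you do not have.

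The paper's tester (\Cref{l:dual-condition-tester}) sidesteps all of this by directly estimating the two symmetric quantities $\|\A_{t-1}(\A_t^\top\A_t+\hat\nu\I)^{-1}\A_{t-1}^\top+\I\|$ and $\|\A_t(\A_{t-1}^\top\A_{t-1}+\hat\nu\I)^{-1}\A_t^\top+\I\|$, each of which bounds one side of the chain relation; the matvecs are obtained via Woodbury as $\frac{1}{\hat\nu}(\M_{t-1}-\C_t(\M_t+\hat\nu\I)^{-1}\C_t^\top)$ and its symmetric counterpart, using only the Gram matrices $\M_{t-1},\M_t,\C_t$ and solvers for $\M_{t-1}+\hat\nu\I$ and $\M_t+\hat\nu\I$, both supplied by the already-built portion of the chain. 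Once you replace your tester with this one, the rest of your argument goes through.
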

\begin{proof}
    Define $\alpha=\log(8\cdot10^5\log^3(8\cdot 20d\tilde\kappa))$ and $k_t = \max\{\lceil d\exp(-\alpha t^2)\rceil,2k\}$ for each $t\in[T]$, where $T=\lceil\sqrt{\log(d/(2k))/\alpha}\rceil$.
    We take $\A=\A_0$ such that $\M=\A\A^\top$. Consider sparse subspace embeddings $\S_t\in\R^{s_t\times s_{t-1}}$ from \Cref{l:ose-sparse} with $s_t = O(k_t)+\tilde O(1)$ such that each $\S_t$ has $(0.1,\delta,2k_t,O(\log(k_t/\delta)))$-OSE moments. Then, using \Cref{l:dual-chain}, the matrices $\A_t = \S_t\S_{t-1}...\S_1\A_0$ whp.\ satisfy the following guarantee:
    \begin{align}
        \A_t^\top\A_t+\nu\I \approx_c \A_{t-1}^\top\A_{t-1}+\nu\I \quad\text{for any }\nu\geq \sigAvg_{k_t}(\A)
        \text{ and }c\geq 2. \label{eq:dual-condition}
    \end{align}
     Since we do not have direct access to matrix $\A_0$, nor any of the matrices $\A_t$, we will work with the following matrices $\M_t = \A_t\A_t^\top = \S_t\M_{t-1}\S_t^\top$ and $\C_t = \A_{t-1}\A_{t}^\top = \M_{t-1}\S_t^\top$. Constructing each pair $\M_{t+1}$ and $\C_{t+1}$ from $\M_{t}$ via Lemma \ref{l:ose-sparse} takes $\tilde O(\nnz(\M_t) + \nnz(\C_t)) = \tilde O(k_{t-1}^2)$, so the total construction cost is $\tilde O(d^2)$.

    Next, we construct $\code{bsolve}$, the $\frac1{20}$-solver for $\A_T\A_T^\top+\nu_T\I=\M_T+\nu_T\I$. In fact, since $s_T=O(k)+\tilde O(1)$, we can do this by simply computing the inverse $(\M_T+\nu_T\I)^{-1}$ at the cost of $O(k^\omega)+\tilde O(1)$, and then the solver applies that matrix to a vector at the cost of $\runtime_{\code{bsolve}} = \tilde O(k^2)$.

    Our goal is to now construct a regularized preconditioning chain, following Definition \ref{def:primalchain}. To achieve this, we will show how to efficiently find $\nu_t\leq 4\sigAvg_{k_t}(\A)$ that satisfy condition \eqref{eq:dual-condition} with $c=4$, thus ensuring the preconditioning chain property for $((\A_t,\nu_t)_{t\in\{0,1,...,T\}})\in\primalchain_T^d$. 

    The argument follows by induction, similarly to the one in Section \ref{s:primal-dual}. Let $\nu_T = 2\tilde\Lambda$, $\nu_0=\tilde\lambda$, and suppose that $\nu_{t+1},...,\nu_T$ satisfy $\nu_0\leq\nu_i\leq 4\sigAvg_{k_i}(\A)$ for all $i>t$ and $((\A_i,\nu_i)_{i\in\{t,t+1,...,T\}})\in\primalchain_{T-t}^d$ for any $\nu_t\in[\nu_0,\nu_{t+1}]$. The base case of the induction follows from our definition of $\nu_T$. In order to show the inductive case, we start by describing an efficient solver for $\A_t\A_t^\top+\nu\I$.
    
    \paragraph{Efficient solver for $\A_t\A_t^\top+\nu\I$.} We apply Lemma \ref{l:dual-general} with $((\A_i,\nu_i)_{i\in\{t,t+1,...,T\}})\in\primalchain_{T-t}^d$ for any $\nu_t\in[\nu_0,\nu_{t+1}]$, using that $\max_{i>t}(\|\A_{i-1}\|^2+\nu_{i})/\nu_{i-1} \leq 4(\|\A\|^2+4d\|\A\|^2)/\sigma_d(\A)^2\leq 20d\kappa(\M)$, to construct an $\epsilon$-solver $f_t$ for $\A_t\A_t^\top+\nu\I$ running in time:
    \begin{align*}
        \runtime_{f_t} 
        &= \tilde O\bigg(\sum_{i=t+1}^T k_{i-1}^2\sqrt{\frac{\nu_i}{\nu}}\Big(8\cdot 10^5\log^3(4\cdot 20d\kappa(\M))\Big)^i\log(1/\epsilon)\bigg)
        =\tilde O\bigg(\sum_{i=t+1}^T k_{i-1}^2\sqrt{\frac{\sigAvg_{k_i}(\A)}{\sigma_d(\A)^2}}\exp(\alpha i) \log(1/\epsilon)\bigg).
    \end{align*}
    Applying Lemma \ref{l:running-time-general} with $p=2$, we obtain that:
    \begin{align}
        \runtime_{f_t} &= \tilde O\bigg(d^2\bar\kappa_{k,1/2+\eta}(\A)\exp\Big(\alpha\Big(\frac32 + \frac4\eta\Big)\Big)\log(1/\epsilon)\bigg)\\
        &= \tilde O\bigg(d^2\sqrt{\bar\kappa_{k,1/4+\eta/2}(\M)}\log^{9/2+12/\eta}(d\kappa(\M))\log(1/\epsilon)\bigg).\label{eq:dual-bound}
    \end{align}    
    where we use that since $\M=\A\A^\top$, we have $\bar\kappa_{k,p}(\A):=\sqrt{\bar\kappa_{k,p/2}(\M)}$. Performing a change of variable from $\eta/2$ to $\eta$ we recover the correct exponent in the logarithm.

\paragraph{Finding the regularization parameter $\nu_t$.} We now discuss how to find the parameter $\nu_t$ that can be used to extend the preconditioning chain. This step follows analogously as in the proof of Theorem \ref{t:dual-main-optimized}. The key difference lies in how we test the spectral approximation condition \eqref{eq:dual-condition}. Here, the power method must be implemented with additional care, since we do not have direct access to the matrices $\A_t$. This is attained with the following lemma, proven below in Section \ref{s:dual-condition-tester}.
\begin{lemma}\label{l:dual-condition-tester}
    Consider matrices $\A\in\R^{n_1\times d}$, $\B\in\R^{n_2\times d}$, and $\nu>0$,
    such that $\A^\top\A+\nu\I\approx_8\B^\top\B+\nu\I$ and let $\kappa = 1+\|\A\|^2/\nu$.
    Given $\kappa^{-2}d^{-O(1)}$-solvers $f$ and $g$ for $\A\A^\top+\nu\I$ and $\B\B^\top+\nu\I$, there is an algorithm that in $O((\runtime_f + \runtime_{g}+\runtime_{\A\A^\top}+\runtime_{\B\B^\top}+\runtime_{\A\B^\top})\log (d)\log(1/\delta))$ time whp.\ returns $X\in\{0,1\}$ such that:
    \begin{enumerate}
        \item If $X=1$, then $\A^\top\A+\nu\I\approx_4\B^\top\B+\nu\I$,
        \item If $X=0$, then $\A^\top\A+\nu\I\not\approx_2 \B^\top\B+\nu\I$.
    \end{enumerate}
\end{lemma}
We are going to apply Lemma \ref{l:dual-condition-tester} with $\A=\A_{t-1}$ and $\B=\A_t$. Crucially, the algorithm only requires matrix vector products with $\M_{t-1}$, $\M_t$, and $\C_{t}$, which are all explicitly constructed. As in the proof of Theorem \ref{t:dual-main-optimized}, we consider a sequence of candidates $\hat\nu_{t,i} = \nu_T/2^i$, and run the tester algorithm until it returns $X=0$. Suppose that we have already shown that $\hat\nu_{t,i}$ satisfies \eqref{eq:dual-condition} with $c=4$. 
Now, to apply Lemma \ref{l:dual-condition-tester} on $\hat\nu_{t,i+1}$ we need to be able to run solvers for both $\A_t\A_t^\top+\hat\nu_{t,i+1}\I$ 
and $\A_{t-1}\A_{t-1}^\top+\hat\nu_{t,i+1}\I$. 
For the latter, we consider the following preconditioning chain $((\A_{t-1},\hat\nu_{t,i+1}),(\A_t,\hat\nu_{t,i}),((\A_j,\max\{\nu_j,\hat\nu_{t,i}\}))_{j\in\{t+1,...,T\}})$. 
The running time bound obtained from Lemma \ref{l:dual-general} will be no larger than \eqref{eq:dual-bound}, as the only additional term that appears in this bound is $\tilde O(k_{t-1}^2\sqrt{\hat\nu_{t,i}/\hat\nu_{t,i+1}}\exp(\alpha t))$ which is bounded by \eqref{eq:dual-bound} since $\hat\nu_{t,i+1} = \hat\nu_{t,i}/2$.

    Having found all of the regularization parameters, the claim now follows by observing that $\M\preceq\M+\nu_0\I\preceq 4\M$, and using $O(\log1/\epsilon)$ applications of a $\Theta(1)$-solver $f_0$ for $\M+\nu_0$ we can construct an $\epsilon$-solver~for~$\M$.
\end{proof}
We now describe how Theorem \ref{t:psd-optimized} follows from Theorem \ref{t:dual-main-optimized}, which involves a grid search through all of the problem-dependent quantities, again following the strategy from the previous sections.
\begin{proof}[Proof of Theorem \ref{t:psd-optimized}]
    To find the correct values of all of the parameters, we start by computing $M=\tr(\M)$. Then, we choose $\tilde \kappa= 2,4,8, ...$ and for each of those values, perform a grid search over the following configurations: all pairs of $\tilde\lambda\leq\tilde\Lambda$ from the set $\{\tilde M 2^{-t} \mid t\in[\lceil\log_2(d\tilde\kappa)\rceil]\}$. We return the solution with the smallest least squares loss. Requires evaluating the algorithm $O(\log^3(d\kappa(\M)))$ times.
    \end{proof}

\subsection{Testing Spectral Approximation}
\label{s:dual-condition-tester}
In this section, we give the proof of \Cref{l:dual-condition-tester}. This argument requires some extra care compared to the corresponding analysis for the primal-dual regression solver (\Cref{l:primaldual-condition-tester}), because the matrices $\A$ and $\B$ are not given to us explicitly.
\begin{proof}[Proof of \Cref{l:dual-condition-tester}]
We are going to use the power method to check the following spectral norm condition with appropriately chosen $c>1$:
\begin{align}
    \|\A(\B^\top\B+\nu\I)^{-1}\A^\top+\I\|\leq c
    \quad\text{and}\quad
    \|\B(\A^\top\A+\nu\I)^{-1}\B^\top+\I\|\leq c.
    \label{eq:dual-condition2}
\end{align}
Note that if the above norm bounds hold, then this implies the spectral approximation $\M\approx_c\N$, where $\M := \A^\top\A+\nu\I$ and $\N=\B^\top\B+\nu\I$, because:
\begin{align*}
    \|\N^{-1/2}\M\N^{-1/2}\|\leq 
    \|\M^{-1/2}\A^\top\A\N^{-1/2}\| + \|\nu\N^{-1}\|\leq \|\A\N^{-1}\A^\top\|+1\leq c,
\end{align*}
and similarly, we get $\|\M^{-1/2}\N\M^{-1/2}\|\leq c$, which together imply $\M\approx_c\N$. In the other direction, we require a slightly stronger spectral approximation: if $\M\approx_{c-1}\N$, then \eqref{eq:dual-condition2} follows since
\begin{align*}
    \|\A\N^{-1}\A^\top + \I\| = \|\N^{-1/2}\A^\top\A\N^{-1/2}\| + 1\leq \|\N^{-1/2}\M\N^{-1/2}\| + 1\leq c,
\end{align*}
with an analogous calculation for the other bound.

The crucial advantage of condition \eqref{eq:dual-condition2} is that we can apply the matrices under the norms using only matrix vector products with $\A\A^\top$, $\B\B^\top$, and $\A\B^\top$, thanks to the Woodbury solver from Lemma \ref{l:woodbury}. Specifically, observe that:
\begin{align*}
    \A(\B^\top\B+\nu\I)^{-1}\A^\top = \frac1\nu\Big(\A\A^\top - \A\B^\top(\B\B^\top+\nu\I)^{-1}\B\A^\top\Big).
\end{align*}
From Lemma \ref{l:woodbury}, we know that $\hat\y := \frac1\nu(\A^\top\x - \B^\top g(\B\A^\top\x))$  satisfies $\|\hat\y-\y\|_{\N}^2\leq \epsilon\|\y\|_{\N}^2$ as an approximation of $\y = \N^{-1}\A^\top\x$ for $\epsilon=d^{-O(1)}$. It follows that $\hat\z = \A\hat\y + \x$ as an approximation of $(\A\N^{-1}\A^\top+\I)\x$ satisfies:
\begin{align*}
    \|\hat\z - \z\|^2 &= \|\A(\hat\y-\y)\|^2 \leq \|\hat\y-\y\|_{\M}^2\leq 8\|\hat\y-\y\|_{\N}^2
    \\
    &\leq 8\epsilon\|\y\|_{\N}^2 = 8\epsilon\|\N^{-1/2}\A^\top\x\|^2\leq 64\epsilon\|\x\|^2\leq 64\epsilon\|(\A\N^{-1}\A^\top+\I)\x\|^2,
\end{align*}
where  we used that since $\M\approx_8\N$ we have $\|\N^{-1/2}\A^\top\|^2\leq \|\N^{-1/2}\M\N^{-1/2}\|\leq 8$. Thus, we have shown how to perform a $64\epsilon$-matvec for $\A\N^{-1}\A^\top+\I$ in $O(\runtime_{\A\A^\top} + \runtime_{\A\B^\top}+\runtime_g)$ time. Similarly, we can construct a $64\epsilon$-matvec for $\B\M^{-1}\B^\top+\I$ in $O(\runtime_{\B\B^\top} + \runtime_{\A\B^\top}+\runtime_f)$ time. Therefore, using Lemma \ref{l:power-matvec} in time $O((\runtime_f + \runtime_{g}+\runtime_{\A\A^\top}+\runtime_{\B\B^\top}+\runtime_{\A\B^\top})\log(d)\log(1/\delta))$ with probability $1-\delta$ we can construct estimates:
\begin{align*}
    \hat a \approx_{8/7} \|\A\N^{-1}\A^\top + \I\|\quad\text{and}\quad
    \hat b \approx_{8/7} \|\B\M^{-1}\B^\top+\I\|.
\end{align*}
The final output of the algorithm now becomes:
\begin{align*}
    X = \mathbb{1}_{\hat a\leq 3.5}\cdot\mathbb{1}_{\hat b\leq 3.5}.
\end{align*}
Note that if $X=1$, then $\max\{\|\A\N^{-1}\A^\top+\I\|,\|\B\M^{-1}\B^\top+\I\|\}\leq (8/7)\cdot 3.5 =4$, which implies that $\M\approx_4\N$. On the other hand, if $X=0$, then one of the two norms is bigger than $(7/8)\cdot 3.5>3$, which means that $\M\not\approx_2\N$.
\end{proof}

\section{Spectrum Approximation}
\label{s:spectrum}

In this section, we discuss how our linear system solvers can be used as subroutines in other matrix approximation tasks, leading to the proof of \Cref{t:nuclear}. As a concrete example, we focus on the task of approximating the Schatten $p$-norm of a matrix $\A$, building on the algorithmic framework of \cite{musco2018spectrum}. Other approximation tasks described in \cite{musco2018spectrum}, such as estimating Orlicz norms and Ky Fan norms, can be similarly accelerated with our algorithms. Other natural applications of our framework include principal component projections \cite{frostig2016principal} and eigenvector computation~\cite{garber2016faster}.

The task of Schatten $p$-norm approximation is defined as follows: Given $\A\in\R^{n\times d}$, constant $p>0$, and $\epsilon\in(0,1)$, compute a $(1\pm\epsilon)$ factor approximation of:
\begin{align*}
    \|\A\|_p^p = \sum_{i \in [d]} \sigma_i^p(\A).
\end{align*}
For the case of $p\geq 2$, \cite{musco2018spectrum} gave running times that are already nearly-linear for dense matrices with $\epsilon= \Theta(1)$, and thus the main remaining question is improving Schatten $p$-norm approximation for $p\in(0,2)$, with the case of $p=1$ (the nuclear/trace norm) being of particular significance. The algorithmic framework developed by \cite{musco2018spectrum} provides a reduction from this task to the task of solving linear systems with matrices $\A^\top\A+\lambda\I$ for a carefully chosen $\lambda$. The following lemma describes this reduction.
\begin{lemma}[Based on the proof of Corollary 12 in \cite{musco2018spectrum}]\label{l:spectrum-reduction}
Given a matrix $\A\in\R^{n\times d}$, $p\in(0,2)$ and $\epsilon\in(0,1)$, and a $\poly(\epsilon/d)$-solver $f_\lambda$ for $\A^\top\A+\lambda\I$ with $\lambda \geq (\frac\epsilon d\|\A\|_p^p)^{2/p}$, there is an algorithm that, using $\tilde O(\frac1{\epsilon^5p})$ calls to $f_\lambda$ and $\tilde O(\runtime_{\A})$ additional time, whp.\ returns $X\in(1\pm O(\epsilon))\|\A\|_p^p$.
\end{lemma}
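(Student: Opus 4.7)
The plan is to follow the Musco--Netrapalli--Sidford--Ubaru--Woodruff framework (specifically the proof of Corollary 12 of \cite{musco2018spectrum}) almost verbatim, substituting our abstract solver access for theirs, and to certify that the stated counts of solver calls and accuracy tolerance actually suffice. The starting identity is $\|\A\|_p^p = \tr((\A^\top\A)^{p/2}) = \sum_i f(\sigma_i^2(\A))$ for $f(x) = x^{p/2}$, and the idea is to approximate $f$ by a rational function $R$ and then estimate $\tr(R(\A^\top\A))$ stochastically.

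First, I would justify truncation at the threshold $\lambda$: any $\sigma_i(\A)^2 < \lambda$ contributes at most $\lambda^{p/2} \leq \tfrac{\epsilon}{d}\|\A\|_p^p$ to the sum, so the total contribution of all such indices is at most $\epsilon\|\A\|_p^p$. Thus it suffices to approximate, up to relative error $O(\epsilon)$, the quantity $\tr(\tilde f(\A^\top\A))$ for a function $\tilde f$ that agrees with $f$ on $[\lambda, \|\A\|^2]$ and vanishes (or is small) on $[0,\lambda]$. Next, invoke a standard Zolotarev/Remez-style best rational approximation of $x^{p/2}$ (as used in \cite{musco2018spectrum}) to obtain a rational function $R(x) = \sum_{j=1}^{L} c_j / (x + \mu_j)$ of degree $L = \tilde O(1/\epsilon)$, with shifts $\mu_j \geq \lambda$ and $\mu_j = \poly(d/\epsilon)\cdot\lambda$, achieving $|R(x) - \tilde f(x)| \leq \epsilon\, \tilde f(x)$ uniformly on $[0,\|\A\|^2]$; the $1/p$ factor in the final runtime arises from the interval-length dependence of the rational approximation for the fractional power~$x^{p/2}$.

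Then I would apply Hutchinson's stochastic trace estimator with $N = \tilde O(1/\epsilon^2)$ independent standard Gaussian (or $\pm 1$) vectors $g^{(1)},\ldots,g^{(N)}$ and form $\hat{X} = \frac{1}{N}\sum_\ell (g^{(\ell)})^\top R(\A^\top\A) g^{(\ell)}$. Standard concentration (as in Lemma~16/Section~4 of \cite{musco2018spectrum}) yields $\hat{X} \in (1 \pm \epsilon) \tr(R(\A^\top\A))$ whp. Each quadratic form is evaluated by solving, for each $j \in [L]$, a linear system in $\A^\top\A + \mu_j \I$ against $g^{(\ell)}$; since $\mu_j \geq \lambda$ we have $\A^\top\A + \lambda\I \preceq \A^\top\A + \mu_j\I \preceq (\mu_j/\lambda)(\A^\top\A + \lambda\I)$, so preconditioned AGD (\Cref{lem:precon}) using $f_\lambda$ as the inner solver solves this to accuracy $\poly(\epsilon/d)$ in $\tilde O(\sqrt{\mu_j/\lambda}) = \tilde O(\sqrt{\poly(d/\epsilon)}) = \tilde O(1/\epsilon^c)$ calls to $f_\lambda$ for some small constant $c$. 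The total count of $f_\lambda$ calls is therefore $N \cdot L \cdot \tilde O(1/\epsilon^c) = \tilde O(1/(\epsilon^5 p))$ after absorbing $\log$ factors, matching the claim; the additional $\tilde O(\runtime_\A)$ time accounts for the $O(\nnz(\A))$ cost per matrix-vector product with $\A^\top\A$ needed inside AGD and for sampling the $g^{(\ell)}$.

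The main obstacle, and essentially the only place careful bookkeeping is required, is propagating the three sources of error---rational approximation error $\epsilon$, Hutchinson variance error $\epsilon$, and per-solve error $\poly(\epsilon/d)$---so that their combined effect is $(1\pm O(\epsilon))\|\A\|_p^p$, rather than being amplified by $\|\A\|^{2\cdot O(L)}$-type factors. The standard trick (cf.\ Lemma~10 of \cite{musco2018spectrum}) is to bound $\|R(\A^\top\A)\|\leq O(\|\A\|_p^p)$ and to argue that if each $(\A^\top\A + \mu_j\I)^{-1} g^{(\ell)}$ is computed to accuracy $\poly(\epsilon/d)$ in the $(\A^\top\A + \mu_j\I)$-norm, then the perturbation to $(g^{(\ell)})^\top R(\A^\top\A) g^{(\ell)}$ is at most $\epsilon\cdot \tr(R(\A^\top\A))$; this is why the hypothesis only requires a $\poly(\epsilon/d)$-solver rather than exponentially small accuracy. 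Everything else is essentially routine given \cite{musco2018spectrum}.
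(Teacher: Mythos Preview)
The paper does not prove this lemma; it is simply attributed to the proof of Corollary~12 in \cite{musco2018spectrum}, so there is no in-paper argument to compare against. Your high-level strategy is close to that framework, though Musco et al.\ actually use a spectral \emph{histogram}---bucketing the singular values at geometric scales and counting each bucket via a sharpened step $\big(\tfrac{x}{x+\mu}\big)^{q}$---rather than a single global partial-fraction approximation of $x^{p/2}$. Both routes, however, reduce to solving regularized systems $\A^\top\A+\mu\I$ at many levels $\mu\in[\lambda,\sigma_1^2]$, so the distinction is secondary.

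The genuine gap is your accounting for those shifted solves. You write $\tilde O(\sqrt{\mu_j/\lambda})=\tilde O(\sqrt{\poly(d/\epsilon)})=\tilde O(1/\epsilon^{c})$, but in this paper $\tilde O(\cdot)$ hides only \emph{polylogarithmic} factors in $n,d$, whereas your own setup has shifts ranging up to $\mu_j\approx\sigma_1^2$ with $\sigma_1^2/\lambda\le (d/\epsilon)^{2/p}$. Preconditioning $\A^\top\A+\mu_j\I$ by the black-box $f_\lambda$ therefore costs $\Theta((d/\epsilon)^{1/p})$ inner calls at the worst shift, which is polynomial in $d$ and does \emph{not} give $\tilde O(1/(\epsilon^{5}p))$ total calls. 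The Musco et al.\ argument sidesteps this by taking a ridge-regression solver \emph{at each level} $\mu$ as its primitive; in the present paper that is harmless because the recursive preconditioning chain constructed for level $\lambda$ can be entered at the intermediate layer whose $\nu_t$ is closest to $\mu$, yielding a solver for $\A^\top\A+\mu\I$ at no greater cost than for $\A^\top\A+\lambda\I$. If you insist on treating $f_\lambda$ as an opaque black box, you would need either to restrict all poles to $\mu_j=\Theta(\lambda)$ (and then argue a one-scale rational calculus still captures $x^{p/2}$ on an interval of dynamic range $(d/\epsilon)^{2/p}$, which it does not) or to supply a different reduction.
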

The above lemma immediately implies a simple reduction from Schatten $p$-norm approximation to solving linear systems over $(0,p)$-well-conditioned matrices (see Definition \ref{d:kappabar}). 
\begin{corollary}\label{c:schatten-reduction}
    Let $p\in(0,2)$ and suppose that $f$ is an algorithm that given a $(0,p)$-well-conditioned $\A\in\R^{n\times d}$ whp.\ returns a $\poly(1/d)$-solver $g_{\A}=f(\A)$ for $\M=\A^\top\A$. Then, for any $\epsilon=\Theta(1)$, there is an algorithm $h$ that given a matrix $\A$ whp.\ returns $X\in(1\pm\epsilon)\|\A\|_p^p$ in time 
    \[\runtime_h = \tilde O(\runtime_\A+\runtime_f+\runtime_{g_{\A}}).\]
\end{corollary}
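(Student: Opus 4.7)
The plan is to instantiate the reduction given by Lemma~\ref{l:spectrum-reduction}: a $(1\pm\epsilon)$-approximation of $\|\A\|_p^p$ reduces to $\tilde O(1/(\epsilon^5 p))$ invocations of any $\poly(\epsilon/d)$-solver for $\A^\top\A+\lambda\I$ with $\lambda \geq (\tfrac{\epsilon}{d}\|\A\|_p^p)^{2/p}$. To produce such a solver from the hypothesis on $f$, define the augmented matrix $\A_\lambda \defeq [\A^\top \mid \sqrt{\lambda}\,\I_d]^\top \in \R^{(n+d)\times d}$, which satisfies $\A_\lambda^\top\A_\lambda = \A^\top\A + \lambda\I$ and supports matrix-vector products in $O(\runtime_\A + d)$ time. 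If $\A_\lambda$ is $(0,p)$-well-conditioned then calling $f(\A_\lambda)$ yields, whp., a $\poly(1/d)$-solver $g_{\A_\lambda}$ for $\A^\top\A+\lambda\I$; for constant $\epsilon$ this is in particular a $\poly(\epsilon/d)$-solver as required by Lemma~\ref{l:spectrum-reduction}.

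The key step is therefore to verify the conditioning bound on $\A_\lambda$. Since $\sigma_i(\A_\lambda)^2 = \sigma_i(\A)^2 + \lambda$, one has $\sigma_d(\A_\lambda) \geq \sqrt{\lambda}$, so
\begin{align*}
\bar\kappa_{0,p}(\A_\lambda)^p
\;=\; \frac{1}{d}\sum_{i\in[d]} \left(\frac{\sigma_i(\A)^2 + \lambda}{\sigma_d(\A_\lambda)^2}\right)^{\!p/2}
\;\leq\; \frac{1}{d}\sum_{i\in[d]} \left(1+\frac{\sigma_i(\A)^2}{\lambda}\right)^{\!p/2}.
\end{align*}
Because $p \in (0,2)$, the exponent $p/2 \in (0,1]$ gives the subadditivity bound $(1+x)^{p/2} \leq 1 + x^{p/2}$ for $x \geq 0$, hence $\bar\kappa_{0,p}(\A_\lambda)^p \leq 1 + \|\A\|_p^p/(d\,\lambda^{p/2})$. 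Consequently, as soon as $\lambda \geq (\|\A\|_p^p/d)^{2/p}$, we get $\bar\kappa_{0,p}(\A_\lambda) \leq 2^{1/p} = O(1)$, which simultaneously satisfies the preconditions of $f$ and (since $\epsilon < 1$) of Lemma~\ref{l:spectrum-reduction}.

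Finally, to select $\lambda$ without knowing $\|\A\|_p^p$, use the power mean inequality $(\|\A\|_p^p/d)^{1/p} \leq \sigma_1(\A)$: any $\lambda \geq \sigma_1(\A)^2$ suffices. Concretely, compute $\tilde\sigma_1 \in [\sigma_1(\A), 2\sigma_1(\A)]$ via the power method of Lemma~\ref{l:power} in $\tilde O(\runtime_\A)$ time, and set $\lambda \defeq \tilde\sigma_1^2$. The algorithm $h$ then (i) computes $\tilde\sigma_1$ and fixes $\lambda$, (ii) calls $f(\A_\lambda)$ to produce $g_{\A_\lambda}$, and (iii) runs the procedure of Lemma~\ref{l:spectrum-reduction} using $g_{\A_\lambda}$ as $f_\lambda$. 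The total cost is $\tilde O(\runtime_\A) + \runtime_f + \tilde O(1/(\epsilon^5 p)) \cdot \runtime_{g_{\A_\lambda}} + \tilde O(\runtime_{\A_\lambda})$, which collapses to $\tilde O(\runtime_\A + \runtime_f + \runtime_{g_\A})$ since $\epsilon,p = \Theta(1)$, applying $\A_\lambda$ is only $O(d)$ more expensive than applying $\A$, and the cost of the solver returned by $f$ on $\A_\lambda$ is captured by $\runtime_{g_\A}$ (the matrices share the same column dimension).

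There is no real obstacle beyond the conditioning computation sketched above; everything else is a direct composition of Lemma~\ref{l:spectrum-reduction}, the hypothesis on $f$, and a single power-method call to size $\lambda$.
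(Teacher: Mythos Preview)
Your argument is correct and matches the paper's implied approach: the paper gives no detailed proof of this corollary, only the calculation $\bar\kappa_{0,p}(\A_\lambda)\le \|\A\|_p/(\sqrt\lambda\,d^{1/p})=O(1)$ in the overview, which your subadditivity bound reproduces in a slightly more careful form. The only deviation is your choice $\lambda=\tilde\sigma_1^2$ via the power method; this is valid under the literal statement of Lemma~\ref{l:spectrum-reduction} (which imposes only a lower bound on $\lambda$), though in the subsequent theorem the paper instead takes $\lambda\approx_2(\tfrac{\epsilon}{d}\|\A\|_p^p)^{2/p}$ recovered by grid search---either route yields the corollary.
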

The corollary also applies when $f$ receives a $(0,p)$-well-conditioned $\M\in\PD{d}$, and we seek $(1\pm\epsilon)\|\M\|_p^p$. 
So, our solvers from Theorems \ref{t:ls-optimized} and \ref{t:psd-optimized} directly imply nearly-linear time constant-factor approximation algorithms for Schatten $p$-norms, with $p>1/2$ for general matrices, and $p>1/4$ for PD matrices.

We also provide a more fine-grained analysis for Schatten $p$-norm approximation of general matrices via the reduction of Lemma \ref{l:spectrum-reduction}, showing that our runtime improvements extend to all $\epsilon\in(0,1)$ and $p\in(0,2)$.
\begin{theorem}
    There is an algorithm $f$ that given $\A\in\R^{n\times d}$, constant $0<p\leq 2$, and $\epsilon\in(0,1)$, whp.\ returns $X\in(1\pm\epsilon)\|\A\|_p^p$ in time:
        \begin{align*}
        \runtime_f 
        = \begin{cases}
        \tilde O\Big(\big(\nnz(\A) + d^{2}\big)\poly(1/\epsilon)\Big)&\text{ when }p\in(1/2,2],\\[2mm]
        \tilde O\Big(\big(\nnz(\A) + d^{2+o(1)}\big)\poly(1/\epsilon)\Big)&\text{ when }p=1/2,\\[2mm]
        \tilde O\Big(\big(\nnz(\A)+ 
        d^{2+(\omega-2)(1+\frac{p\omega}{0.5-p})^{-1}
        +o(1)}\big)\poly(1/\epsilon)\Big)&\text{ when }p\in(0,1/2).
        \end{cases}
        \end{align*}
\end{theorem}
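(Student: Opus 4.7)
The approach is to invoke \Cref{l:spectrum-reduction}, which reduces $(1\pm\epsilon)$-approximation of $\|\A\|_p^p$ to $\tilde O(\poly(1/\epsilon))$ calls to a $\poly(\epsilon/d)$-solver for $\A^\top\A+\lambda\I$ with $\lambda\geq(\epsilon\|\A\|_p^p/d)^{2/p}$. Writing $\A^\top\A+\lambda\I=\A_\lambda^\top\A_\lambda$ for $\A_\lambda=[\A^\top\mid\sqrt{\lambda}\,\I_d]^\top\in\R^{(n+d)\times d}$, the calculation immediately preceding \Cref{c:schatten-reduction} shows that $\A_\lambda$ is $(0,p)$-well-conditioned. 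Since $\nnz(\A_\lambda)=O(\nnz(\A)+d)$ and any regression solver for $\A_\lambda$ gives a linear solver for $\A^\top\A+\lambda\I$ with the required accuracy up to $\log$ factors, the final runtime is (per-solve cost) $\cdot\,\poly(1/\epsilon)$ plus preprocessing. The three cases will be distinguished by which of our solvers we invoke and how aggressively we must preprocess the top-$k$ structure.

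For $p\in(1/2,2]$, $\A_\lambda$ is $(0,p)$-well-conditioned with constant $p>1/2$, so \Cref{t:ls-optimized} applies at $k=0$ and the per-solve cost is $\tilde O((\nnz(\A)+d^2)\log(1/\epsilon'))$ for $\epsilon'=\poly(\epsilon/d)$, immediately yielding the first case. For $p=1/2$, \Cref{t:ls-optimized} formally requires strict inequality, but $\bar\kappa_{0,1/2}(\A_\lambda)=O(1)$ while $\bar\kappa_{0,\infty}(\A_\lambda)=\poly(d)$. Applying \Cref{t:primal-dual-optimized} via \Cref{r:almost-linear} (which interpolates $\bar\kappa_{0,1/2+\eta}\leq \bar\kappa_{0,1/2}\,\bar\kappa_{0,\infty}^{\eta/(1/2+\eta)}$ and sets $\eta=1/\sqrt{\log d}$) converts both the $\log^{O(1/\eta)}(d\kappa(\A))$ overhead and the $\bar\kappa_{0,\infty}^{O(\eta)}$ blow-up into a single $d^{o(1)}$ factor, yielding the stated $\tilde O((\nnz(\A)+d^{2+o(1)})\poly(1/\epsilon))$ bound.

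The third case $p\in(0,1/2)$ is the main obstacle: here $\bar\kappa_{0,1/2+\eta}(\A_\lambda)$ is polynomial in $d$, so I invoke \Cref{t:primal-dual-optimized} with $k>0$, paying $O(k^\omega)$ preprocessing to shrink the effective conditioning. For $q=1/2+\eta>p$, H\"older gives $\sum_{i>k}\sigma_i(\A)^q\leq\sigma_{k+1}(\A)^{q-p}\|\A\|_p^p$, and combining with the pigeonhole bound $\sigma_{k+1}(\A)^p\leq\|\A\|_p^p/k$, the lower bound $\sigma_d(\A_\lambda)\geq\sqrt{\lambda}$, and the Schatten choice of $\lambda$ produces a polynomial upper bound on $\bar\kappa_{k,1/2+\eta}(\A_\lambda)$ of the form $d^{a(\eta)}k^{-b(\eta)}$ whose exponents have well-defined limits as $\eta\to 0$. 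Taking $\eta$ to $0$ slowly (as in \Cref{r:almost-linear}) absorbs the $\log^{O(1/\eta)}(d\kappa(\A))$ overhead and the gap between $a(\eta),b(\eta)$ and their limits into a $d^{o(1)}$ factor, after which balancing the per-solve cost $\tilde O(d^2\bar\kappa_{k,1/2+\eta}(\A_\lambda))$ against the preprocessing $k^\omega$ picks out an optimal $k=d^{\Theta(1)}$. Substituting this $k$ back yields a pure power of $d$ that simplifies to the stated exponent $2+(\omega-2)(1+p\omega/(1/2-p))^{-1}$. The technical care goes into tracking the dependence on $\eta$ through the Schatten bound and the interpolation in \Cref{r:almost-linear} so that all sub-polynomial factors collapse into $d^{o(1)}$, and into checking that the resulting one-variable optimization in $k$ produces exactly the claimed formula.
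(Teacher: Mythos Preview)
Your proposal is correct and follows essentially the same route as the paper: reduce via \Cref{l:spectrum-reduction} to solving $\A_\lambda^\top\A_\lambda$, bound the averaged condition number of $\A_\lambda$ near exponent $1/2$ using the H\"older/pigeonhole chain $\sum_{i>k}\sigma_i^{q}\le\sigma_k^{q-p}\|\A\|_p^p$ together with $\sigma_k^p\le\|\A\|_p^p/k$, and for $p<1/2$ balance the resulting per-solve cost against $k^\omega$ to pick $k=d^{(2p+1)/(2p\omega+1-2p)}$. The only cosmetic differences are that the paper first sketches $\A$ down to $\tilde O(d/\epsilon^2)$ rows and bounds $\bar\kappa_{k,1/2}(\A_\lambda)$ directly (pushing the sub-polynomial slack into the solver via \Cref{r:almost-linear}), whereas you bound $\bar\kappa_{k,1/2+\eta}$ and let $\eta\to0$; both computations collapse to the same exponent.
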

\begin{remark}
    This is the first nearly-linear time dense matrix Schatten $p$-norm approximation algorithm for any $p\in(1/2,2)$, including the nuclear norm ($p=1$), as all previously known running times are $\Omega(d^{2+\theta_p})$ for some $\theta_p>0$. For $p\in(0,1/2]$, the previously best known runtime was $\tilde O((\nnz(\A)+d^{2+(\omega-2)(1+p\omega)^{-1}})\poly(1/\epsilon))$, due to \cite{derezinski2024faster}. Thus, our algorithms also achieve best known time complexity in this regime. E.g., with $p=1/4$, we get $\tilde O((\nnz(\A)+d^{2.1103})\poly(1/\epsilon))$, compared to the previous best $\tilde O((\nnz(\A) + d^{2.2336})\poly(1/\epsilon))$.
\end{remark}
\begin{proof}
    First, let us observe that for any fixed $p\in(0,2)$, we can use a subspace embedding to reduce the dimension $n$ down to $\tilde O(d/\epsilon^2)$. If we let $\S\in\R^{s\times n}$ be a sparse embedding matrix with sketch size $s = \tilde O(d/\epsilon^2)$, then whp.\ $\tilde\A=\S\A$ satisfies $\sigma_i(\tilde\A)^2\approx_{1+\epsilon}\sigma_i(\A)^2$ for each $i\in[d]$. Taking both sides to power $p/2\leq 1$, this implies that $\sigma_i(\tilde\A)^p\approx_{1+\epsilon}\sigma_i(\A)^p$, which in turn implies that $\|\tilde\A\|_p^p\approx_{1+\epsilon}\|\A\|_p^p$. Thus, with additional $\tilde O(\nnz(\A)/\epsilon)$ runtime, without loss of generality we can assume that $n = \tilde O(d/\epsilon^2)$.

    Let us start with the case of $p> 1/2$. 
    Set $\lambda \approx_2 (\frac\epsilon d\|\A\|_p^p)^{2/p}$, and let $\A_\lambda$ be the matrix $\A$ with $\sqrt\lambda\I_d$ appended at the bottom. Then, it follows that:
    \begin{align*}
        \bar\kappa_{0,p}(\A_\lambda) \leq \frac{\|\A\|_p}{\sqrt\lambda d^{1/p}}\leq 2\epsilon^{-1/p}.
    \end{align*}
    Thus, we can solve the linear system $\A^\top\A+\lambda\I=\A_\lambda^\top\A_\lambda$ to sufficient accuracy in $\tilde O(d^{2}\poly(1/\epsilon))$ time, which yields the desired runtime via Lemma \ref{l:spectrum-reduction}.

    Next, we consider the case of $p\in(0,1/2]$. Setting $\lambda \approx_2 (\frac\epsilon d\|\A\|_p^p)^{2/p}$ and using $\sigma_i:=\sigma_i(\A)$, we have:
    \begin{align*}
        \bar\kappa_{k,1/2}(\A_\lambda)^{1/2} \leq \frac{\sum_{i>k}\sigma_i^{1/2}}{\lambda^{1/4} d} \leq 2\frac{d^{\frac1{2p}-1}}{\epsilon^{1/2p}}\frac{\sigma_k^{1/2-p}\sum_{i>k}\sigma_i^{p}}{\|\A\|_p^{1/2}}\leq 
        2\frac{d^{\frac1{2p}-1}}{\epsilon^{1/2p}}\frac{\sigma_k^{1/2-p}}{\|\A\|_p^{1/2-p}}\leq \frac 2{\epsilon^{1/2p}}\Big(\frac dk\Big)^{\frac1{2p}-1},
    \end{align*}
    where we used that $\sigma_k^p\leq \frac1k\|\A\|_p^p$.
    Thus, we can solve the linear system $\A^\top\A+\lambda\I$  in time:
    \begin{align*}
        \tilde O\bigg( d^{2+o(1)}\Big(\frac dk\Big)^{\frac1{2p}-1}\poly(1/\epsilon) + k^\omega\bigg),
    \end{align*}
    where note that here $\epsilon$ still refers to the accuracy of the norm approximation task, whereas the accuracy of the linear solver has only a $\poly(\log d)$ factor dependence on runtime. Balancing out the two terms (without optimizing the $\epsilon$-dependence), we choose $k= d^{\frac{2p+1}{2p\omega+1-2p}}$, which yields the desired runtime via Lemma~\ref{l:spectrum-reduction}.
\end{proof}

\section*{Acknowledgements} 
Thank you to anonymous reviewers for their feedback. Micha{\l} Derezi\'nski was supported in part by NSF CAREER Grant CCF-233865 and a Google ML and Systems Junior Faculty Award.
Aaron Sidford was supported in part by a Microsoft Research Faculty Fellowship, NSF CAREER Grant CCF1844855, NSF Grant CCF-1955039, and a PayPal research award.

\bibliographystyle{plain}
\bibliography{pap}

\end{document}